\documentclass[11pt]{article}

\usepackage[margin=1in]{geometry}
\usepackage{amsmath,amssymb,amsthm,mathtools}
\usepackage{booktabs}
\usepackage{graphicx}
\usepackage{enumitem}
\usepackage{microtype}
\usepackage[numbers,sort&compress]{natbib}
\usepackage[colorlinks=true,linkcolor=blue,citecolor=blue,urlcolor=blue]{hyperref}
\usepackage{pdflscape}
\usepackage{threeparttable}
\usepackage{natbib}
\usepackage{tikz}
\usetikzlibrary{positioning}

\newcommand{\bfepsilon}{\mbox{\boldmath $\epsilon$}}

\newcommand{\bftheta}{\mbox{\boldmath $\theta$}}

\newcommand{\bfd}{{\bf d}}

\newcommand{\bff}{{\bf f}}

\newcommand{\bfI}{{\bf I}}

\newcommand{\bfW}{{\bf W}}
\newcommand{\bfx}{{\bf x}}

\newcommand{\bfy}{{\bf y}}

\newcommand{\bfzero}{{\bf 0}}

\newcommand{\R}{\mathbb{R}}
\newcommand{\N}{\mathbb{N}}
\newcommand{\E}{\operatorname{E}}
\newcommand{\Var}{\operatorname{Var}}

\newcommand{\Normal}{\operatorname{N}}
\newcommand{\Exp}{\operatorname{Exp}}

\newtheorem{proposition}{proposition}
\newtheorem{remark}{remark}

\title{Resolution-Adaptive Compact-Support Priors for Bayesian Wavelet Denoising}
\author{Nilotpal Sanyal\\[2pt] The University of Texas at El Paso}
\date{}

\begin{document}

\maketitle

\begin{abstract}
We propose a resolution-adaptive Bayesian wavelet denoising method for noisy one-dimensional signals. The main contribution is a spike-and-slab prior whose continuous slab is a mixture of a compactly supported Wendland-type polynomial kernel and the semicircle density, with data-adaptive, resolution-specific mixture weights, produced by a low-dimensional empirical-Bayes trend. The Wendland component concentrates mass near zero and vanishes smoothly at the support boundary, whereas the semicircle component is more dispersed. This construction combines explicit sparsity and support control with an interpretable mechanism for adapting the shrinkage shape across resolutions. Under squared-error loss, we derive the posterior-mean estimator, establish key symmetry, boundedness, continuity, and limiting properties, define pointwise fixed-hyperparameter bias, variance, and risk, and develop an empirical-Bayes estimation procedure. The Wendland contribution has finite-sum expressions under a Laplace working likelihood, while the semicircle contribution is evaluated by stable one-dimensional integration. Simulations using the Bumps, Blocks, Doppler, and HeaviSine signals compare the proposed Gaussian- and Laplace-likelihood versions with universal thresholding, false-discovery-rate (FDR) thresholding, cross-validation (CV), Stein's unbiased risk estimate (SURE), the Bayesian adaptive multiresolution shrinker (BAMS), and a nonlocal-prior (NLP)-based method. In the primary Gaussian-error simulation study, the Gaussian-likelihood version was the strongest non-NLP method in 24 of the 36 design cells, including 11 of the 12 low signal-to-noise ratio (SNR) cells, and had a substantially more favorable computational profile than the Laplace-likelihood version. Analysis of a seismic acceleration trace from the 2008 Chino Hills earthquake illustrates attenuation of rapid fluctuations and preservation of the dominant acceleration event under the chosen diagnostics. Using the processed channel-1 trace as surrogate truth, the corresponding semi-synthetic validation showed that WS--Gaussian improved on the noisy observation at lower and moderate SNRs but not at the highest SNR.
\end{abstract}

\noindent\textbf{Keywords:} Bayesian wavelet shrinkage; low signal-to-noise ratio; spike-and-slab prior; Wendland kernel; semicircle density; nonparametric regression; empirical Bayes.

\section{Introduction}
Wavelet methods provide a localized representation of an unknown function in both the observation and scale domains.  This localization is particularly useful for signals containing discontinuities, isolated peaks, sharp transitions, and rapid oscillations.  A second important feature is sparsity, which means that, at a given resolution level, most wavelet coefficients are often close to zero, whereas a relatively small number of coefficients describe the principal features of the signal.  Wavelet shrinkage exploits this structure by reducing noisy empirical coefficients before reconstructing the signal through the inverse discrete wavelet transform \citep{DonohoJohnstone1994,DonohoJohnstone1995,Daubechies1992,Mallat2009}. We propose a Bayesian wavelet-shrinkage model whose continuous slab is a bounded mixture of a compactly supported Wendland-type density and the semicircle density.  The model permits the mixture weight and the support scale to vary with resolution, while the extent to which the data identify such variation is assessed empirically.

Classical wavelet estimators include hard and soft thresholding, false-discovery-rate (FDR) thresholding, cross-validation (CV), and Stein's unbiased risk estimation (SURE) \citep{DonohoJohnstone1994,AbramovichBenjamini1996,Nason1996,DonohoJohnstone1995}.  These procedures are useful and often fast, but their shrinkage behavior is governed by a threshold-selection rule rather than by an explicit prior distribution for the unknown coefficient.  Universal hard or soft thresholds can be conservative for moderate signals; hard thresholding also introduces a discontinuous decision at the threshold, whereas soft thresholding imposes a fixed shrinkage bias on retained coefficients.  FDR thresholding controls a multiple-testing criterion rather than directly minimizing reconstruction loss, CV can be computationally expensive and sensitive to the validation scheme, and SURE can be variable in low-SNR settings when the estimated risk is noisy.  These are not universal failures, but they motivate an explicit prior-based alternative.  A Bayesian prior can encode sparsity through a spike at zero, symmetry and bounded support through the slab, and uncertainty about the magnitude of nonzero coefficients through its continuous shape.  Early Bayesian wavelet models used mixtures of normal distributions or mixtures involving a point mass at zero and a continuous slab \citep{ChipmanKolaczykMcCulloch1997,VidakovicRuggeri2001,JohnstoneSilverman2005}. Subsequent work considered double-exponential, uniform, double-Weibull, beta, logistic, compact-support, and nonlocal priors \citep{AngeliniVidakovic2004,RemenyiVidakovic2015,SousaGarciaVidakovic2021,Sousa2022,SanyalFerreira2017,Sanyal2025,BarriosSousa2025,ReinaSousa2026}.  Low-signal-to-noise ratio (SNR) settings are particularly challenging because an observed moderate-size coefficient may represent either a genuine local feature or noise.  Bayesian shrinkage rules are attractive in this setting because they can incorporate uncertainty about both possibilities \citep{VimalajeewaDasguptaRuggeriVidakovic2023,BarriosSousa2025}.

Compactly supported slab priors are appealing because their support and shape can control the attenuation of moderate wavelet coefficients and the behavior of the shrinkage rule near the support boundary.  The contribution here is the specific Wendland--semicircle bounded mixture slab for Bayesian wavelet coefficient shrinkage, together with a resolution-indexed mixture parameter and a data-adaptive support scale.  This is a narrower novelty claim than a claim about Wendland functions, semicircle laws, compact support, or resolution-adaptive sparsity in general.  The Wendland component is a normalized compactly supported polynomial density with a high-order zero at the support boundary.  Wendland functions were originally introduced as compactly supported positive-definite functions for scattered-data approximation \citep{Wendland1995,Wendland2005}.  In the present model, this component is concentrated near zero and represents small nonzero coefficients that are difficult to distinguish from noise.  The semicircle component is also symmetric and compactly supported, but is more dispersed and places relatively more mass away from zero.  The semicircle law is classically associated with the limiting eigenvalue distribution of random matrices \citep{Wigner1955}; here it is used as a bounded slab density rather than as a spectral limit. To the best of our knowledge, the particular combination of these two bounded densities as a mixture slab for Bayesian wavelet shrinkage has not previously been studied in this form. Henceforth, we use the abbreviation WS for Wendland--semicircle.

The proposed prior combines the two slab shapes with a point mass at zero.  At each resolution level, a mixture weight determines the relative contribution of the Wendland component, while a support-scale parameter determines the common bounded truncation interval for the two slab densities. The weight is parameterized through a low-dimensional empirical Bayes trend in resolution. Thus, the model permits the relative slab contribution to vary with resolution; whether the data materially identify that variation is examined in the component-and-adaptivity comparison.  The prior retains an explicit spike at zero, allows a single specification to express both strong concentration near zero and preservation of moderate coefficients, and maintains an interpretable bounded-support structure. The posterior-mean rule is computationally tractable because the Wendland contribution benefits from its finite polynomial form, whereas the semicircle contribution reduces to a one-dimensional integral.  We consider a Gaussian observation model as the primary specification and a Laplace working likelihood as an alternative sensitivity specification.

Our empirical evaluation assesses reconstruction accuracy, normalized error, paired performance, and computational cost.  The primary Gaussian-error experiment considers the Bumps, Blocks, Doppler, and HeaviSine test signals at \(n=512\), \(1024\), and \(2048\), with amplitude SNR values \(\rho\in\{0.2,1,3\}\) and 100 replications per design cell.  An expanded \(n=1024\) experiment additionally uses \(\rho\in\{0.5,2,5\}\).  WS--Gaussian and WS--Laplace are compared with universal thresholding, FDR thresholding, CV, SURE, the Bayesian adaptive multiresolution shrinker (BAMS), and a nonlocal-prior (NLP)-based method. In the original Gaussian grid, WS--Gaussian is the best non-NLP method in 24 of the 36 cells and in 11 of the 12 low-SNR signal cells. These are cellwise statements rather than claims of universal dominance; pooled and normalized summaries, including the computational cost of the numerical methods, are reported separately.

{\color{red}We also analyze a 77-second, 200-Hz strong-motion accelerogram from the 29 July 2008 Chino Hills earthquake recorded at station 24763.  The real-data analysis compares WS fits, the two component-specific endpoint fits, and classical thresholding rules using residual variability, residual roughness, retained squared energy, preservation and timing of the dominant acceleration peak, and waveform plots.  Because the observed record has no trusted noise-free reference, these diagnostics describe the amount and character of the modification rather than denoising risk.  A separate semi-synthetic validation treats the processed channel-1 trace as a surrogate truth, adds controlled Gaussian noise at prespecified amplitude SNR values, and evaluates recovery against that surrogate. It therefore complements, but does not replace, the real-data illustration.}


The remainder of the article is organized as follows.  Section~\ref{sec:model} introduces the nonparametric regression and wavelet-domain models.  Section~\ref{sec:prior} defines the Wendland and semicircle slab densities and the mixture prior.  Section~\ref{sec:properties} establishes properties of the prior and the resulting shrinkage rule.  Section~\ref{sec:inference} develops posterior inference and the empirical-Bayes fitting procedure.  Section~\ref{sec:simulation} presents the simulation design and results, and Section~\ref{sec:real-data-seismic} gives the seismic-trace application; the semi-synthetic validation is reported in Supplementary Section~3.2. Section~\ref{sec:conclusion} summarizes the findings, limitations, and extensions.

\section{Statistical model and wavelet representation}
\label{sec:model}
Let \(\R\) denote the real line and let \(\N=\{1,2,\ldots\}\). Let \(N\) denote the number of observations in the original record and let \(\widetilde N\) denote the length of the vector supplied to the discrete wavelet transform. In the idealized dyadic setting considered here, \(N=\widetilde N=n=2^J\), where \(J\in\N\) is the number of dyadic resolution steps. Suppose that the \(N\) observations are collected at equally spaced design points \(x_i=(i-1)/(N-1)\), \(i=1,\ldots,N\). We observe
\begin{equation}
    y_i=f(x_i)+\epsilon_i,
    \qquad i=1,\ldots,N,
    \label{eq:time-model}
\end{equation}
where \(f:[0,1]\to\R\) is an unknown regression function and \(\epsilon_i\) are independent and identically distributed Gaussian errors satisfying
\begin{equation}
    \epsilon_i\sim\Normal(0,\sigma^2),
    \qquad \sigma^2>0,
    \label{eq:gaussian-errors}
\end{equation}
where \(\sigma^2\) is the observational noise variance.

Let \(\bfx=(x_1,\ldots,x_N)^T\), \(\bfy=(y_1,\ldots,y_N)^T\), \(\bff=(f(x_1),\ldots,f(x_N))^T\), and \(\bfepsilon=(\epsilon_1,\ldots,\epsilon_N)^T\). Then \eqref{eq:time-model} can be written as
\begin{equation*}
    \bfy=\bff+\bfepsilon,
    \qquad
    \bfepsilon\sim\Normal_N(\bfzero,\sigma^2\bfI_N),
    \label{eq:vector-model}
\end{equation*}
where \(\Normal_N\) denotes the \(N\)-variate normal distribution and \(\bfI_N\) is the \(N\times N\) identity matrix. 

Let \(\phi\) be a scaling function and let \(\psi\) be a mother wavelet generating an orthonormal wavelet basis. For a location index \(k\) and resolution level \(j\), define
\begin{equation*}
    \phi_{j,k}(x)=2^{j/2}\phi(2^jx-k),
    \qquad
    \psi_{j,k}(x)=2^{j/2}\psi(2^jx-k).
    \label{eq:wavelet-basis}
\end{equation*}
For a fixed primary resolution level \(J_0\), the wavelet expansion of \(f\) is written formally as
\begin{equation*}
    f(x)=\sum_k c_{J_0,k}\phi_{J_0,k}(x)
    +\sum_{j=J_0}^{\infty}\sum_k\theta_{j,k}\psi_{j,k}(x),
    \qquad
    \theta_{j,k}=\int_0^1 f(x)\psi_{j,k}(x)\,dx,
    \label{eq:wavelet-expansion}
\end{equation*}
where \(c_{J_0,k}\) are scaling coefficients and \(\theta_{j,k}\) are detail wavelet coefficients. In the finite-sample analysis below, only levels \(j=J_0,\ldots,J-1\) are represented, and the scaling coefficients are retained without shrinkage.

Let \(\mathcal P:\mathbb{R}^{N}\to\mathbb{R}^{\widetilde N}\) denote a prescribed deterministic padding or extension operator, where \(\widetilde N=2^J\). Define the vectors supplied to the DWT by
\[
    \widetilde{\bfy}=\mathcal P(\bfy),
    \qquad
    \widetilde{\bff}=\mathcal P(\bff),
    \qquad
    \widetilde{\bfepsilon}=\mathcal P(\bfepsilon).
\]
Thus, \(\widetilde{\bfy}=\widetilde{\bff}+\widetilde{\bfepsilon}\). In the ideal dyadic setting, \(\mathcal P\) is the identity and \(N=\widetilde N\). For a padded record, \(\widetilde{\bfy}\) is the computational DWT input. The padding does not represent additional independent observations. In particular, when \(\mathcal P\) is not the identity, \(\widetilde{\bfepsilon}\) need not have independent coordinates. For a non-dyadic record, the discrete wavelet transform is applied after a documented extension to a dyadic length \(\widetilde N\geq N\), and the reconstructed values are truncated to the original \(N\) observations before reporting application-level diagnostics.  The specific extension rule used for the seismic record is given in Section~\ref{sec:real-data-seismic}.

Let \(\bfW\) be an orthogonal \(\widetilde N\times\widetilde N\) matrix representing a DWT, so that \(\bfW^T\bfW=\bfI_{\widetilde N}\). For each detail level \(j=J_0,\ldots,J-1\), define the observed and true detail-coefficient vectors by \(\bfd_j=(d_{j,k}:k\in\mathcal K_j)^T\) and \(\bftheta_j=(\theta_{j,k}:k\in\mathcal K_j)^T\), respectively. Let \(\mathbf c_{J_0}^{(y)}\) and \(\mathbf c_{J_0}^{(f)}\) denote the observed and true scaling-coefficient vectors at the primary resolution level. With the DWT coefficients ordered by placing the scaling coefficients first, define
\begin{equation*}
    \bfd=\bfW\widetilde{\bfy}=
    \begin{pmatrix}
        \mathbf c_{J_0}^{(y)}\\
        \bfd_{J_0}\\
        \vdots\\
        \bfd_{J-1}
    \end{pmatrix},
    \qquad
    \bftheta=\bfW\widetilde{\bff}=
    \begin{pmatrix}
        \mathbf c_{J_0}^{(f)}\\
        \bftheta_{J_0}\\
        \vdots\\
        \bftheta_{J-1}
    \end{pmatrix},
    \qquad
    \bfepsilon^*=\bfW\widetilde{\bfepsilon}.
    \label{eq:dwt-vectors}
\end{equation*}
Thus, \(\bfd\), \(\bftheta\), and \(\bfepsilon^*\) are \(\widetilde N\)-dimensional vectors. In the ideal dyadic setting,
\[
    \bfepsilon^*\sim\Normal_{\widetilde N}
    (\bfzero,\sigma^2\bfI_{\widetilde N}).
\]
For a nontrivial padding operator, no independent multivariate-normal assertion is made for \(\bfepsilon^*\). The coefficientwise model below is used as the working likelihood for the padded analysis.

The prior in \eqref{eq:proposed-prior}, below, is assigned only to the scalar detail coefficients \(\theta_{j,k}\); the scaling coefficients \(\mathbf c_{J_0}^{(f)}\) are retained without shrinkage. In scalar expressions below, \(\theta\) denotes a generic value of a detail coefficient, whereas \(\theta_{j,k}\) denotes the particular detail coefficient at level \(j\) and location \(k\). For a detail coefficient at level \(j\) and location \(k\), the coefficientwise wavelet-domain model is
\begin{equation}
    d_{j,k}=\theta_{j,k}+\epsilon^*_{j,k},
    \qquad
    \epsilon^*_{j,k}\sim\Normal(0,\sigma^2),
    \label{eq:coefficient-model}
\end{equation}
where \(j=J_0,\ldots,J-1\) and \(k\in\mathcal K_j=\{0,1,\ldots,2^j-1\}\). Here \(d_{j,k}\) is the observed empirical detail coefficient.

\section{Adaptive Wendland--semicircle prior}
\label{sec:prior}

\subsection{Two normalized slab densities}
For a set \(A\subseteq\R\), let \(I(A)\) denote its indicator function. Let \(u\) be a dimensionless coefficient and let \(\beta>0\) be a scale parameter. We define the Wendland-type kernel
\begin{equation}
    K_W(u)=\frac{3}{2}(1-|u|)^4(1+4|u|)I(|u|<1).
    \label{eq:wendland-kernel}
\end{equation}
The polynomial in \eqref{eq:wendland-kernel} can equivalently be written, for \(|u|<1\), as
\begin{equation}
    (1-|u|)^4(1+4|u|)
    =1-10|u|^2+20|u|^3-15|u|^4+4|u|^5.
    \label{eq:wendland-polynomial}
\end{equation}
The normalization follows from
\begin{equation}
    \int_{-1}^{1}K_W(u)\,du
    =3\int_0^1(1-u)^4(1+4u)\,du=1.
    \label{eq:wendland-normalization}
\end{equation}

The second slab is the semicircle kernel
\begin{equation}
    K_S(u)=\frac{2}{\pi}\sqrt{1-u^2}\,I(|u|<1).
    \label{eq:semicircle-kernel}
\end{equation}
Since \(\int_{-1}^{1}\sqrt{1-u^2}\,du=\pi/2\), \(K_S\) is also a density. The scale versions of these kernels are
\begin{align*}
    g_W(\theta;\beta)&=\frac{1}{\beta}K_W\left(\frac{\theta}{\beta}\right)
    =\frac{3}{2\beta}\left(1-\frac{|\theta|}{\beta}\right)^4
    \left(1+4\frac{|\theta|}{\beta}\right)I(|\theta|<\beta),
    \\
    g_S(\theta;\beta)&=\frac{1}{\beta}K_S\left(\frac{\theta}{\beta}\right)
    =\frac{2}{\pi\beta^2}\sqrt{\beta^2-\theta^2}\,I(|\theta|<\beta).
\end{align*}
Both densities are symmetric about zero and have support \([-\beta,\beta]\). The Wendland density is more concentrated near zero and has a high-order zero at the support boundary. The semicircle density is less concentrated at the center and allocates relatively more prior mass to moderate coefficients. The differing concentration and dispersion of the two slab densities are illustrated in Figure~\ref{fig:slab-densities}, which also displays their equal-weight continuous mixture on the standardized support.

\begin{figure}[t]
    \centering
    \includegraphics[width=0.8\textwidth]{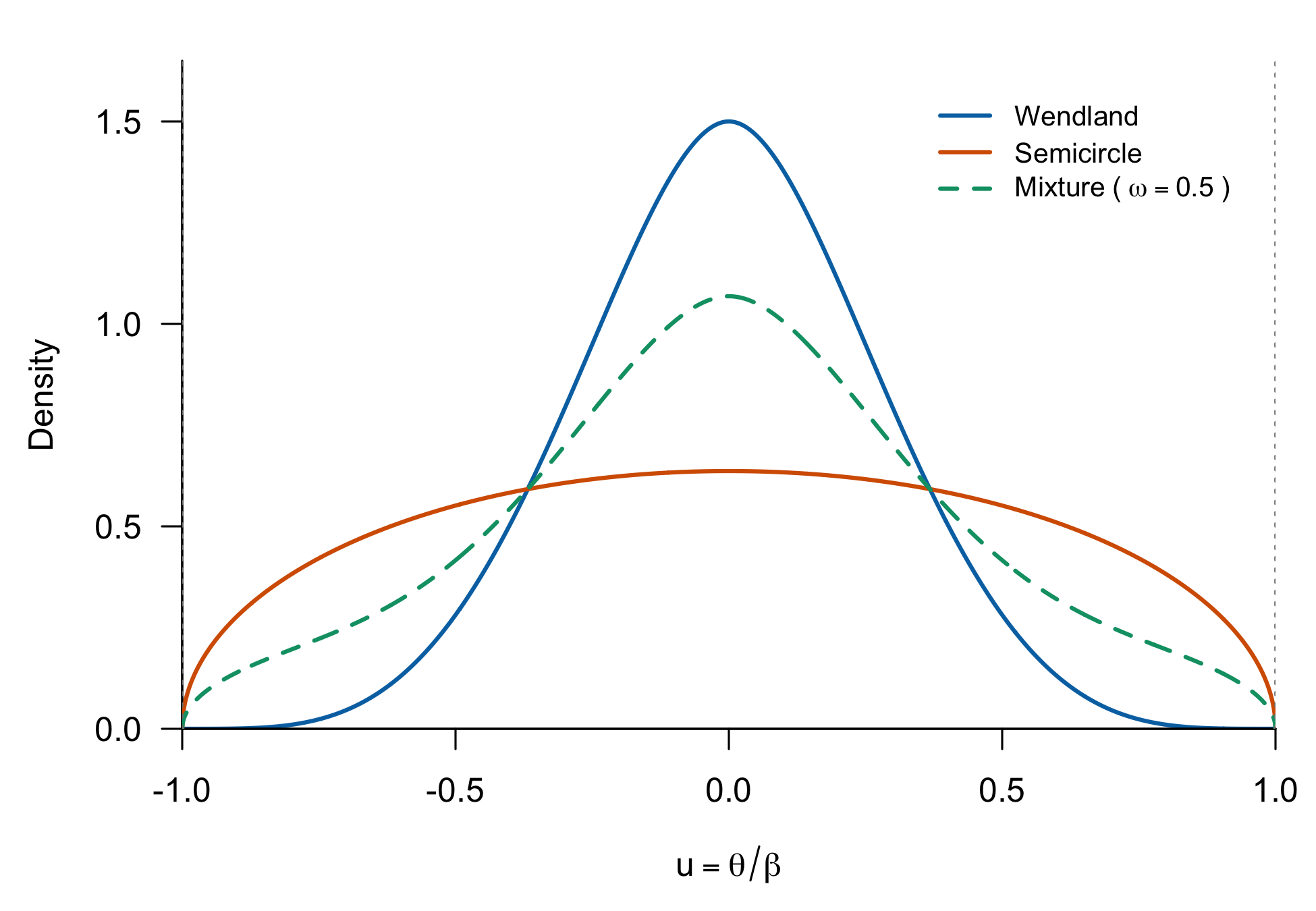}
    \caption{Normalized Wendland-type and semicircle slab densities on the common standardized support $u=\theta/\beta$, together with their continuous equal-weight mixture $0.5K_W(u)+0.5K_S(u)$. The mixture curve is shown for illustration only. In the proposed model, the mixture weight $\omega_j$ is estimated separately at each resolution level. The point-mass spike at zero is not shown.}
    \label{fig:slab-densities}
\end{figure}

\begin{proposition}[Normalization and dispersion]
\label{prop:kernel-properties}
For every \(\beta>0\), \(g_W(\cdot;\beta)\) and \(g_S(\cdot;\beta)\) are probability densities on \(\R\), both have mean zero, and
\begin{equation}
    \Var_W(\theta)=\frac{\beta^2}{14},
    \qquad
    \Var_S(\theta)=\frac{\beta^2}{4}.
    \label{eq:kernel-variances}
\end{equation}
\end{proposition}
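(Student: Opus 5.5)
The plan is to reduce everything to the standardized kernels via the change of variables \(u=\theta/\beta\). For any kernel \(K\) with \(K\ge 0\), \(\int K=1\), \(\int uK(u)\,du=0\) and \(\int u^2K(u)\,du=c\), the scaled density \(g(\theta;\beta)=\beta^{-1}K(\theta/\beta)\) is again nonnegative and satisfies \(\int g=1\), \(\int\theta g=0\), and \(\int\theta^2 g=\beta^2 c\). So it suffices to show that \(K_W\) and \(K_S\) are densities, that each has mean zero, and that
\[
\int u^2K_W(u)\,du=\frac{1}{14},\qquad \int u^2K_S(u)\,du=\frac14 .
\]

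Nonnegativity is immediate. On \(|u|<1\), each factor \((1-|u|)^4\), \(1+4|u|\) and \(\sqrt{1-u^2}\) is nonnegative, and both kernels vanish outside \([-1,1]\). Normalization of \(K_W\) is \eqref{eq:wendland-normalization}. The paper asserts this without computation, so I will verify it from \eqref{eq:wendland-polynomial}:
\[
1-\tfrac{10}{3}+5-3+\tfrac{4}{6}=\tfrac13,
\]
and multiplying by \(3\) gives \(1\). Normalization of \(K_S\) follows from \(\int_{-1}^1\sqrt{1-u^2}\,du=\pi/2\), which is the area of a half unit disk. Both kernels depend only on \(|u|\), so they are even, and they have bounded support. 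Hence the first moment exists and equals zero by symmetry.

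For the Wendland variance, integrate \(3u^2\) times the polynomial in \eqref{eq:wendland-polynomial} over \([0,1]\), term by term:
\[
3\Bigl(\tfrac13-\tfrac{10}{5}+\tfrac{20}{6}-\tfrac{15}{7}+\tfrac{4}{8}\Bigr).
\]
The bracket equals \(\tfrac13-2+\tfrac{10}{3}-\tfrac{15}{7}+\tfrac12\). Over the common denominator \(42\) this is \((14-84+140-90+21)/42=1/42\), so the second moment is \(3/42=1/14\). As a cross-check, the beta integral \(\int_0^1u^a(1-u)^b\,du=B(a+1,b+1)\) gives \(3\bigl[B(3,5)+4B(4,5)\bigr]=3\bigl[\tfrac{1}{105}+\tfrac{4}{280}\bigr]=\tfrac1{14}\).

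For the semicircle variance, substitute \(u=\sin t\):
\[
\frac{2}{\pi}\int_{-\pi/2}^{\pi/2}\sin^2t\cos^2t\,dt=\frac{2}{\pi}\cdot\frac{1}{4}\int_{-\pi/2}^{\pi/2}\sin^2(2t)\,dt=\frac{2}{\pi}\cdot\frac{\pi}{8}=\frac14 .
\]

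None of these steps is hard. The main risk is an arithmetic slip in the Wendland fractions, which is why I will check that moment by two independent routes.
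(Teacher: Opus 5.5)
Your proposal is correct and follows essentially the same route as the paper: scale to the standardized kernels, obtain normalization from the Wendland polynomial integral and the semicircle area, get mean zero by symmetry, and compute the second moments \(3\int_0^1 u^2(1-u)^4(1+4u)\,du=1/14\) and \(\frac{2}{\pi}\int_{-1}^{1}u^2\sqrt{1-u^2}\,du=1/4\). The explicit arithmetic the paper omits (the term-by-term sum over denominator \(42\), the beta-function cross-check, and the substitution \(u=\sin t\)) is all correct.
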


\begin{proof}
Normalization and symmetry follow from \eqref{eq:wendland-normalization} and the semicircle area formula. For the Wendland density, using symmetry and \eqref{eq:wendland-polynomial},
\begin{align*}
    \E_W(\theta^2)
    &=3\beta^2\int_0^1 u^2(1-u)^4(1+4u)\,du
      =\frac{\beta^2}{14}.
\end{align*}
For the semicircle density,
\begin{align*}
    \E_S(\theta^2)
    &=\frac{2\beta^2}{\pi}\int_{-1}^{1}u^2\sqrt{1-u^2}\,du
      =\frac{\beta^2}{4}.
\end{align*}
The means are zero by symmetry, so the second moments equal the variances.
\end{proof}

\subsection{Spike-and-mixture slab prior}
Let \(\delta_0\) denote the Dirac probability measure concentrated at zero. For each resolution level \(j\), let \(\pi_j\in(0,1)\) be the prior probability that a coefficient is exactly zero, let \(\omega_j\in[0,1]\) be the conditional probability of the Wendland component given that the coefficient is nonzero, and let \(\beta_j>0\) be the level-specific support scale. Define the continuous mixture slab
\begin{equation}
    g_j(\theta;\omega_j,\beta_j)
    =\omega_j g_W(\theta;\beta_j)
    +(1-\omega_j)g_S(\theta;\beta_j).
    \label{eq:mixture-slab}
\end{equation}
The proposed prior for a detail coefficient is
\begin{equation}
    \theta_{j,k}\mid\pi_j,\omega_j,\beta_j
    \sim
    \pi_j\delta_0+(1-\pi_j)g_j(\theta;\omega_j,\beta_j).
    \label{eq:proposed-prior}
\end{equation}

The main theoretical development considers the interior mixture case \(0<\omega_j<1\), in which both slab components receive positive prior weight. The endpoint values define reduced one-component models:
\[
    g_j(\theta;1,\beta_j)=g_W(\theta;\beta_j),
    \qquad
    g_j(\theta;0,\beta_j)=g_S(\theta;\beta_j).
\]
Thus, \(\omega_j=1\) is the Wendland-only specification and \(\omega_j=0\) is the Semicircle-only specification.  These endpoint specifications are fitted separately in the component-and-adaptivity comparison.  They are also used in the numerical endpoint checks, where the absent component is omitted rather than evaluated with a zero numerical weight.

Equivalently, introduce a latent component indicator \(Z_{j,k}\in\{0,W,S\}\), where \(Z_{j,k}=0\) denotes the spike, \(Z_{j,k}=W\) denotes the Wendland slab, and \(Z_{j,k}=S\) denotes the semicircle slab. Its prior probabilities are
\begin{equation*}
    \Pr(Z_{j,k}=0)=\pi_j,
    \quad
    \Pr(Z_{j,k}=W)=(1-\pi_j)\omega_j,
    \quad
    \Pr(Z_{j,k}=S)=(1-\pi_j)(1-\omega_j).
    \label{eq:component-probabilities}
\end{equation*}
Conditional on \(Z_{j,k}=0\), \(\theta_{j,k}=0\). Conditional on \(Z_{j,k}=W\) or \(Z_{j,k}=S\), the coefficient has density \(g_W(\cdot;\beta_j)\) or \(g_S(\cdot;\beta_j)\), respectively.

The prior variance of a nonzero coefficient at level \(j\) is
\begin{equation}
    \Var(\theta_{j,k}\mid Z_{j,k}\ne0)
    =\beta_j^2\left\{\frac{\omega_j}{14}+\frac{1-\omega_j}{4}\right\},
    \label{eq:mixture-variance}
\end{equation}
which also applies to the two endpoint models. The prior variance of a nonzero coefficient at level \(j\) is given by \eqref{eq:mixture-variance}, and the unconditional prior variance is \((1-\pi_j)\) times that quantity.  Thus, \(\pi_j\) controls sparsity, \(\beta_j\) controls the coefficient scale, and \(\omega_j\) controls the shape of the nonzero slab.  The endpoint values \(\omega_j=0\) and \(\omega_j=1\) are handled by the corresponding one-component posterior formulas.

\subsection{Resolution-dependent hyperparameters}
Following the principle that fine-scale coefficients are more likely to be noise, we use the level-dependent spike probability
\begin{equation}
    \pi_j=1-\frac{1}{(j-J_0+\ell)^\gamma},
    \qquad j=J_0,\ldots,J-1,
    \label{eq:spike-probability}
\end{equation}
where \(\ell>1\) is the spike-offset parameter and \(\gamma>0\) is the resolution sparsity exponent.  Larger values of \(\gamma\) make the spike probability increase more rapidly with resolution level, whereas \(\ell\) controls its starting value.  These quantities are fixed in the primary analysis and are varied one at a time in the supplementary sensitivity study.

To adapt the slab shape across resolution levels, define
\begin{equation*}
    t_j=\frac{j-J_0}{J-1-J_0},
    \qquad
    \boldsymbol{\eta}=(\eta_0,\eta_1)^T\in\R^2,
    \label{eq:level-coordinate}
\end{equation*}
for \(J>J_0+1\), and set
\begin{equation}
    \omega_j(\boldsymbol{\eta})
    =\operatorname{logit}^{-1}(\eta_0+\eta_1t_j)
    =\frac{\exp(\eta_0+\eta_1t_j)}{1+\exp(\eta_0+\eta_1t_j)}.
    \label{eq:omega-logistic}
\end{equation}
Here \(\eta_0\) controls the Wendland weight at the coarsest detail level and \(\eta_1\) controls its change across resolution levels.  A positive \(\eta_1\) describes an increasing fitted Wendland weight under this parameterization; it does not imply that such an increase is present in every dataset.  If no trend is desired, setting \(\eta_1=0\) yields a common slab mixture weight across levels.  The fitted weights and their empirical identification are assessed in the component-and-adaptivity analysis.

The scale \(\beta_j\) defines the common prior truncation interval, but its fitted value is treated as a data-adaptive truncation-scale choice rather than as a formal estimator of an unknown true support. It may be specified from prior knowledge or selected from the empirical coefficients. We recommend the robust upper-quantile rule
\begin{equation}
    \widehat\beta_j^{(0)}
    =\max\left\{\widehat\sigma,
    Q_{\varkappa}\left(\{|d_{j,k}|:k\in\mathcal K_j\}\right)\right\},
    \qquad \varkappa\in(0,1),
    \label{eq:beta-quantile}
\end{equation}
where \(Q_{\varkappa}\) denotes the empirical \(\varkappa\)-quantile and \(\widehat\sigma\) is the robust noise-scale estimate defined in Section~\ref{sec:inference}. To ensure compatibility with the assumption \(\beta_j>0\), define
\[
    s_\beta=\max\left\{1,\widehat\sigma,
    \max_{\ell,k}|d_{\ell,k}|\right\},
    \qquad
    \beta_{\min}=\varepsilon_\beta s_\beta,
\]
where the maximum is taken over all detail levels and locations, and \(\varepsilon_\beta>0\) is a small numerical constant. We use \(\varepsilon_\beta=10^{-8}\). The fitted support scale is therefore
\begin{equation}
    \widehat\beta_j
    =\max\left\{\widehat\beta_j^{(0)},\beta_{\min}\right\}.
    \label{eq:positive-beta-floor}
\end{equation}
In the implementation, \(Q_{\varkappa}\) is the empirical quantile calculated by the exact quantile convention used in the R analysis (the default stats::quantile type 7 if unchanged).  The value \(\varkappa=0.99\) is therefore a data-adaptive truncation rule, not a formal estimator or guarantee of the true support.  At coarse resolutions, relatively few coefficients determine this quantile, and a future coefficient may exceed the fitted interval. This floor preserves \(\widehat\beta_j\geq\widehat\sigma\) while preventing a degenerate zero support when \(\widehat\sigma=0\) and all relevant coefficients vanish. The alternative \(\widehat\beta_j=\max_{k\in\mathcal K_j}|d_{j,k}|\) can be used when a conservative truncation scale is preferred, but it is more sensitive to isolated noise outliers and must likewise be subject to the positive floor in \eqref{eq:positive-beta-floor}. The default hyperparameter choices and their interpretations are summarized in Table~\ref{tab:hyperparameters}.  These settings are used throughout the primary simulation and real-data analyses unless a sensitivity analysis is explicitly identified.

\begin{table}[htbp]
    \centering
    \begin{threeparttable}
        \caption{Default prior hyperparameters and likelihood-specific tuning quantities.}
        \label{tab:hyperparameters}

        \begin{tabular}{lll}
            \toprule
            Quantity & Default & Interpretation \\
            \midrule
            \(J_0\)
                & \(0\)
                & Primary detail resolution level \\
            \(\ell\)
                & \(2\)
                & Spike-offset parameter controlling the initial spike probability \\
            \(\gamma\)
                & \(2.4\)
                & Controls the increase of sparsity with resolution \\
            \(\varkappa\)
                & \(0.99\)
                & Quantile used for \(\widehat{\beta}_j\) \\
            \(\widehat{\sigma}\)
                & Equation~\eqref{eq:mad-estimator}
                & Robust noise-scale estimate used in both likelihood versions \\
            \(\lambda\)
                & \(1/(2\widehat{\sigma}^{2})\)
                & Rate for the Laplace working likelihood \\
            \bottomrule
        \end{tabular}
    \end{threeparttable}
\end{table}

\section{Bayesian shrinkage rule and its properties}
\label{sec:properties}
Unless stated otherwise, all theoretical results in this section condition on fixed values of the prior and likelihood hyperparameters. Thus, the propositions describe the fixed-hyperparameter, or oracle, rule \(\delta_j(d;\pi_j,\omega_j,\beta_j,\lambda)\). The effect of estimating these quantities from the observed wavelet coefficients is treated separately in Section~\ref{sec:inference}.

\subsection{Gaussian likelihood formulation}
Under \eqref{eq:coefficient-model}, the coefficientwise likelihood is
\begin{equation}
    L_N(d\mid\theta,\sigma^2)
    =\frac{1}{\sqrt{2\pi\sigma^2}}
      \exp\left\{-\frac{(d-\theta)^2}{2\sigma^2}\right\}.
    \label{eq:normal-likelihood}
\end{equation}
For any nonnegative likelihood \(L(d\mid\theta)\), define the mixture-slab moments
\begin{equation*}
    M_{r,j}(d)
    =\int_{-\beta_j}^{\beta_j}
       \theta^r g_j(\theta;\omega_j,\beta_j)L(d\mid\theta)\,d\theta,
    \qquad r\in\{0,1\}.
    \label{eq:mixture-moments}
\end{equation*}
The corresponding component moments are
\begin{align*}
    M_{r,W,j}(d)
    &=\int_{-\beta_j}^{\beta_j}
      \theta^r g_W(\theta;\beta_j)L(d\mid\theta)\,d\theta,
    \\
    M_{r,S,j}(d)
    &=\int_{-\beta_j}^{\beta_j}
      \theta^r g_S(\theta;\beta_j)L(d\mid\theta)\,d\theta.
\end{align*}
By \eqref{eq:mixture-slab},
\begin{equation*}
    M_{r,j}(d)=\omega_jM_{r,W,j}(d)+(1-\omega_j)M_{r,S,j}(d).
    \label{eq:moment-decomposition}
\end{equation*}

\begin{proposition}[Posterior-mean shrinkage rule]
\label{prop:posterior-mean}
Under squared-error loss \(L_2(a,\theta)=(a-\theta)^2\), the Bayes estimator of \(\theta_{j,k}\) based on \(d_{j,k}=d\) is
\begin{equation}
    \delta_j(d)
    =\frac{(1-\pi_j)M_{1,j}(d)}
    {\pi_jL(d\mid0)+(1-\pi_j)M_{0,j}(d)}.
    \label{eq:general-shrinkage-rule}
\end{equation}
The posterior probabilities of the three latent components are
\begin{align}
    p_{0,j}(d)&=\frac{\pi_jL(d\mid0)}{D_j(d)},
    \label{eq:posterior-spike-probability}\\
    p_{W,j}(d)&=\frac{(1-\pi_j)\omega_jM_{0,W,j}(d)}{D_j(d)},
    \label{eq:posterior-wendland-probability}\\
    p_{S,j}(d)&=\frac{(1-\pi_j)(1-\omega_j)M_{0,S,j}(d)}{D_j(d)},
    \label{eq:posterior-semicircle-probability}
\end{align}
where
\begin{equation}
    D_j(d)=\pi_jL(d\mid0)+(1-\pi_j)M_{0,j}(d).
    \label{eq:posterior-denominator}
\end{equation}
Consequently,
\begin{equation}
    \delta_j(d)=p_{W,j}(d)\mu_{W,j}(d)+p_{S,j}(d)\mu_{S,j}(d),
    \label{eq:posterior-component-decomposition}
\end{equation}
where \(\mu_{W,j}(d)=M_{1,W,j}(d)/M_{0,W,j}(d)\) and \(\mu_{S,j}(d)=M_{1,S,j}(d)/M_{0,S,j}(d)\) are the posterior means conditional on the corresponding continuous component.
\end{proposition}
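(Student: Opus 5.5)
The plan is to compute the posterior distribution of \(\theta_{j,k}\) given \(d_{j,k}=d\) explicitly, using the latent-indicator representation \(Z_{j,k}\in\{0,W,S\}\). Then we invoke the standard fact that, under squared-error loss \(L_2(a,\theta)=(a-\theta)^2\), the Bayes action is the posterior mean whenever it exists. The prior \eqref{eq:proposed-prior} is a measure on \(\R\) consisting of an atom of mass \(\pi_j\) at zero plus an absolutely continuous part with density \((1-\pi_j)g_j(\theta;\omega_j,\beta_j)\) supported on \([-\beta_j,\beta_j]\).

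Bayes' theorem for this mixed dominating measure (Lebesgue plus counting measure at zero) gives the posterior as an atom at zero with weight proportional to \(\pi_jL(d\mid0)\), together with a continuous part with density proportional to \((1-\pi_j)g_j(\theta)L(d\mid\theta)\). The normalizing constant is exactly \(D_j(d)\) in \eqref{eq:posterior-denominator}, by the definition of \(M_{0,j}\).

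The first substantive step is to verify that \(D_j(d)\in(0,\infty)\) and that \(M_{1,j}(d)\) is finite. Since \(L(d\mid\cdot)\) is bounded (for the Gaussian case by \((2\pi\sigma^2)^{-1/2}\)) and the slab has compact support, \(|M_{1,j}|\le\beta_jM_{0,j}<\infty\). Strict positivity follows because \(\pi_j\in(0,1)\) and \(L(d\mid0)>0\) for the Gaussian and Laplace likelihoods. I would state this for a general nonnegative \(L\) under the mild assumption \(D_j(d)>0\). Finiteness of the posterior second moment (the posterior lives on \([-\beta_j,\beta_j]\)) guarantees that the posterior risk \(\E[(a-\theta)^2\mid d]\) is a finite quadratic in \(a\). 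It is minimized at \(a=\E(\theta\mid d)\). Integrating \(\theta\) against the posterior kills the atom at zero and yields \((1-\pi_j)M_{1,j}(d)/D_j(d)\), which is \eqref{eq:general-shrinkage-rule}. For the Bayes estimator as a rule, minimizing posterior risk pointwise in \(d\) minimizes integrated risk, since the Bayes risk is finite given the bounded prior support.

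Next, for the component probabilities, compute the joint density of \((Z_{j,k}=z,d)\) as \(\Pr(Z=z)\) times the marginal likelihood under that component. These marginals are \(L(d\mid0)\), \(M_{0,W,j}(d)\) and \(M_{0,S,j}(d)\). Dividing by their sum, which equals \(D_j(d)\) by the moment decomposition \(M_{0,j}=\omega_jM_{0,W,j}+(1-\omega_j)M_{0,S,j}\), gives \eqref{eq:posterior-spike-probability}--\eqref{eq:posterior-semicircle-probability}.

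Finally, apply the tower property \(\E(\theta\mid d)=\sum_z\Pr(Z=z\mid d)\E(\theta\mid d,Z=z)\). The spike contributes zero, and conditional on \(Z=W\) the posterior density is \(g_W L/M_{0,W,j}\), with mean \(\mu_{W,j}\); similarly for \(S\). Alternatively, one can verify algebraically that \(p_{W,j}\mu_{W,j}+p_{S,j}\mu_{S,j}=(1-\pi_j)[\omega_jM_{1,W,j}+(1-\omega_j)M_{1,S,j}]/D_j=(1-\pi_j)M_{1,j}/D_j\).

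The main delicacy, rather than a real obstacle, is the definition of \(\mu_{W,j},\mu_{S,j}\). These require \(M_{0,W,j},M_{0,S,j}>0\), which holds for strictly positive likelihoods because each component density is positive on an interval. The other point is the endpoint cases \(\omega_j\in\{0,1\}\), where the corresponding term is simply absent. I would note both so that \eqref{eq:posterior-component-decomposition} is well defined.
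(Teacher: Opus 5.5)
Your proposal is correct and follows the same route as the paper's proof. Both identify $D_j(d)$ as the marginal likelihood, note that the spike contributes nothing to the posterior first moment, obtain the component probabilities by Bayes' theorem, and split the continuous part into its Wendland and semicircle pieces to get the decomposition; your extra checks (finiteness and positivity of $D_j(d)$, the posterior mean as the Bayes action under squared-error loss, and $M_{0,W,j}(d),M_{0,S,j}(d)>0$ so that $\mu_{W,j},\mu_{S,j}$ are well defined) just make explicit what the paper's short argument leaves implicit.
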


\begin{proof}
The marginal likelihood of \(d\) is \(D_j(d)\). The spike contributes zero to the posterior first moment, while the continuous mixture contributes \((1-\pi_j)M_{1,j}(d)\). Dividing the posterior first moment by the marginal likelihood gives \eqref{eq:general-shrinkage-rule}. The component probabilities follow by Bayes' theorem, and \eqref{eq:posterior-component-decomposition} follows by separating the two continuous components.
\end{proof}

\begin{remark}
Under squared-error loss, the proposed estimator is generally a continuous shrinkage rule rather than an exact hard threshold, so except at \(d=0\), the posterior mean is usually nonzero even though the posterior assigns positive probability to the exact spike. The quantity \(p_{0,j}(d)\) is the appropriate posterior measure of evidence for an exactly zero coefficient.
\end{remark}

\begin{proposition}[Continuity of moments and shrinkage rule]
\label{prop:continuity}
Assume that \(0<\pi_j<1\), \(\beta_j>0\), and that the prior slab is supported on \([ -\beta_j,\beta_j ]\). For either the Gaussian likelihood in \eqref{eq:normal-likelihood} with fixed \(\sigma^2>0\), or the fixed-rate Laplace likelihood in \eqref{eq:laplace-likelihood} with fixed \(\lambda>0\), the functions \(M_{r,j}(d)\), \(M_{r,W,j}(d)\), and \(M_{r,S,j}(d)\) are continuous in \(d\) for \(r\in\{0,1\}\). Moreover, \(D_j(d)\) is continuous and strictly positive for every \(d\in\mathbb{R}\), and the posterior-mean rule \(\delta_j(d)\) in \eqref{eq:general-shrinkage-rule} is continuous on \(\mathbb{R}\).
\end{proposition}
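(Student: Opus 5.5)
The argument is a dominated-convergence argument over the compact slab support $[-\beta_j,\beta_j]$, followed by a positivity bound on $D_j$. The Laplace likelihood in \eqref{eq:laplace-likelihood} has not yet been displayed. I take it to be $L(d\mid\theta)=(\lambda/2)\exp\{-\lambda|d-\theta|\}$ with fixed $\lambda>0$. All that is used is that it is continuous in $(d,\theta)$, bounded by $\lambda/2$, and strictly positive; the Gaussian kernel \eqref{eq:normal-likelihood} has the same three properties with bound $(2\pi\sigma^2)^{-1/2}$.

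\textbf{Continuity of the moments.} Fix $r\in\{0,1\}$ and a sequence $d_n\to d$. For each $\theta\in(-\beta_j,\beta_j)$, the integrand $\theta^r g_W(\theta;\beta_j)L(d_n\mid\theta)$ converges to $\theta^r g_W(\theta;\beta_j)L(d\mid\theta)$, because $L$ is continuous in $d$. The same holds with $g_S$ in place of $g_W$. The integrands are dominated uniformly in $n$ by $\beta_j^r\,C_L\,g_W(\theta;\beta_j)$ and $\beta_j^r\,C_L\,g_S(\theta;\beta_j)$, where $C_L$ is the sup bound on the likelihood. These dominating functions are integrable by Proposition~\ref{prop:kernel-properties}. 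Dominated convergence therefore gives continuity of $M_{r,W,j}$ and $M_{r,S,j}$. Continuity of $M_{r,j}$ then follows from the decomposition $M_{r,j}=\omega_jM_{r,W,j}+(1-\omega_j)M_{r,S,j}$; this covers the endpoint cases $\omega_j\in\{0,1\}$ as well.

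\textbf{Positivity of $D_j$ and continuity of $\delta_j$.} The term $d\mapsto \pi_jL(d\mid 0)$ is continuous. It is strictly positive because $\pi_j>0$ and both likelihoods are everywhere positive. The term $(1-\pi_j)M_{0,j}(d)$ is nonnegative, since the integrand is nonnegative. Hence $D_j(d)\ge \pi_jL(d\mid0)>0$ for every $d$, and $D_j$ is continuous as a sum of continuous functions. The rule $\delta_j=(1-\pi_j)M_{1,j}/D_j$ is then a quotient of continuous functions with a nonvanishing denominator, so it is continuous on $\R$.

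\textbf{The delicate point.} The only step needing care is the choice of dominating function. It must be independent of $d$, and this is exactly where bounded support and bounded likelihood are used. The Laplace kernel is not differentiable at $\theta=d$, but that causes no difficulty: only continuity in $d$ is required, not differentiability. One could also use uniform continuity of $L$ on compacts, but dominated convergence is cleaner.
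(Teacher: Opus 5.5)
Your proof is correct and takes essentially the same route as the paper. Both use dominated convergence with a $d$-free dominating function proportional to $\beta_j^r g(\theta)$, supplied by the bounded slab support and a uniformly bounded likelihood, then the bound $D_j(d)\ge\pi_jL(d\mid0)>0$ and continuity of the quotient. The one slip is cosmetic: the paper's Laplace likelihood is $(a/2)\exp\{-a|d-\theta|\}$ with $a=\sqrt{2\lambda}$, not rate $\lambda$, but your argument uses only continuity, positivity, and the uniform bound (here $a/2$), so it is unaffected.
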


\begin{proof}
Fix \(d_0\in\mathbb{R}\) and let \(d\to d_0\). For either likelihood, the function \(L(d\mid\theta)\) is continuous in \(d\) for every \(\theta\), and it is bounded uniformly over \(d\) and \(\theta\) by a finite constant that depends only on the fixed likelihood parameter. Since \(|\theta|\leq\beta_j\) on the slab support, for \(r\in\{0,1\}\) the integrands defining the moments are dominated by an integrable function proportional to \(\beta_j^r g_j(\theta;\omega_j,\beta_j)\). Dominated convergence therefore gives continuity of \(M_{r,j}(d)\). The same argument applies separately to the Wendland and semicircle moments.

The likelihood is strictly positive at \(\theta=0\), and \(\pi_j>0\), so \(D_j(d)\geq \pi_jL(d\mid0)>0\) for every finite \(d\). Hence the numerator and denominator in \eqref{eq:general-shrinkage-rule} are continuous and the denominator never vanishes. Their quotient, \(\delta_j(d)\), is consequently continuous on \(\mathbb{R}\).
\end{proof}

\begin{proposition}[Monotonicity of the fixed Gaussian shrinkage rule]
\label{prop:gaussian-monotonicity}
Assume the Gaussian likelihood in \eqref{eq:normal-likelihood} with fixed \(\sigma^2>0\), and hold \(\pi_j\), \(\omega_j\), and \(\beta_j\) fixed.  Let
\[
    m_j(d)=\pi_jL_N(d\mid0,\sigma^2)
    +(1-\pi_j)M_{0,j}^{(N)}(d)
\]
denote the marginal density of \(d\).  Then
\[
    \delta_j(d)
    =d+\sigma^2\frac{d}{dd}\log m_j(d),
    \qquad
    \delta_j'(d)
    =\frac{\operatorname{Var}(\theta_{j,k}\mid d)}{\sigma^2}\geq0.
\]
Consequently, the posterior-mean map is nondecreasing in \(d\) for fixed hyperparameters.  This conclusion concerns the fixed-hyperparameter rule; it does not by itself establish monotonicity of an empirical-Bayes map whose fitted hyperparameters change with the complete coefficient vector.
\end{proposition}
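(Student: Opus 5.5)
This is Tweedie's formula together with its second-derivative analogue, applied to the prior in \eqref{eq:proposed-prior}. Write $\Pi_j$ for that prior measure, $\pi_j\delta_0+(1-\pi_j)g_j(\theta;\omega_j,\beta_j)\,d\theta$, and $\varphi_\sigma(x)=(2\pi\sigma^2)^{-1/2}\exp\{-x^2/(2\sigma^2)\}$. Then
\[
m_j(d)=\int \varphi_\sigma(d-\theta)\,\Pi_j(d\theta),
\]
and this equals $D_j(d)$ from \eqref{eq:posterior-denominator} under the Gaussian likelihood. By Proposition~\ref{prop:continuity}, $m_j(d)>0$ for all $d$.

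\textbf{Step 1: differentiate under the integral.} Since $\Pi_j$ is a probability measure with support contained in $[-\beta_j,\beta_j]$, I would justify differentiating twice in $d$ under the integral sign. For $d$ in a bounded neighbourhood $|d-d_0|\le 1$ and $|\theta|\le\beta_j$, the derivatives $\partial_d^k\varphi_\sigma(d-\theta)$ for $k=1,2$ are polynomials in $(d-\theta)$ times $\varphi_\sigma$. They are therefore bounded by a constant depending only on $d_0$, $\beta_j$ and $\sigma$, which is integrable against $\Pi_j$. Dominated convergence (the mean value theorem form) then gives $m_j\in C^2$ with
\[
m_j'(d)=\int \frac{\theta-d}{\sigma^2}\,\varphi_\sigma(d-\theta)\,\Pi_j(d\theta),\qquad m_j''(d)=\int\Big\{\frac{(\theta-d)^2}{\sigma^4}-\frac{1}{\sigma^2}\Big\}\varphi_\sigma(d-\theta)\,\Pi_j(d\theta).
\]
This uniform-domination step is the main technical point. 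Compact support makes it routine.

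\textbf{Step 2: Tweedie's formula.} The posterior is $\Pi_j(d\theta\mid d)=\varphi_\sigma(d-\theta)\Pi_j(d\theta)/m_j(d)$. Dividing the expression for $m_j'$ by $m_j$ gives
\[
\frac{m_j'(d)}{m_j(d)}=\frac{\E(\theta\mid d)-d}{\sigma^2}.
\]
By Proposition~\ref{prop:posterior-mean}, $\E(\theta\mid d)=\delta_j(d)$. Rearranging gives $\delta_j(d)=d+\sigma^2(\log m_j)'(d)$.

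\textbf{Step 3: the derivative of $\delta_j$.} Differentiate the result of Step 2 to get
\[
\delta_j'(d)=1+\sigma^2\left\{\frac{m_j''}{m_j}-\Big(\frac{m_j'}{m_j}\Big)^2\right\}.
\]
From Step 1,
\[
\frac{m_j''}{m_j}=\frac{\E\{(\theta-d)^2\mid d\}}{\sigma^4}-\frac{1}{\sigma^2},
\qquad
\Big(\frac{m_j'}{m_j}\Big)^2=\frac{\{\E(\theta-d\mid d)\}^2}{\sigma^4}.
\]
Substituting, the constant terms cancel, leaving
\[
\delta_j'(d)=\frac{\Var(\theta-d\mid d)}{\sigma^2}=\frac{\Var(\theta\mid d)}{\sigma^2}\ge 0.
\]
The posterior variance is finite because the posterior is supported in $[-\beta_j,\beta_j]$.

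\textbf{Step 4: monotonicity.} Since $\delta_j$ is $C^1$ with a nonnegative derivative, it is nondecreasing. The closing caveat about empirical-Bayes maps needs no proof: all of the above holds with the hyperparameters fixed.
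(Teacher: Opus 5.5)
Your proof is correct and takes essentially the same approach as the paper: you obtain Tweedie's identity by differentiating $m_j$ under the integral sign, and a second differentiation then produces the posterior variance. The difference is cosmetic: the paper differentiates the posterior mean directly as a ratio of integrals, obtaining $\operatorname{Cov}(\theta_{j,k},(\theta_{j,k}-d)/\sigma^2\mid d)$, whereas you compute $1+\sigma^2(\log m_j)''(d)$; your dominated-convergence argument using the compact support also supplies a justification that the paper only asserts.
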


\begin{proof}
For the Gaussian likelihood,
\[
    \frac{d}{dd}\log m_j(d)
    =\frac{\mathbb{E}(\theta_{j,k}\mid d)-d}{\sigma^2}.
\]
Consequently, \(\delta_j(d)=d+\sigma^2(\log m_j)'(d)\).  Differentiating the posterior mean under the integral sign gives
\[
    \delta_j'(d)
    =\operatorname{Cov}\left(\theta_{j,k},
      \frac{\theta_{j,k}-d}{\sigma^2}\,\middle|\,d\right)
    =\frac{\operatorname{Var}(\theta_{j,k}\mid d)}{\sigma^2}\geq0.
\]
The calculation holds for fixed hyperparameters.  It does not apply directly to an empirical-Bayes map after \(\widehat{\boldsymbol\eta}\), \(\widehat\beta_j\), or \(\widehat\sigma\) has been refitted as a function of the complete observed coefficient vector.
\end{proof}

\begin{remark}[Endpoint cases]
If \(\pi_j=1\), the prior is degenerate at zero and \(\delta_j(d)=0\) for every \(d\); \(\omega_j\) and \(\beta_j\) are irrelevant.  If \(\pi_j=0\), the spike is absent and
\[
    \delta_j(d)=\frac{M_{1,j}(d)}{M_{0,j}(d)}.
\]
If \(\omega_j=1\) or \(\omega_j=0\), the absent semicircle or Wendland component is omitted, respectively.  These endpoint formulas are used in the component-specific fits and in the numerical checks.
\end{remark}

Proposition~\ref{prop:continuity} establishes continuity of the component moments, the posterior denominator, and the resulting shrinkage rule for both likelihood specifications. Figure~\ref{fig:shrinkage-curves} illustrates how the posterior-mean shrinkage rule changes with the prior spike probability, the Wendland mixture weight, the support scale, and the estimated noise level under the Gaussian likelihood. The plots also show the resulting attenuation relative to the unshrunk rule.

The differences among the panels have distinct interpretations.  Increasing \(\pi_j\) increases the prior probability of the spike at zero and therefore strengthens shrinkage toward zero.  Increasing \(\omega_j\) gives more weight to the Wendland slab, which is more concentrated near zero and vanishes at the support boundary. Relative to the more dispersed semicircle slab, this generally produces greater attenuation of moderate coefficients.  Increasing \(\beta_j\) widens the admissible slab support and delays boundary saturation, allowing larger posterior means when the data support them.  Increasing \(\widehat{\sigma}\) makes a fixed observed coefficient less informative relative to the noise and therefore increases shrinkage.  For small \(|d|\), all curves remain close to zero because the spike is plausible; as \(|d|\) increases, the posterior probability of the slab increases and the curves move toward the diagonal.  The curves are odd because the slab and likelihood are symmetric, and the bounded slab prevents the posterior mean from exceeding the fitted support in magnitude.

\begin{figure}[htbp]
    \centering
    \includegraphics[width=\textwidth]{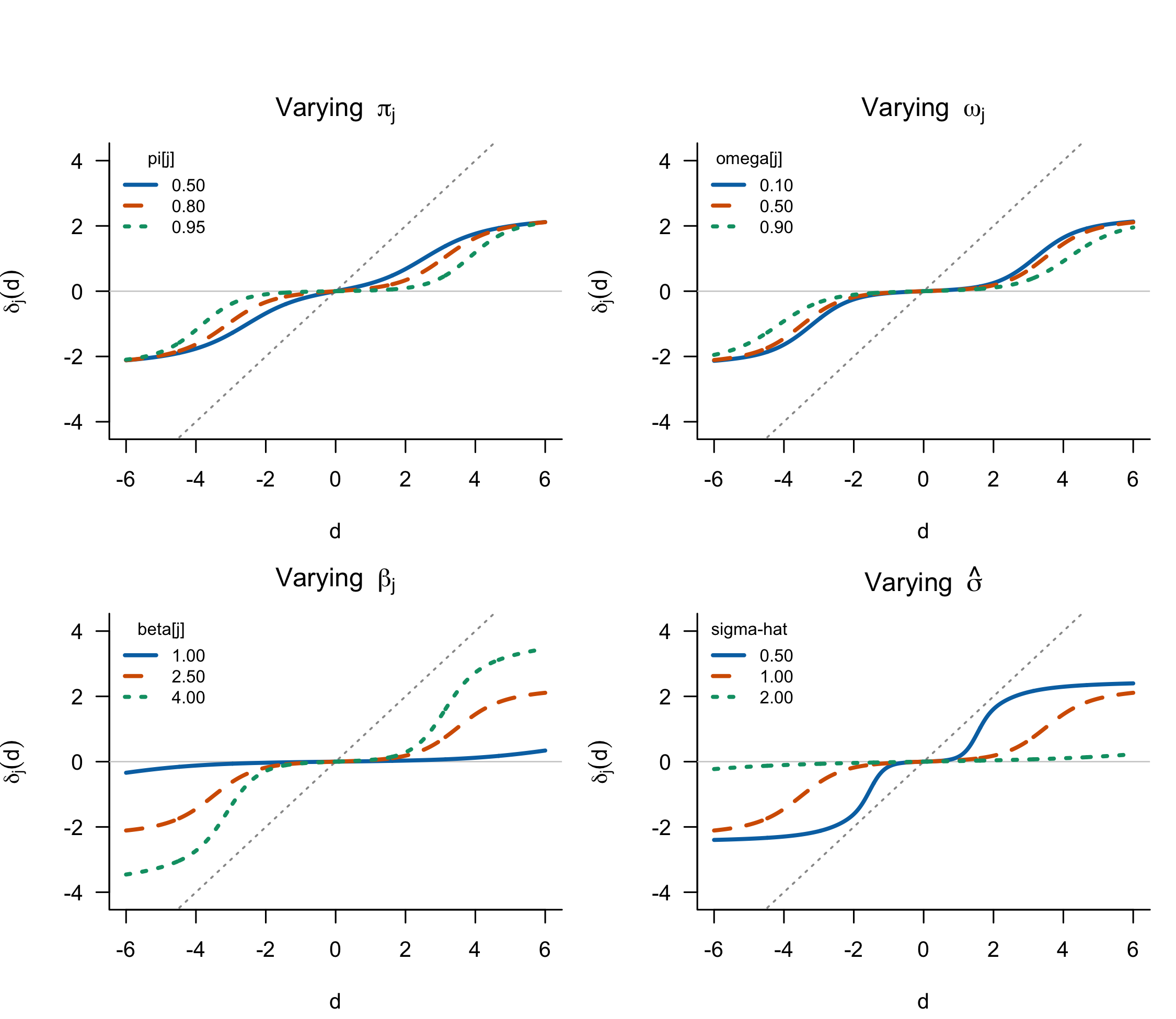}
    \caption{Posterior-mean shrinkage curves $\delta_j(d)$ for the proposed Wendland--semicircle prior under a Gaussian likelihood. Each panel varies one parameter while holding the others fixed: the prior spike probability $\pi_j$, the Wendland mixture weight $\omega_j$, the slab support scale $\beta_j$, and the noise-scale estimate $\widehat{\sigma}$. The gray diagonal line represents the unshrunk rule $\delta_j(d)=d$.}
    \label{fig:shrinkage-curves}
\end{figure}

\subsection{Laplace working likelihood and analytic Wendland contribution}
The normal likelihood in \eqref{eq:normal-likelihood} is the primary observation model. For a closed-form coefficientwise rule, we also consider the variance-mixture working likelihood used in related Bayesian wavelet-shrinkage constructions. Specifically, let \(v>0\) denote a coefficientwise latent variance and suppose
\begin{equation}
    d\mid\theta,v\sim\Normal(\theta,v),
    \qquad
    v\sim\Exp(\lambda),
    \qquad \lambda>0,
    \label{eq:variance-mixture}
\end{equation}
where \(\Exp(\lambda)\) denotes the exponential distribution with density
\[
    p(v\mid\lambda)=\lambda\exp(-\lambda v)I(v>0),
    \qquad \lambda>0.
\]
Thus, \(\lambda\) is the rate of the exponential distribution assigned to the latent variance \(v\); it is not the Gaussian noise variance \(\sigma^2\).
Integrating out \(v\) gives
\begin{equation}
    L_L(d\mid\theta,\lambda)
    =\frac{a}{2}\exp\{-a|d-\theta|\},
    \qquad a=\sqrt{2\lambda},
    \label{eq:laplace-likelihood}
\end{equation}
which is the density of a Laplace distribution centered at \(\theta\) with scale \(1/a\). The parameter \(\lambda\) is relevant only to the Laplace working likelihood.
In the empirical-Bayes implementation, \(\lambda\) is fixed at the value in \eqref{eq:lambda-estimator} and is not re-estimated. The numerical safeguard for a zero median absolute deviation (MAD) is described in the noise-scale subsection. The Laplace construction is therefore used as a coefficientwise working likelihood when the data are generated under the Gaussian model \eqref{eq:gaussian-errors}. The use of \eqref{eq:variance-mixture} should be interpreted as a coefficientwise working likelihood when the original data are generated under the common-variance Gaussian model \eqref{eq:gaussian-errors}.

For the Laplace likelihood, define the elementary exponential moment
\begin{equation*}
    \mathcal I_m(b;u,v)=\int_u^v t^m e^{bt}\,dt,
    \qquad m\in\{0,1,2,\ldots\},\quad b\ne0,\quad u<v.
    \label{eq:elementary-integral}
\end{equation*}
It has the finite-sum representation
\begin{equation}
    \mathcal I_m(b;u,v)
    =\left[
      e^{bt}\sum_{r=0}^{m}
      \frac{(-1)^r m!}{(m-r)!}\frac{t^{m-r}}{b^{r+1}}
      \right]_{t=u}^{t=v}.
    \label{eq:elementary-integral-closed}
\end{equation}

For a fixed \(d\), let \(\mathcal P(d,\beta)\) be the partition of \([-\beta,\beta]\) obtained by inserting the points \(0\) and \(d\) whenever they lie strictly inside the interval. On every interval \(I=[u_I,v_I]\in\mathcal P(d,\beta)\), define \(s_{\theta,I}\in\{-1,1\}\) as the constant sign of \(\theta\) on the interior of \(I\), and define \(s_{d,I}\in\{-1,1\}\) as the constant sign of \(d-\theta\) on the interior of \(I\). Values at partition points are irrelevant because they have Lebesgue measure zero. Let
\begin{equation*}
    (c_0,c_1,c_2,c_3,c_4,c_5)=(1,0,-10,20,-15,4).
    \label{eq:wendland-coefficients}
\end{equation*}

\begin{proposition}[Finite-sum Wendland moments]
\label{prop:wendland-moments}
Under the Laplace likelihood \eqref{eq:laplace-likelihood}, for \(r\in\{0,1\}\),
\begin{equation}
\begin{split}
    M_{r,W}(d;\beta,\lambda)
    ={}&\frac{3a}{4\beta}
    \sum_{I=[u_I,v_I]\in\mathcal P(d,\beta)}
    e^{-a s_{d,I}d}
    \sum_{m=0}^{5}c_m\beta^{-m}s_{\theta,I}^{m}
    \mathcal I_{r+m}(a s_{d,I};u_I,v_I),
\end{split}
\label{eq:wendland-moment-closed}
\end{equation}
where \(a=\sqrt{2\lambda}\) and \(\mathcal I_m\) is given by \eqref{eq:elementary-integral-closed}. Thus, both the zero-order marginal contribution and the first-order posterior numerator for the Wendland component are available without numerical integration.
\end{proposition}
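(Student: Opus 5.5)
The plan is to write $M_{r,W}(d;\beta,\lambda)$ directly from its definition with $L=L_L$ and $g_W(\cdot;\beta)$. Then we linearize the two absolute values $|\theta|$ and $|d-\theta|$ on each piece of the partition $\mathcal P(d,\beta)$, and reduce every piece to the elementary integrals $\mathcal I_m$ of \eqref{eq:elementary-integral-closed}. Concretely,
\[
M_{r,W}(d;\beta,\lambda)=\int_{-\beta}^{\beta}\theta^r\,\frac{3}{2\beta}\,P\!\left(\frac{|\theta|}{\beta}\right)\frac{a}{2}e^{-a|d-\theta|}\,d\theta,
\qquad P(x)=\sum_{m=0}^5 c_m x^m,
\]
where $P(x)=(1-x)^4(1+4x)$ by \eqref{eq:wendland-polynomial}. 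The constant in front is $\tfrac{3}{2\beta}\cdot\tfrac a2=\tfrac{3a}{4\beta}$, which matches \eqref{eq:wendland-moment-closed}.

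\textbf{Splitting the integral.} We split $[-\beta,\beta]$ into the intervals of $\mathcal P(d,\beta)$. By construction, neither $\theta$ nor $d-\theta$ changes sign in the interior of any $I=[u_I,v_I]$. The only possible sign changes are at $0$ and at $d$, and these are partition points whenever they lie inside $(-\beta,\beta)$. If $d\notin(-\beta,\beta)$, the sign of $d-\theta$ is constant on the whole interval, so the conclusion still holds.

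On the interior of $I$ we then have $|\theta|=s_{\theta,I}\theta$ and $|d-\theta|=s_{d,I}(d-\theta)$. Hence
\[
\theta^r P(|\theta|/\beta)\,e^{-a|d-\theta|}=e^{-a s_{d,I}d}\sum_{m=0}^5 c_m\beta^{-m}s_{\theta,I}^m\,\theta^{r+m}e^{a s_{d,I}\theta}.
\]
The endpoints are a Lebesgue-null set, so they do not affect the integral. Integrating term by term over $I$ gives $\mathcal I_{r+m}(a s_{d,I};u_I,v_I)$. Summing over $I$ yields exactly \eqref{eq:wendland-moment-closed}.

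\textbf{Justifying the closed form.} It remains to check that \eqref{eq:elementary-integral-closed} is valid. Its hypotheses hold because $b=a s_{d,I}=\pm\sqrt{2\lambda}\neq0$ and $u_I<v_I$. The formula itself can be confirmed either by differentiating the bracketed antiderivative, where the sum telescopes to $t^m e^{bt}$, or by induction on $m$ via integration by parts, using $\mathcal I_m=[t^me^{bt}/b]_u^v-(m/b)\mathcal I_{m-1}$.

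\textbf{Main obstacle.} The substantive point is the bookkeeping of the partition. We must verify that the signs are genuinely constant on each interval, including the degenerate cases $d=0$, $d=\pm\beta$, and $|d|>\beta$, where fewer points are inserted. We must also check that the factor $e^{-as_{d,I}d}$ is correctly extracted, since $-a s_{d,I}(d-\theta)=-a s_{d,I}d+a s_{d,I}\theta$. Once this is in place, everything is a finite sum with no remaining integration, which proves the final claim.
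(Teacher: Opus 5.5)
Your proposal is correct and follows the paper's proof essentially step for step. On each interval of $\mathcal P(d,\beta)$ you write the Wendland polynomial using $|\theta|=s_{\theta,I}\theta$, rewrite $|d-\theta|=s_{d,I}(d-\theta)$ to pull out $e^{-as_{d,I}d}$, and integrate term by term to get $\mathcal I_{r+m}(as_{d,I};u_I,v_I)$ with prefactor $3a/(4\beta)$. You also verify the finite-sum formula for $\mathcal I_m$ by differentiating and telescoping, and you check the degenerate partition cases; these are small additions that the paper leaves implicit.
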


\begin{proof}
By \eqref{eq:wendland-polynomial}, the Wendland density on each interval of the partition is
\[
    g_W(\theta;\beta)=\frac{3}{2\beta}
    \sum_{m=0}^{5}c_m\beta^{-m}s_{\theta,I}^{m}\theta^m.
\]
On the same interval, \(|d-\theta|=s_{d,I}(d-\theta)\), and hence
\[
    \exp\{-a|d-\theta|\}
    =e^{-as_{d,I}d}e^{as_{d,I}\theta}.
\]
Multiplication by \(\theta^r(a/2)\) and integration over each interval gives \eqref{eq:wendland-moment-closed}.
\end{proof}

For the semicircle component, the substitution \(\theta=\beta\cos t\), \(0\le t\le\pi\), yields the stable one-dimensional representation
\begin{equation}
    M_{r,S}(d;\beta,\lambda)
    =\frac{a}{\pi}\int_0^\pi
      (\beta\cos t)^r
      \exp\{-a|d-\beta\cos t|\}
      \sin^2(t)\,dt,
    \qquad r\in\{0,1\}.
    \label{eq:semicircle-moment-integral}
\end{equation}
For \(|d|\geq\beta\), the integrand has no interior kink. The treatment of the \(|d|<\beta\) case is specified in the numerical implementation paragraph below. Standard adaptive Gauss--Legendre quadrature provides a deterministic and accurate evaluation.

In our implementation, we use several numerical safeguards. In the finite-sum Wendland calculation, the case \(|b|<10^{-10}\) is evaluated by the limiting polynomial integral \(\mathcal I_m(0;u,v)=(v^{m+1}-u^{m+1})/(m+1)\), avoiding the powers of \(b^{-1}\). Otherwise, the finite-sum expression in \eqref{eq:elementary-integral-closed} is used. If it is non-finite, the code falls back to adaptive integration with relative tolerance \(10^{-10}\) and absolute tolerance \(10^{-12}\). Quadrature-based component moments are evaluated on the log scale using row-wise log-sum-exp calculations. The semicircle integral is split at \(t_d=\arccos(d/\beta)\) when the absolute-value kink is inside the integration interval. The independent quadrature study selected \(q=128\) for the remaining simulation-suite analyses.  The real-data and semi-synthetic seismic analyses used \(q=48\), as specified in Table~\ref{tab:seismic-settings}; these orders were fixed before evaluating the corresponding outcomes.  The selected orders were not chosen by inspecting the corresponding reconstruction results.

\subsection{Structural properties of the shrinkage rule}

\begin{proposition}[Symmetry, boundedness, and tail behavior]
\label{prop:shrinkage-properties}
Consider the prior in \eqref{eq:proposed-prior}, where \(0<\pi_j<1\), \(\beta_j>0\), and \(g_j(\cdot;\omega_j,\beta_j)\) is a symmetric probability density supported on \([-\beta_j,\beta_j]\).  Assume that the likelihood is finite and strictly positive for finite \(d\) and \(\theta\), and satisfies \(L(-d\mid-\theta)=L(d\mid\theta)\).  Then:
\begin{enumerate}[label=(\roman*)]
    \item \(\delta_j(-d)=-\delta_j(d)\) for every \(d\in\mathbb{R}\), and \(\delta_j(0)=0\);
    \item \(|\delta_j(d)|<\beta_j\) for every finite \(d\), and \(|\delta_j(d)|\leq\beta_j\) for all \(d\);
    \item if \(L=L_N\) is the Gaussian likelihood in \eqref{eq:normal-likelihood} with fixed \(\sigma^2>0\), and \(g_j(\theta;\omega_j,\beta_j)>0\) for every \(\theta\in(-\beta_j,\beta_j)\), then
    \[
        \lim_{d\to\infty}\delta_j(d)=\beta_j,
        \qquad
        \lim_{d\to-\infty}\delta_j(d)=-\beta_j;
    \]
    These endpoint values are limits of the posterior mean and are not attained at finite \(d\);
    \item if \(L=L_L\) is the fixed-rate Laplace likelihood in \eqref{eq:laplace-likelihood}, with \(a>0\), then the tail limits are
    \[
        \lim_{d\to\infty}\delta_j(d)
        =\frac{(1-\pi_j)A_{1,j}^{+}}
        {\pi_j+(1-\pi_j)A_{0,j}^{+}},
        \qquad
        \lim_{d\to-\infty}\delta_j(d)
        =\frac{(1-\pi_j)A_{1,j}^{-}}
        {\pi_j+(1-\pi_j)A_{0,j}^{-}},
    \]
    where
    \[
        A_{r,j}^{+}=\int_{-\beta_j}^{\beta_j}\theta^r g_j(\theta;\omega_j,\beta_j)e^{a\theta}\,d\theta,
        \qquad
        A_{r,j}^{-}=\int_{-\beta_j}^{\beta_j}\theta^r g_j(\theta;\omega_j,\beta_j)e^{-a\theta}\,d\theta,
        \qquad r\in\{0,1\};
    \]
    \item \(p_{0,j}(d)\), \(p_{W,j}(d)\), and \(p_{S,j}(d)\) are even functions of \(d\), whereas \(\mu_{W,j}(d)\) and \(\mu_{S,j}(d)\) are odd functions of \(d\).
\end{enumerate}
\end{proposition}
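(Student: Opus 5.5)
The approach is to work directly from the representation \eqref{eq:general-shrinkage-rule} of Proposition~\ref{prop:posterior-mean}, with \(D_j\) as in \eqref{eq:posterior-denominator}, and treat the five parts in order.

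For (i) and (v), substitute \(\theta\mapsto-\theta\) in the moment integrals. The two facts used are \(g_j(-\theta)=g_j(\theta)\) (and likewise for \(g_W\) and \(g_S\) separately) and \(L(-d\mid-\theta)=L(d\mid\theta)\). These give \(M_{r,\cdot,j}(-d)=(-1)^rM_{r,\cdot,j}(d)\) and \(L(-d\mid0)=L(d\mid0)\). Hence \(D_j\) and every zero-order component moment is even, while every first-order moment is odd. Oddness of \(\delta_j\), evenness of \(p_{0,j},p_{W,j},p_{S,j}\), and oddness of \(\mu_{W,j},\mu_{S,j}\) then follow. Finally, \(\delta_j(0)=-\delta_j(0)\) forces \(\delta_j(0)=0\). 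The conditional means \(\mu_{\cdot,j}\) are well defined because \(M_{0,\cdot,j}>0\), by strict positivity of \(L\) and since each component has positive mass.

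For (ii), note that for finite \(d\) we have \(|M_{1,j}(d)|\le\int|\theta|g_jL\,d\theta<\beta_j M_{0,j}(d)\). The inequality is strict because \(|\theta|<\beta_j\) almost everywhere and the integrand is positive on a set of positive measure. Also \(\pi_jL(d\mid0)>0\), so \(|\delta_j(d)|<\beta_j\frac{(1-\pi_j)M_{0,j}}{\pi_jL(d\mid0)+(1-\pi_j)M_{0,j}}<\beta_j\). The weak bound "for all \(d\)" then only concerns the limits, and it follows by passing to the limit.

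For (iii), write \(\delta_j\) as the ratio of \((1-\pi_j)\int\theta g_j(\theta)e^{(d\theta-\theta^2/2)/\sigma^2}d\theta\) to \(\pi_j+(1-\pi_j)\int g_j e^{(d\theta-\theta^2/2)/\sigma^2}d\theta\); this is obtained by dividing numerator and denominator by \(L_N(d\mid0)\). Fix \(\epsilon\in(0,\beta_j)\). Because \(g_j>0\) on \((-\beta_j,\beta_j)\), the slab mass on \((\beta_j-\epsilon/2,\beta_j)\) is positive. On that interval the weight \(e^{d\theta/\sigma^2}\) is at least \(e^{d(\beta_j-\epsilon/2)/\sigma^2}\), times a bounded factor. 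Meanwhile the contribution of \(\theta\le\beta_j-\epsilon\), together with the spike term \(\pi_j\), is \(O(e^{d(\beta_j-\epsilon)/\sigma^2})\). Thus the normalized posterior concentrates on \((\beta_j-\epsilon,\beta_j]\) as \(d\to\infty\), giving \(\liminf\delta_j(d)\ge\beta_j-\epsilon-o(1)\). Combined with (ii) this yields the limit \(\beta_j\), and oddness from (i) gives the limit at \(-\infty\). This Laplace-type concentration step is the main obstacle. It requires care because \(g_j\) vanishes at the boundary (to high order in the Wendland case), so the argument must use only positivity of mass near \(\beta_j\), not a lower bound on the density. The claim that the limits are not attained is just the strict inequality in (ii).

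For (iv), take \(d>\beta_j\). Then for every \(\theta\) in the support, \(|d-\theta|=d-\theta\), so \(L_L(d\mid\theta)=\frac a2e^{-ad}e^{a\theta}\), while \(L_L(d\mid0)=\frac a2e^{-ad}\). Dividing numerator and denominator by \(\frac a2 e^{-ad}\) makes \(\delta_j(d)\) exactly equal to \((1-\pi_j)A^+_{1,j}/\{\pi_j+(1-\pi_j)A^+_{0,j}\}\) for all \(d>\beta_j\). The limit is therefore immediate. The case \(d<-\beta_j\) is identical and produces \(A^-_{r,j}\).
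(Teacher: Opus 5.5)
Your proposal is correct and follows essentially the same route as the paper's appendix proof: the substitution $\theta\mapsto-\theta$ for (i) and (v), concentration of the Gaussian posterior near $\beta_j$ for (iii) using only positive slab mass on $(\beta_j-\varepsilon/2,\beta_j)$ with the spike absorbed as a lower-order term, and exact cancellation of $\frac{a}{2}e^{-ad}$ once $|d|\ge\beta_j$ for (iv). The only cosmetic difference is in (ii), where you get strictness from the moment inequality $|M_{1,j}(d)|<\beta_j M_{0,j}(d)$ plus the positive spike term, whereas the paper argues that a posterior with positive mass at the interior point $0$ cannot have its mean at an endpoint.
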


The proof is given in the Appendix. 

\begin{remark}
The bounded-support property is a modeling choice, not a universal truth. It can reduce the influence of extreme noisy coefficients in low-SNR problems, but it also implies saturation, i.e., a genuinely large coefficient is eventually estimated near \(\pm\beta_j\). For signals with unusually large dynamic range, the value of \(\beta_j\) should therefore be examined carefully, and an unbounded slab should be considered as a sensitivity analysis.
\end{remark}

\subsection{Fixed-hyperparameter bias, variance, and risk}
For the two likelihood specifications, define the likelihood-specific moments as:
\begin{align*}
    M_{r,j}^{(N)}(d;\omega_j,\beta_j,\sigma^2)
    &=
    \int_{-\beta_j}^{\beta_j}
    \theta^r g_j(\theta;\omega_j,\beta_j)
    L_N(d\mid\theta,\sigma^2)\,d\theta,
    \\
    M_{r,j}^{(L)}(d;\omega_j,\beta_j,\lambda)
    &=
    \int_{-\beta_j}^{\beta_j}
    \theta^r g_j(\theta;\omega_j,\beta_j)
    L_L(d\mid\theta,\lambda)\,d\theta,
\end{align*}
for \(r\in\{0,1\}\). The fixed-hyperparameter shrinkage maps under the Gaussian and Laplace likelihoods are, respectively,
\begin{align}
    \delta_{j,N}(d;\pi_j,\omega_j,\beta_j,\sigma^2)
    &=
    \frac{(1-\pi_j)M_{1,j}^{(N)}(d;\omega_j,\beta_j,\sigma^2)}
    {\pi_jL_N(d\mid0,\sigma^2)
    +(1-\pi_j)M_{0,j}^{(N)}(d;\omega_j,\beta_j,\sigma^2)},
    \label{eq:normal-fixed-shrinkage-map}\\
    \delta_{j,L}(d;\pi_j,\omega_j,\beta_j,\lambda)
    &=
    \frac{(1-\pi_j)M_{1,j}^{(L)}(d;\omega_j,\beta_j,\lambda)}
    {\pi_jL_L(d\mid0,\lambda)
    +(1-\pi_j)M_{0,j}^{(L)}(d;\omega_j,\beta_j,\lambda)}.
    \label{eq:laplace-fixed-shrinkage-map}
\end{align}
Let
\[
    \boldsymbol{\zeta}_{j,N}
    =(\pi_j,\omega_j,\beta_j,\sigma^2),
    \qquad
    \boldsymbol{\zeta}_{j,L}
    =(\pi_j,\omega_j,\beta_j,\lambda)
\]
denote the fixed hyperparameter vectors for the Gaussian and Laplace
likelihoods, respectively. For \(q\in\{N,L\}\), let
\(\delta_{j,q}(d;\boldsymbol{\zeta}_{j,q})\) denote the corresponding map in
\eqref{eq:normal-fixed-shrinkage-map} or
\eqref{eq:laplace-fixed-shrinkage-map}.

Let \(D_{j,k}\) denote a random empirical coefficient generated under a fixed
true coefficient value \(\theta\). Because the simulations use the Gaussian
data-generating model, define \(\E_{\theta}^{G}\) and
\(\Var_{\theta}^{G}\) with respect to
\(D_{j,k}\sim\Normal(\theta,\sigma^2)\), including when the Laplace working
likelihood is used. For \(q\in\{N,L\}\), the pointwise fixed-hyperparameter
bias, variance, and frequentist risk are
\begin{align}
    B_{j,q}^{G}(\theta;\boldsymbol{\zeta}_{j,q})
    &=
    \E_{\theta}^{G}
    \left\{
    \delta_{j,q}(D_{j,k};\boldsymbol{\zeta}_{j,q})
    \right\}
    -\theta,
    \label{eq:bias}\\
    V_{j,q}^{G}(\theta;\boldsymbol{\zeta}_{j,q})
    &=
    \Var_{\theta}^{G}
    \left\{
    \delta_{j,q}(D_{j,k};\boldsymbol{\zeta}_{j,q})
    \right\},
    \label{eq:variance}\\
    R_{j,q}^{G}(\theta;\boldsymbol{\zeta}_{j,q})
    &=
    \E_{\theta}^{G}
    \left[
    \left\{
    \delta_{j,q}(D_{j,k};\boldsymbol{\zeta}_{j,q})-\theta
    \right\}^{2}
    \right] \nonumber\\
    &=
    \left\{
    B_{j,q}^{G}(\theta;\boldsymbol{\zeta}_{j,q})
    \right\}^{2}
    +
    V_{j,q}^{G}(\theta;\boldsymbol{\zeta}_{j,q}),
    \label{eq:risk}
\end{align}
where the superscript \(G\) denotes the data-generating law \(D_{j,k}\sim\mathcal{N}(\theta,\sigma^2)\), not necessarily the likelihood used by the fitted shrinkage rule.  Thus, \(B_{j,N}^{G}\), \(V_{j,N}^{G}\), and \(R_{j,N}^{G}\) describe the Gaussian likelihood under Gaussian data, whereas \(B_{j,L}^{G}\), \(V_{j,L}^{G}\), and \(R_{j,L}^{G}\) describe the Laplace working likelihood when the data are still generated under Gaussian errors.  When the data-generating law is Laplace, expectations and variances must instead be taken under the variance-matched Laplace distribution; the Gaussian variance identity and the Gaussian monotonicity result are not used for those rows.

For completeness, let
\[
    \mathcal{D}
    =
    \{d_{j,k}:j=J_0,\ldots,J-1,\ k\in\mathcal K_j\}
\]
denote the complete collection of observed detail coefficients. The two
empirical-Bayes plug-in rules are
\begin{align}
    \widehat\delta_{j,k,N}^{\mathrm{EB}}(\mathcal D)
    &=\delta_{j,N}\left(
    d_{j,k};\pi_j,\widehat\omega_{j,N},\widehat\beta_j,\widehat\sigma^2
    \right),
    \label{eq:normal-eb-shrinkage-rule}\\
    \widehat\delta_{j,k,L}^{\mathrm{EB}}(\mathcal D)
    &=\delta_{j,L}\left(
    d_{j,k};\pi_j,\widehat\omega_{j,L},\widehat\beta_j,\widehat\lambda
    \right).
    \label{eq:laplace-eb-shrinkage-rule}
\end{align}
Here \(\widehat\omega_{j,N}\) and \(\widehat\omega_{j,L}\) are defined in \eqref{eq:eb-omega-estimates}, and \(\widehat\lambda\) is the fixed rate in \eqref{eq:lambda-estimator}. For \(q\in\{N,L\}\), the end-to-end empirical-Bayes risk is
\begin{equation}
    R_{j,k,q}^{\mathrm{EB},G}(\boldsymbol{\theta})
    =
    \E_{\boldsymbol{\theta}}^{G}
    \left[
    \left\{
    \widehat\delta_{j,k,q}^{\mathrm{EB}}(\mathcal D)
    -\theta_{j,k}
    \right\}^{2}
    \right],
    \label{eq:eb-risk}
\end{equation}
where the expectation is over the joint distribution of all empirical coefficients used to construct \(\mathcal D\). The primary Gaussian-error simulation study estimates $R_{j,k,q}^{\mathrm{EB},G}$ through repeated data generation and refitting. The supplementary Laplace-error sensitivity analysis uses a different data-generating law and is reported separately; it should not be interpreted as an estimate of $R_{j,k,q}^{\mathrm{EB},G}$. The supplementary risk tables therefore identify the data-generating law and the working likelihood separately for every row.

\section{Inference and implementation}
\label{sec:inference}

\subsection{Noise-scale estimation}

The noise standard deviation is estimated from the finest-resolution detail coefficients. Let \(\operatorname{med}(\cdot)\) denote the sample median. Define
\begin{equation}
    \widehat\sigma
    =\frac{\operatorname{med}\left\{
       |d_{J-1,k}-\operatorname{med}(d_{J-1,k})|:
       k\in\mathcal K_{J-1}\right\}}{0.6745}.
    \label{eq:mad-estimator}
\end{equation}
If the raw MAD estimate is zero, replace \(\widehat\sigma\) by \(\max\{\widehat\sigma,\sqrt{\epsilon_{\mathrm{mach}}}\}\) before calculating any likelihood-specific quantity.  The resulting positive value is used below.  The constant \(0.6745\) is the approximate median of \(|Z|\) for \(Z\sim\mathcal{N}(0,1)\).  A pooled MAD over the two or three finest levels is a prespecified alternative when the finest level may contain substantial signal; it is reported only for analyses in which that alternative is actually fitted.

For the Gaussian likelihood, \(\widehat\sigma^2\) is inserted into \eqref{eq:normal-likelihood}. For the Laplace working likelihood, the default rate is
\begin{equation}
    \widehat\lambda=\frac{1}{2\widehat\sigma^2},
    \qquad
    \widehat a=\sqrt{2\widehat\lambda}=\frac{1}{\widehat\sigma},
    \label{eq:lambda-estimator}
\end{equation}
so that the working Laplace scale is \(\widehat\sigma\). For the processed seismic trace, \(\widehat\sigma\) is a wavelet-domain scale estimate conditional on the supplied preprocessing; it is not interpreted as a direct estimate of physical measurement uncertainty.

\subsection{Empirical-Bayes estimation of the slab-shape parameters}
For fixed \(\pi_j\), \(\beta_j\), and likelihood-specific parameters, define
the Gaussian and Laplace marginal densities by
\begin{align}
    m_{j,N}(d;\boldsymbol{\eta})
    &
    =\pi_jL_N(d\mid0,\sigma^2)
    +(1-\pi_j)M_{0,j}^{(N)}
    \left(d;\omega_j(\boldsymbol{\eta}),\beta_j,\sigma^2\right),
    \label{eq:normal-marginal-density}\\
    m_{j,L}(d;\boldsymbol{\eta})
    &
    =\pi_jL_L(d\mid0,\lambda)
    +(1-\pi_j)M_{0,j}^{(L)}
    \left(d;\omega_j(\boldsymbol{\eta}),\beta_j,\lambda\right),
    \label{eq:laplace-marginal-density}
\end{align}
where \(M_{0,j}^{(N)}\) and \(M_{0,j}^{(L)}\) are the mixture-slab moments
defined in Section~\ref{sec:properties}. The empirical-Bayes estimate of
\(\boldsymbol{\eta}\) is obtained separately for the two likelihood
specifications by maximizing
\begin{align}
    \widehat{\boldsymbol{\eta}}_{N}
    &=\arg\max_{\boldsymbol{\eta}\in[-12,12]^2}
    \ell_N(\boldsymbol{\eta}),
    &
    \ell_N(\boldsymbol{\eta})
    &=\sum_{j=J_0}^{J-1}\sum_{k\in\mathcal K_j}
    \log m_{j,N}(d_{j,k};\boldsymbol{\eta}),
    \label{eq:normal-eb-objective}\\
    \widehat{\boldsymbol{\eta}}_{L}
    &=\arg\max_{\boldsymbol{\eta}\in[-12,12]^2}
    \ell_L(\boldsymbol{\eta}),
    &
    \ell_L(\boldsymbol{\eta})
    &=\sum_{j=J_0}^{J-1}\sum_{k\in\mathcal K_j}
    \log m_{j,L}(d_{j,k};\boldsymbol{\eta}).
    \label{eq:laplace-eb-objective}
\end{align}
For numerical stability, each marginal density is evaluated on the log scale
using a log-sum-exp operation. In all reported analyses, the optimization is
performed with a bounded quasi-Newton method over \([-12,12]^2\). This
restriction prevents numerical probabilities indistinguishable from zero or
one while having negligible practical effect on the fitted mixture weights. For every fitted replicate, the implementation records the optimizer convergence status, whether either optimization coordinate reaches its bound, whether the objective or any component moment is non-finite, whether a numerical fallback is used, the endpoint branch if applicable, and the fitted values of \(\widehat{\boldsymbol\eta}\), \(\widehat\omega_j\), \(\widehat\beta_j\), and the likelihood-specific noise-scale quantity.  These fit-level diagnostics are summarized in the Supplementary Materials together with the warning messages and the number of finite performance measures.

The fitted estimators are obtained by an empirical-Bayes plug-in operation. The Gaussian and Laplace mixture weights are
\begin{equation}
    \widehat\omega_{j,N}
    =\omega_j(\widehat{\boldsymbol{\eta}}_N),
    \qquad
    \widehat\omega_{j,L}
    =\omega_j(\widehat{\boldsymbol{\eta}}_L).
    \label{eq:eb-omega-estimates}
\end{equation}
The values of \(\pi_j\), determined by the fixed hyperparameters \(\ell\) and
\(\gamma\), are not estimated from the data. The denoised signal is reconstructed by applying the inverse DWT to the retained scaling coefficients and the appropriate estimator in \eqref{eq:normal-eb-shrinkage-rule} or \eqref{eq:laplace-eb-shrinkage-rule}.

\subsection{Fitting algorithm}

The complete fitting procedure is as follows.
\begin{enumerate}[label=\textbf{Step \arabic*:}]
    \item Apply an orthogonal DWT to \(\bfy\), obtaining empirical detail coefficients \(d_{j,k}\) and scaling coefficients.
    \item Estimate \(\widehat\sigma\) using \eqref{eq:mad-estimator} and select the Gaussian likelihood \((q=N)\) or the Laplace working likelihood \((q=L)\), with the corresponding quantities defined above.
    \item Set \(\pi_j\), \(\widehat\beta_j^{(0)}\), and \(\widehat\beta_j\) using \eqref{eq:spike-probability}, \eqref{eq:beta-quantile}, and \eqref{eq:positive-beta-floor}; retain \(\varkappa\), \(\varepsilon_\beta\), and the selected quadrature order as part of the fitting specification.
    \item Apply the endpoint branches before evaluating unnecessary component expressions.  If \(\pi_j=1\), set the posterior mean and continuous-component probabilities to zero for that level.  If \(\pi_j=0\), omit the spike term.  If \(\omega_j=1\) or \(\omega_j=0\), evaluate only the Wendland or semicircle component, respectively.
     \item For an interior mixture, evaluate the mixture moments and marginal densities for a candidate \(\boldsymbol\eta\) as defined in \eqref{eq:normal-marginal-density}--\eqref{eq:laplace-marginal-density.  Use log-sum-exp calculations and the stated numerical safeguards.}
     \item If the Gaussian likelihood is selected, maximize \eqref{eq:normal-eb-objective} to obtain \(\widehat{\boldsymbol\eta}_N\); if the Laplace working likelihood is selected, maximize \eqref{eq:laplace-eb-objective} to obtain \(\widehat{\boldsymbol\eta}_L\).  Skip this optimization for a fixed endpoint specification.
    \item Compute the posterior component probabilities and posterior mean for every detail coefficient using the appropriate interior or endpoint map, and save the branch, convergence status, boundary-hit status, warnings, and support diagnostics.
    \item Apply the inverse DWT, retaining the empirical scaling coefficients, to obtain the denoised estimate \(\widehat f\). For a padded input, retain only the original observations before calculating reported application-level summaries.
\end{enumerate}

The overall fitting workflow is summarized in Figure~\ref{fig:fitting-algorithm}.

\begin{figure}[htbp]
    \centering
    \begin{tikzpicture}[
        node distance=4mm,
        box/.style={draw, rounded corners, align=center, text width=0.78\linewidth, minimum height=7mm, inner sep=3pt},
        line/.style={->,>=stealth, line width=0.5pt}
    ]
        \node[box] (input) {Input the observed signal; if its length is non-dyadic, apply the documented extension to a dyadic length};
        \node[box, below=of input] (dwt) {Compute the orthogonal discrete wavelet transform and retain the detail and scaling coefficients};
        \node[box, below=of dwt] (hyper) {Estimate \(\widehat{\sigma}\), set \(\pi_j\) and \(\widehat{\beta}_j\), and select the likelihood-specific numerical settings};
        \node[box, below=of hyper] (branch) {Apply endpoint rules first; otherwise evaluate the interior mixture moments and marginal densities with stable log-scale calculations};
        \node[box, below=of branch] (fit) {For an interior mixture, maximize the bounded empirical-Bayes objective; skip this optimization for a fixed endpoint specification};
        \node[box, below=of fit] (posterior) {Compute posterior component probabilities and posterior means};
        \node[box, below=of posterior] (output) {Apply the inverse transform, remove any padding from the reported output, and calculate application-level summaries};
        \draw[line] (input) -- (dwt);
        \draw[line] (dwt) -- (hyper);
        \draw[line] (hyper) -- (branch);
        \draw[line] (branch) -- (fit);
        \draw[line] (fit) -- (posterior);
        \draw[line] (posterior) -- (output);
    \end{tikzpicture}
    \caption{Flowchart for the empirical-Bayes wavelet fitting procedure.  Endpoint branches are evaluated before unnecessary component calculations; for a padded record, truncation to the original observations occurs before application-level diagnostics are reported.}
    \label{fig:fitting-algorithm}
\end{figure}

\clearpage


\section{Simulation study}
\label{sec:simulation}
The simulation study had two purposes. The primary experiment compared reconstruction accuracy and computational cost over standard test signals, sample sizes, and signal-to-noise ratios.  The supplementary experiments then examined the contribution of the two slab components, the resolution-adaptive mixture weight, numerical quadrature, fixed hyperparameters, support saturation, the wavelet basis and coarsest shrunk resolution, non-Gaussian errors, and semi-synthetic seismic recovery (Supplementary Section~3.2).  All methods were evaluated on the same simulated realization within each scenario, so that paired comparisons could be formed directly.

\subsection{Test functions and data generation}
Let \(f\) be one of the four standard test functions of \citet{DonohoJohnstone1994}---Bumps, Blocks, Doppler, and HeaviSine.  On \([0,1]\), the functions used in the simulations were as follows.  Let \(\operatorname{sgn}(u)\) denote the sign function.  The Blocks function was
\begin{equation*}
    f_{\mathrm{Blocks}}(x)=\sum_{r=1}^{11}h_r\frac{1+\operatorname{sgn}(x-t_r)}{2},
    \label{eq:blocks-function}
\end{equation*}
where
\begin{align*}
    (t_1,\ldots,t_{11})={}&(0.10,0.13,0.15,0.23,0.25,0.40,0.44,0.65,0.76,0.78,0.81),\\
    (h_1,\ldots,h_{11})={}&(4,-5,3,-4,5,-4.2,2.1,4.3,-3.1,2.1,-4.2).
\end{align*}
The Bumps function was
\begin{equation*}
    f_{\mathrm{Bumps}}(x)=\sum_{r=1}^{11}h_r\left\{1+\left|\frac{x-t_r}{w_r}\right|\right\}^{-4},
    \label{eq:bumps-function}
\end{equation*}
where
\begin{align*}
    (t_1,\ldots,t_{11})={}&(0.10,0.13,0.15,0.23,0.25,0.40,0.44,0.65,0.76,0.78,0.81),\\
    (h_1,\ldots,h_{11})={}&(4,5,3,4,5,4.2,2.1,4.3,3.1,5.1,4.2),\\
    (w_1,\ldots,w_{11})={}&(0.005,0.005,0.006,0.010,0.010,0.030,0.010,0.010,0.005,0.008,0.005).
\end{align*}
The Doppler and HeaviSine functions were
\begin{align*}
    f_{\mathrm{Doppler}}(x)&=\sqrt{x(1-x)}\sin\left\{\frac{2\pi(1+0.05)}{x+0.05}\right\},\\
    f_{\mathrm{HeaviSine}}(x)&=4\sin(4\pi x)-\operatorname{sgn}(x-0.3)-\operatorname{sgn}(0.72-x).
\end{align*}

For each function, the observations were generated at \(x_i=(i-1)/(n-1)\), with \(n\in\{512,1024,2048\}\), according to
\begin{equation*}
    y_i^{(r)}=f(x_i)+\varepsilon_i^{(r)},
    \qquad \varepsilon_i^{(r)}\stackrel{\mathrm{iid}}{\sim}\mathcal{N}(0,\sigma^2),
    \qquad i=1,\ldots,n,
    \label{eq:simulation-data}
\end{equation*}
where \(r=1,\ldots,R\) indexes the \(R=100\) independent replications.  The primary grid used \(\rho\in\{0.2,1,3\}\) for all three sample sizes.  To address the sensitivity of conclusions to the very low-SNR setting, the expanded primary grid additionally used \(\rho\in\{0.5,2,5\}\) at \(n=1024\).  For every signal and sample size, the noise scale was set to
\begin{equation*}
    \sigma=\frac{\operatorname{SD}\{f(x_i):i=1,\ldots,n\}}{\rho}.
    \label{eq:snr-definition}
\end{equation*}
Thus, \(\rho\) is an amplitude or standard-deviation SNR, not a power ratio.  The corresponding power SNR is \(\rho^2\), and the decibel representation is \(20\log_{10}(\rho)\).  For example, \(\rho=0.2\) corresponds to a power SNR of \(0.04\) and \(-13.98\) dB, whereas \(\rho=5\) corresponds to a power SNR of \(25\) and \(13.98\) dB.  The value \(0.2\) was retained as a deliberately difficult stress test, while the additional values \(0.5\), \(2\), and \(5\) provide a wider range of more moderate and favorable conditions.  A common random-number construction was used within every scenario: one noise vector was generated for the scenario and passed to all fitted methods.

\subsection{Methods and implementation}
The primary method was the adaptive Wendland--semicircle mixture with a Gaussian coefficientwise likelihood, denoted WS--Gaussian or WS--G.  The same prior fitted with a Laplace working likelihood, denoted WS--Laplace or WS--L, was included as a robustness and likelihood-sensitivity comparison.  The benchmark methods were universal soft thresholding (Univ) \citep{DonohoJohnstone1994}, false-discovery-rate thresholding (FDR) \citep{AbramovichBenjamini1996}, cross-validation thresholding (CV) \citep{Nason1996}, Stein's unbiased risk estimate thresholding (SURE) \citep{DonohoJohnstone1995}, the Bayesian adaptive multiresolution shrinker (BAMS) \citep{VidakovicRuggeri2001}, and the nonlocal-prior method (NLP) \citep{Sanyal2025}.  The comparison was restricted to methods with working software implementations in the analysis environment.

All methods used a Daubechies extremal-phase wavelet with filter number 10, periodic boundary handling, and \(J_0=0\), with the alternative basis and \(J_0\) checks reported in Supplementary Section~2.6.  The proposed fits used \(\ell=2\), \(\gamma=2.4\), \(\varkappa=0.99\), and the quadrature order selected by the independent quadrature study, \(q=128\).  The BAMS calculation used 32 quadrature points.  The finite-sum Wendland calculation was used where available under the Laplace working likelihood, and the specified quadrature order was used for the remaining component calculations.  The four classical procedures used soft thresholding.  NLP was fitted with the \texttt{NLPwavelet} implementation using its nonlocal-prior mixture, generalized-logit mixture probabilities, double-exponential scale specification, and default values for other settings.  The fixed hyperparameters and their perturbations are examined in Supplementary Section~2.5.

\subsection{Performance measures}
Let \(\widehat f_i^{(r)}\) denote the reconstruction in replication \(r\).  We recorded the replication-specific mean squared error (MSE), mean absolute error (MAE), maximum absolute error, truth-based output SNR, and elapsed fitting time.  In addition to the raw MSE, we used the normalized MSE
\begin{equation*}
    \operatorname{NMSE}^{(r)}=\frac{\operatorname{MSE}^{(r)}}{\operatorname{Var}\{f(x_i):i=1,\ldots,n\}},
    \label{eq:simulation-nmse}
\end{equation*}
which permits a more interpretable descriptive comparison across test functions with different scales.  The truth-based output SNR was
\begin{equation*}
    \operatorname{SNR}_{\mathrm{out}}^{(r)}=\frac{\operatorname{SD}\{f(x_i):i=1,\ldots,n\}}{\operatorname{SD}\{\widehat f_i^{(r)}-f(x_i):i=1,\ldots,n\}}.
    \label{eq:simulation-output-snr}
\end{equation*}
For a competitor \(m\), the paired MSE and NMSE differences were defined as
\begin{equation*}
    \Delta_{m}^{\mathrm{MSE},(r)}=\operatorname{MSE}_{\mathrm{WS-G}}^{(r)}-\operatorname{MSE}_{m}^{(r)},
    \qquad
    \Delta_{m}^{\mathrm{NMSE},(r)}=\operatorname{NMSE}_{\mathrm{WS-G}}^{(r)}-\operatorname{NMSE}_{m}^{(r)}.
    \label{eq:simulation-paired-difference}
\end{equation*}
The reporting script summarizes each mean with its Monte Carlo standard error and a replicate-level 95\% \(t\)-interval.  Negative paired differences favor WS--G, whereas a paired win rate is the proportion of common replications for which WS--G has the smaller error.  Study-level fit counts and numerical warnings are reported in Supplementary Table~3.

\subsection{Primary Gaussian-error results}
\label{sec:simulation-results}
The original 36-cell Gaussian-error grid contains four signals, three sample sizes, and three SNR values.  Table~\ref{tab:simulation-mse} reports the complete cellwise MSE comparison; entries in parentheses are Monte Carlo standard errors.  The WS--G fit was the best non-NLP method in 24 of the 36 original cells and in 11 of the 12 low-SNR cells (the four signals at each of the three sample sizes). The cellwise results also make explicit how the relative standing of the proposed fits and the benchmark methods changes with signal structure, sample size, and noise level.

\begin{landscape}
\begin{table}[p]
    \centering
    \caption{Mean MSE across 100 independent replications for the original Gaussian-error grid.  Each entry is the mean MSE with its Monte Carlo standard error in parentheses; the smallest MSE in each cell is shown in bold.}
    \label{tab:simulation-mse}
    \scriptsize
    \renewcommand{\arraystretch}{0.92}
    \resizebox{0.85\linewidth}{!}{%
    \begin{tabular}{llrrrrrrrr}
        \toprule
        Signal & SNR & WS--G & WS--L & Univ & FDR & CV & SURE & BAMS & NLP \\
        \midrule
        \multicolumn{10}{c}{\(n=512\)} \\
                \midrule
        Bumps & 0.2 & 0.446 (0.005) & 0.444 (0.003) & 0.450 (0.003) & 0.451 (0.003) & 0.456 (0.004) & 0.554 (0.026) & \textbf{0.441 (0.003)} & 0.627 (0.023) \\
        Bumps & 1.0 & 0.227 (0.003) & 0.357 (0.002) & 0.370 (0.002) & 0.319 (0.003) & 0.180 (0.002) & 0.187 (0.003) & 0.382 (0.002) & \textbf{0.133 (0.001)} \\
        Bumps & 3.0 & 0.049 (0.001) & 0.148 (0.002) & 0.179 (0.001) & 0.091 (0.001) & 0.051 (0.000) & 0.038 (0.001) & 0.234 (0.002) & \textbf{0.026 (0.000)} \\
        Blocks & 0.2 & \textbf{3.305 (0.053)} & 3.610 (0.030) & 3.770 (0.027) & 3.822 (0.024) & 3.716 (0.042) & 4.445 (0.146) & 3.558 (0.028) & 3.946 (0.213) \\
        Blocks & 1.0 & 0.898 (0.012) & 1.439 (0.010) & 1.667 (0.014) & 1.554 (0.018) & 0.871 (0.009) & 0.933 (0.013) & 1.514 (0.010) & \textbf{0.582 (0.007)} \\
        Blocks & 3.0 & 0.307 (0.003) & 0.524 (0.005) & 0.715 (0.005) & 0.481 (0.006) & 0.217 (0.002) & 0.241 (0.004) & 0.706 (0.008) & \textbf{0.172 (0.002)} \\
        Doppler & 0.2 & \textbf{0.077 (0.001)} & 0.081 (0.001) & 0.087 (0.001) & 0.087 (0.001) & 0.087 (0.001) & 0.108 (0.004) & 0.081 (0.001) & 0.088 (0.004) \\
        Doppler & 1.0 & 0.015 (0.000) & 0.031 (0.001) & 0.038 (0.000) & 0.033 (0.001) & 0.017 (0.000) & 0.018 (0.000) & 0.050 (0.001) & \textbf{0.011 (0.000)} \\
        Doppler & 3.0 & 0.003 (0.000) & 0.008 (0.000) & 0.011 (0.000) & 0.007 (0.000) & 0.003 (0.000) & 0.003 (0.000) & 0.017 (0.000) & \textbf{0.002 (0.000)} \\
        HeaviSine & 0.2 & \textbf{4.798 (0.256)} & 7.060 (0.161) & 8.072 (0.162) & 9.080 (0.093) & 7.095 (0.231) & 10.096 (0.829) & 7.784 (0.116) & 5.310 (0.314) \\
        HeaviSine & 1.0 & 0.475 (0.017) & 0.904 (0.013) & 1.403 (0.023) & 1.746 (0.074) & 0.791 (0.018) & 0.851 (0.022) & 0.927 (0.018) & \textbf{0.323 (0.017)} \\
        HeaviSine & 3.0 & 0.111 (0.002) & 0.158 (0.002) & 0.319 (0.004) & 0.362 (0.015) & 0.153 (0.002) & 0.163 (0.003) & 0.312 (0.005) & \textbf{0.075 (0.002)} \\
                \addlinespace
        \multicolumn{10}{c}{\(n=1024\)} \\
                \midrule
        Bumps & 0.2 & \textbf{0.431 (0.003)} & 0.447 (0.001) & 0.456 (0.001) & 0.456 (0.001) & 0.453 (0.002) & 0.485 (0.007) & 0.444 (0.001) & 0.460 (0.010) \\
        Bumps & 1.0 & 0.151 (0.001) & 0.301 (0.002) & 0.313 (0.002) & 0.248 (0.002) & 0.130 (0.001) & 0.140 (0.001) & 0.350 (0.001) & \textbf{0.092 (0.001)} \\
        Bumps & 3.0 & 0.029 (0.000) & 0.083 (0.001) & 0.108 (0.001) & 0.058 (0.001) & 0.025 (0.000) & 0.025 (0.000) & 0.172 (0.002) & \textbf{0.017 (0.000)} \\
        Blocks & 0.2 & 2.604 (0.053) & 3.334 (0.028) & 3.574 (0.027) & 3.707 (0.018) & 3.311 (0.048) & 3.769 (0.116) & 3.354 (0.024) & \textbf{2.497 (0.111)} \\
        Blocks & 1.0 & 0.619 (0.006) & 1.147 (0.010) & 1.253 (0.010) & 1.147 (0.014) & 0.673 (0.005) & 0.701 (0.006) & 1.331 (0.007) & \textbf{0.420 (0.004)} \\
        Blocks & 3.0 & 0.209 (0.002) & 0.409 (0.002) & 0.516 (0.003) & 0.351 (0.003) & 0.159 (0.001) & 0.172 (0.002) & 0.520 (0.003) & \textbf{0.122 (0.001)} \\
        Doppler & 0.2 & 0.068 (0.001) & 0.078 (0.000) & 0.084 (0.000) & 0.085 (0.000) & 0.081 (0.001) & 0.089 (0.002) & 0.079 (0.000) & \textbf{0.054 (0.002)} \\
        Doppler & 1.0 & 0.010 (0.000) & 0.020 (0.000) & 0.027 (0.000) & 0.024 (0.000) & 0.012 (0.000) & 0.013 (0.000) & 0.032 (0.000) & \textbf{0.007 (0.000)} \\
        Doppler & 3.0 & 0.002 (0.000) & 0.004 (0.000) & 0.006 (0.000) & 0.005 (0.000) & 0.002 (0.000) & 0.002 (0.000) & 0.013 (0.000) & \textbf{0.001 (0.000)} \\
        HeaviSine & 0.2 & \textbf{2.619 (0.090)} & 4.915 (0.141) & 6.072 (0.142) & 8.447 (0.122) & 4.840 (0.141) & 5.788 (0.246) & 6.422 (0.151) & 2.993 (0.233) \\
        HeaviSine & 1.0 & 0.253 (0.007) & 0.603 (0.015) & 0.961 (0.016) & 1.120 (0.028) & 0.519 (0.010) & 0.553 (0.012) & 0.707 (0.015) & \textbf{0.204 (0.008)} \\
        HeaviSine & 3.0 & 0.080 (0.001) & 0.118 (0.001) & 0.211 (0.003) & 0.230 (0.016) & 0.109 (0.001) & 0.114 (0.002) & 0.281 (0.004) & \textbf{0.052 (0.001)} \\
                \addlinespace
        \multicolumn{10}{c}{\(n=2048\)} \\
                \midrule
        Bumps & 0.2 & 0.406 (0.002) & 0.434 (0.001) & 0.446 (0.001) & 0.448 (0.001) & 0.441 (0.002) & 0.466 (0.006) & 0.434 (0.001) & \textbf{0.346 (0.007)} \\
        Bumps & 1.0 & 0.092 (0.001) & 0.207 (0.002) & 0.225 (0.001) & 0.174 (0.001) & 0.092 (0.001) & 0.096 (0.001) & 0.288 (0.001) & \textbf{0.058 (0.001)} \\
        Bumps & 3.0 & 0.017 (0.000) & 0.050 (0.000) & 0.065 (0.000) & 0.037 (0.000) & 0.016 (0.000) & 0.017 (0.000) & 0.107 (0.000) & \textbf{0.010 (0.000)} \\
        Blocks & 0.2 & 1.904 (0.034) & 2.967 (0.030) & 3.246 (0.032) & 3.487 (0.027) & 2.720 (0.045) & 2.898 (0.057) & 3.090 (0.027) & \textbf{1.739 (0.050)} \\
        Blocks & 1.0 & 0.472 (0.003) & 0.789 (0.007) & 0.960 (0.006) & 0.850 (0.007) & 0.520 (0.004) & 0.535 (0.004) & 1.020 (0.009) & \textbf{0.298 (0.003)} \\
        Blocks & 3.0 & 0.129 (0.001) & 0.305 (0.001) & 0.364 (0.002) & 0.246 (0.002) & 0.114 (0.001) & 0.120 (0.001) & 0.413 (0.002) & \textbf{0.077 (0.001)} \\
        Doppler & 0.2 & 0.052 (0.001) & 0.072 (0.000) & 0.081 (0.000) & 0.083 (0.000) & 0.071 (0.001) & 0.074 (0.001) & 0.077 (0.000) & \textbf{0.033 (0.001)} \\
        Doppler & 1.0 & 0.006 (0.000) & 0.013 (0.000) & 0.018 (0.000) & 0.016 (0.000) & 0.008 (0.000) & 0.008 (0.000) & 0.023 (0.000) & \textbf{0.004 (0.000)} \\
        Doppler & 3.0 & 0.001 (0.000) & 0.003 (0.000) & 0.004 (0.000) & 0.003 (0.000) & 0.001 (0.000) & 0.001 (0.000) & 0.011 (0.000) & \textbf{0.001 (0.000)} \\
        HeaviSine & 0.2 & 1.654 (0.053) & 2.891 (0.092) & 4.499 (0.093) & 6.986 (0.188) & 3.313 (0.085) & 3.780 (0.111) & 4.812 (0.177) & \textbf{1.516 (0.104)} \\
        HeaviSine & 1.0 & 0.158 (0.003) & 0.273 (0.005) & 0.609 (0.008) & 0.717 (0.019) & 0.342 (0.005) & 0.366 (0.007) & 0.469 (0.009) & \textbf{0.138 (0.005)} \\
        HeaviSine & 3.0 & 0.051 (0.001) & 0.094 (0.001) & 0.145 (0.001) & 0.136 (0.002) & 0.075 (0.001) & 0.078 (0.001) & 0.287 (0.003) & \textbf{0.034 (0.000)} \\
        \bottomrule
    \end{tabular}
    }
\end{table}
\end{landscape}

Figure~\ref{fig:simulation-nmse-snr} displays normalized-error profiles over the expanded SNR range at \(n=1024\).  The additional SNR values test whether the low-SNR ranking persists away from \(\rho=0.2\), while the normalization prevents the large-scale Blocks and HeaviSine signals from dominating the visual comparison.  The figure is descriptive; the detailed cellwise MSE values remain the primary accuracy comparison.

\begin{figure}[htbp]
    \centering
    \includegraphics[width=\linewidth]{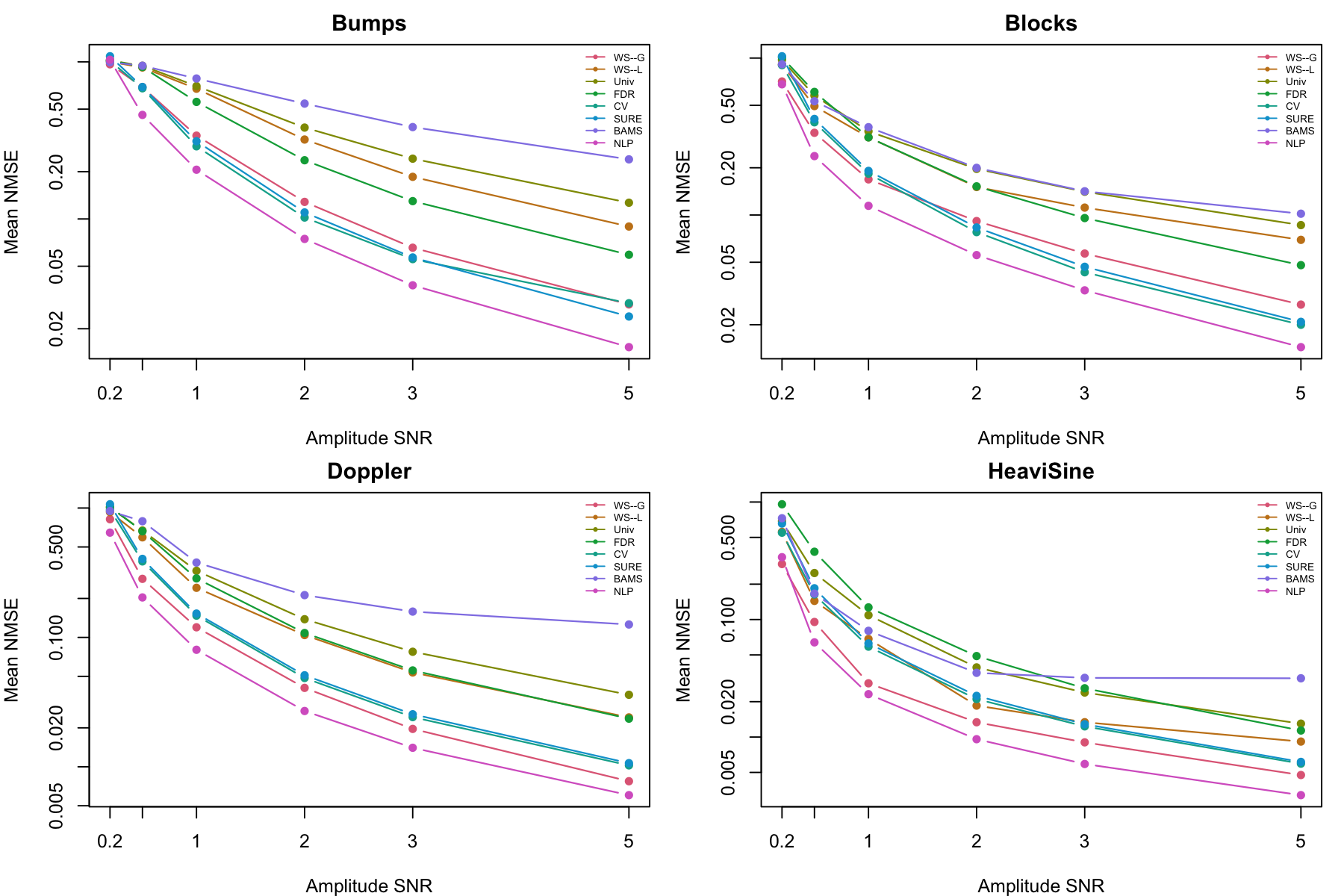}
    \caption{Mean NMSE as a function of amplitude SNR at \(n=1024\) over the expanded primary Gaussian-error grid.  Panels correspond to Bumps, Blocks, Doppler, and HeaviSine; each point averages 100 common-random-number replications, and lower values indicate better reconstruction accuracy.}
    \label{fig:simulation-nmse-snr}
\end{figure}

{\color{red}
Table~\ref{tab:simulation-overall} summarizes the primary Gaussian-error simulations based on the four deterministic Donoho--Johnstone test functions, averaged over the 36 signal--sample-size--SNR cells with 100 common-random-number replications per cell; MSE and NMSE are computed against the known generating functions.}  
Because the test functions have different scales, these pooled averages should not be interpreted as a universal method ranking; normalized and cellwise summaries are more informative.  NLP attains the smallest pooled mean MSE and NMSE, whereas WS--G is the best non-NLP method in 24 of the 36 cells.  WS--G has mean MSE 0.631 and mean NMSE 0.306, compared with mean MSE 0.624 and mean NMSE  0.287 for NLP.
This accuracy advantage for NLP comes at a substantial computational cost: its mean fitting time is \(116.442~\mathrm{s}\) per replicate, compared with \(0.281~\mathrm{s}\) for WS--G.  Thus, the table indicates an accuracy--computational-cost trade-off rather than a universal pooled-MSE ordering.

\begin{table}[p]
    \centering
    \caption{Descriptive averages over the original 36 Gaussian-error design cells.  NMSE is normalized by the variance of the corresponding true signal; runtime is the mean elapsed time in seconds per fitted replicate; best values for each numerical measure are shown in bold.}
    \label{tab:simulation-overall}
    \scriptsize
    \renewcommand{\arraystretch}{0.92}
    \resizebox{\linewidth}{!}{%
    \begin{tabular}{lrrrrrr}
        \toprule
        Method & Mean MSE & Mean NMSE & Mean MAE & Mean output SNR & Runtime (s) & Best non-NLP cells \\
        \midrule
        WS--G & 0.631 & 0.306 & 0.435 & 3.807 & 0.281 & 24 \\
        WS--L & 0.953 & 0.423 & 0.557 & 2.714 & 1.816 & 0 \\
        Univ & 1.148 & 0.476 & 0.631 & 2.268 & 0.003 & 0 \\
        FDR & 1.308 & 0.468 & 0.649 & 2.433 & 0.003 & 0 \\
        CV & 0.879 & 0.358 & 0.511 & 3.381 & 0.011 & 10 \\
        SURE & 1.054 & 0.406 & 0.529 & 3.302 & \textbf{0.003} & 1 \\
        BAMS & 1.132 & 0.495 & 0.626 & 2.077 & 0.012 & 1 \\
        NLP & \textbf{0.624} & \textbf{0.287} & \textbf{0.408} & \textbf{4.593} & 116.442 & -- \\
        \bottomrule
    \end{tabular}
    }
\end{table}

\subsubsection{What is gained by the adaptive mixture?}
\label{sec:simulation-ablation}
The central ablation compares four Gaussian WS specifications: Wendland-only (\(\omega_j=1\)), Semicircle-only (\(\omega_j=0\)), a constant mixture with only the intercept estimated, and the proposed adaptive mixture with both the intercept and resolution trend estimated.  The comparison uses the component/adaptivity study over all three sample sizes, all three original SNR values, and all four signals.  Table~\ref{tab:simulation-ablation} reports normalized and raw error summaries together with cellwise wins and paired differences.  The adaptive fit was the lowest-NMSE specification in 8 of the 36 ablation cells, whereas the Semicircle-only endpoint was lowest in 26 cells.  Its paired win rate against the constant mixture was 72.6\%.  Thus, the ablation does not indicate a uniform accuracy gain from resolution-dependent mixing; rather, it shows that the adaptive mixture is a flexible specification whose benefit depends on the signal and noise level.

\begin{table}[p]
    \centering
    \caption{Gaussian WS component and adaptivity ablation over the 36 original signal--sample-size--SNR cells.  The paired difference is adaptive WS--G minus the indicated specification; negative values favor the adaptive mixture.}
    \label{tab:simulation-ablation}
    \scriptsize
    \renewcommand{\arraystretch}{0.92}
    \resizebox{\linewidth}{!}{%
    \begin{tabular}{lrrrcc}
        \toprule
        Specification & Mean NMSE & Mean MSE & Lowest-NMSE cells & Paired NMSE difference [95\% CI] & Adaptive win rate (\%) \\
        \midrule
        WS--G & 0.306 & 0.631 & 8 & -- & -- \\
        WS--G constant & 0.306 & 0.632 & 1 & -0.000224 [-0.000344, -0.000104] & 72.6 \\
        Wendland-only & 0.349 & 0.766 & 1 & -0.042523 [-0.044524, -0.040521] & 97.1 \\
        Semicircle-only & \textbf{0.306} & \textbf{0.631} & 26 & 0.000284 [0.000205, 0.000363] & 25.7 \\
        \bottomrule
    \end{tabular}
    }
\end{table}

Figure~\ref{fig:simulation-omega-profiles} shows the fitted adaptive Wendland weight profiles at \(n=1024\) and \(\rho=0.2\).  The solid curve is the replicate median and the shaded region is the interquartile range.  Although the weights are positive, they are extremely close to zero in this slice: all median weights are below \(2.4\times10^{-5}\), and all upper-quartile values are below \(2.3\times10^{-4}\).  Thus, the adaptive fit is close to the Semicircle-only endpoint \(\omega_j=0\); the figure should be interpreted as showing weak data-driven variation around that endpoint, not substantial resolution-varying use of both components.

\begin{figure}[htbp]
    \centering
    \includegraphics[width=\linewidth]{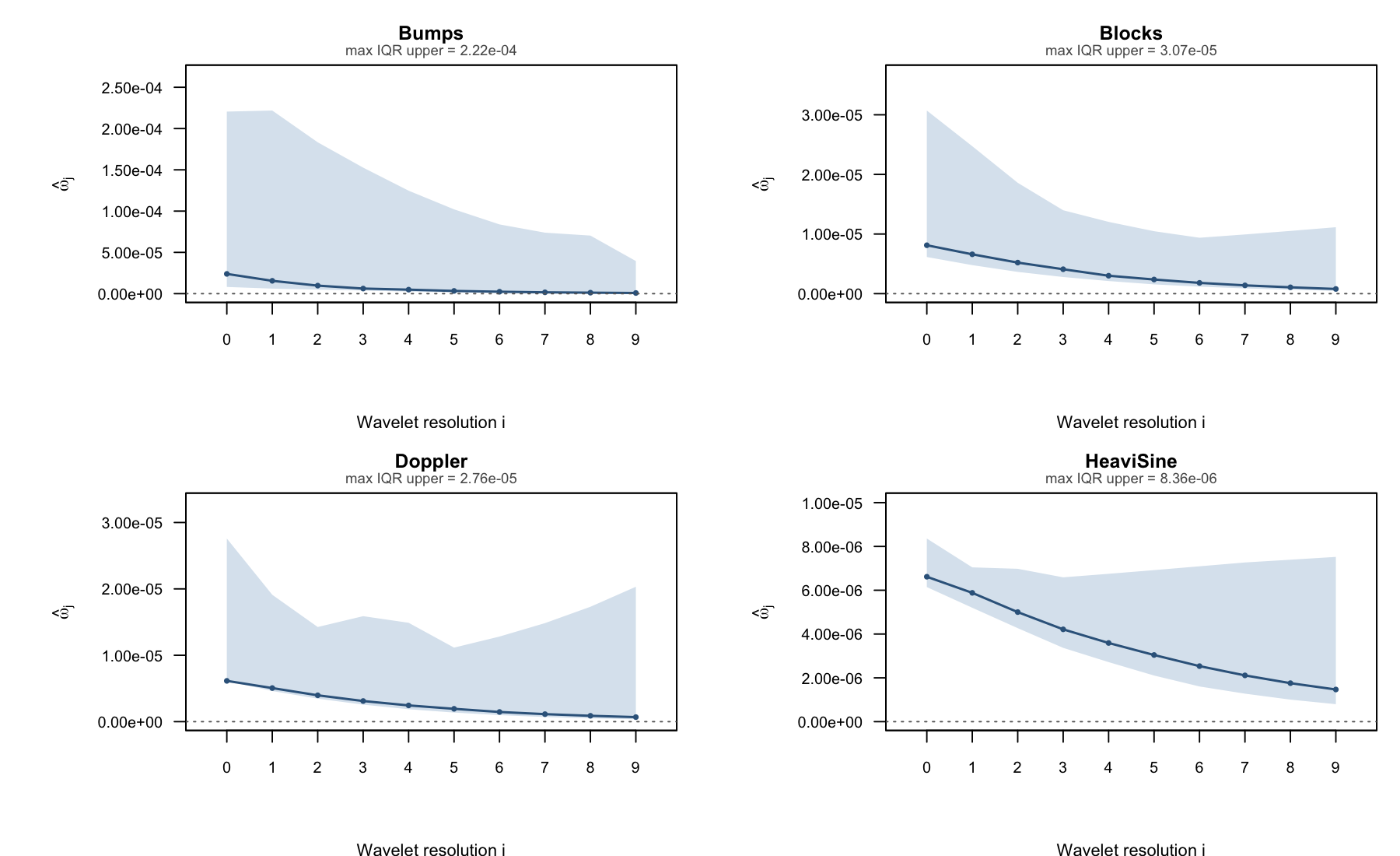}
    \caption{Fitted Wendland mixture-weight profiles \(\widehat\omega_j\) for the adaptive WS--Gaussian fit at \(n=1024\) and amplitude SNR \(\rho=0.2\).  Panels correspond to the four test signals.  Curves show replicate medians and ribbons show interquartile ranges.  Signal-specific zoomed vertical axes display the small positive weights; the dotted horizontal line marks the Semicircle-only endpoint \(\omega_j=0\), while the Wendland-only endpoint \(\omega_j=1\) lies outside the plotted ranges.}
    \label{fig:simulation-omega-profiles}
\end{figure}

\subsubsection{Paired and normalized comparisons}
At \(n=1024\) and \(\rho=0.2\), Table~\ref{tab:simulation-paired} reports paired MSE and NMSE differences pooled over the four signals and 100 common replications per signal.  The interval for each difference uses the empirical standard deviation of the paired replicate differences, rather than separate intervals for the two method means.  The 95\% intervals for the WS--G versus NLP comparison include zero for both MSE and NMSE, and the WS--G win rate is 38.5\%; therefore, this low-SNR slice does not establish an advantage for WS--G over NLP.  The negative differences and win rates for the other comparisons indicate where WS--G has a reproducible low-SNR advantage and where competing methods remain competitive.  Figure~\ref{fig:simulation-paired-differences} gives the corresponding MSE-difference intervals on a common scale.

\begin{table}[p]
    \centering
    \caption{Paired comparison with WS--Gaussian at (n=1024) and amplitude SNR (0.2), pooled over the four signals and 100 common-random-number replications per signal.  Negative differences favor WS--Gaussian.}
    \label{tab:simulation-paired}
    \scriptsize
    \renewcommand{\arraystretch}{0.92}
    \resizebox{\linewidth}{!}{%
    \begin{tabular}{lccc}
        \toprule
        Competitor & MSE difference [95\% CI] & NMSE difference [95\% CI] & WS--G win rate (\%) \\
        \midrule
        WS--L & -0.763 [-0.869, -0.657] & -0.151 [-0.163, -0.139] & 94.2 \\
        Univ & -1.116 [-1.267, -0.965] & -0.225 [-0.241, -0.209] & 98.0 \\
        FDR & -1.743 [-1.990, -1.496] & -0.305 [-0.329, -0.280] & 98.8 \\
        CV & -0.741 [-0.841, -0.640] & -0.162 [-0.172, -0.151] & 97.5 \\
        SURE & -1.102 [-1.277, -0.927] & -0.262 [-0.287, -0.237] & 98.2 \\
        BAMS & -1.144 [-1.320, -0.969] & -0.198 [-0.217, -0.180] & 93.2 \\
        NLP & -0.071 [-0.205, 0.064] & 0.024 [-0.006, 0.053] & 38.5 \\
        \bottomrule
    \end{tabular}
    }
\end{table}

\begin{figure}[htbp]
    \centering
    \includegraphics[width=0.92\linewidth]{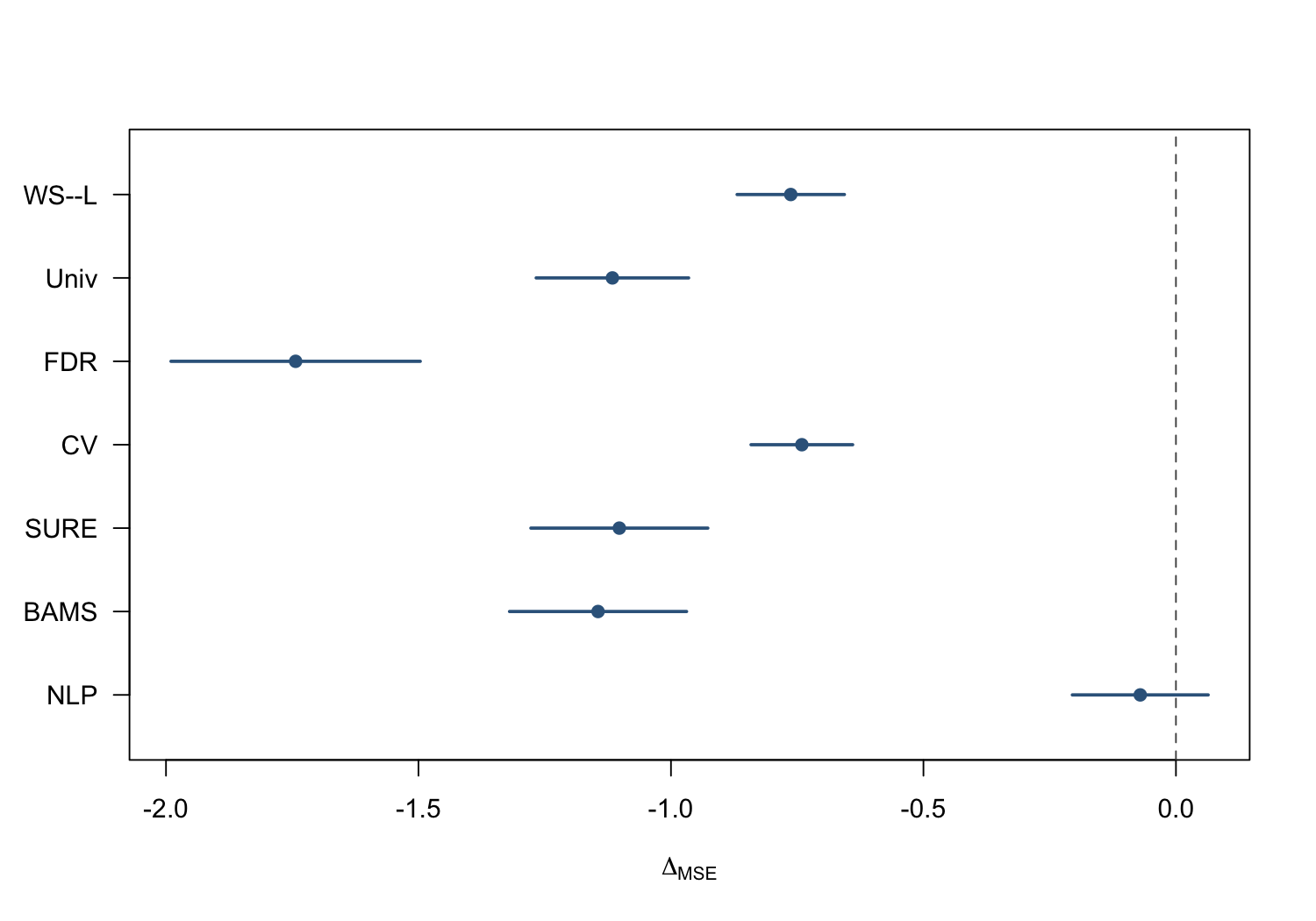}
    \caption{Paired mean MSE differences at \(n=1024\) and amplitude SNR \(\rho=0.2\), pooled over the four test signals.  Points are mean differences \(\operatorname{MSE}_{\mathrm{WS-G}}-\operatorname{MSE}_{m}\), and horizontal intervals are replicate-level 95\% \(t\)-intervals.  Negative values favor WS--Gaussian.}
    \label{fig:simulation-paired-differences}
\end{figure}

\begin{figure}[htbp]
    \centering
    \includegraphics[width=\linewidth]{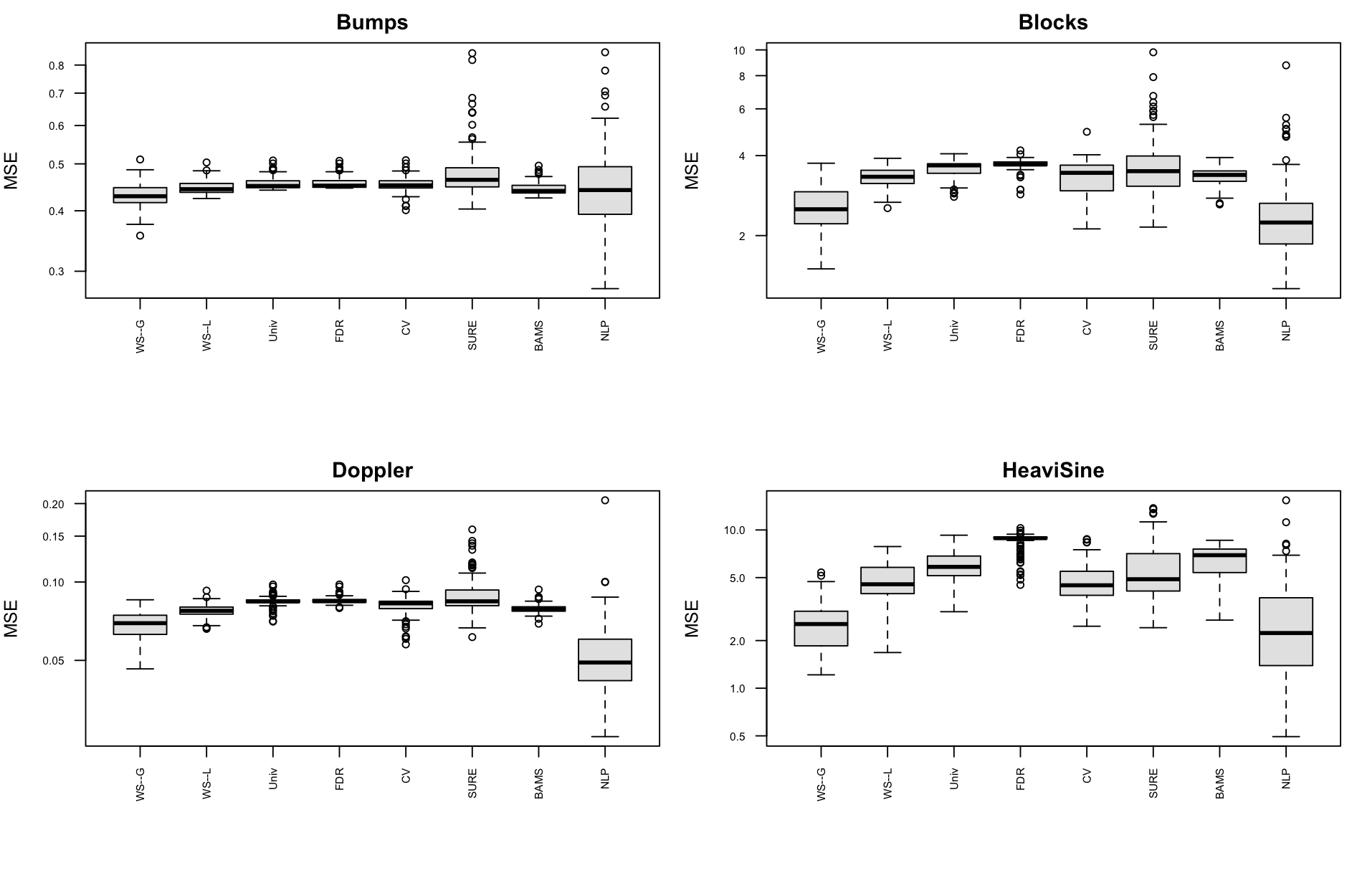}
    \caption{Replicate-specific MSE distributions at \(n=1024\) and amplitude SNR \(\rho=0.2\), shown separately for the four test signals.  Each box summarizes the 100 common-random-number replications for each method.}
    \label{fig:simulation-mse-boxplots}
\end{figure}

Figure~\ref{fig:simulation-mse-boxplots} shows the replicate-level MSE distributions and the spread and outliers underlying the paired summaries in Table~\ref{tab:simulation-paired}.

\subsection{Interpretation}
\label{sec:simulation-interpretation}
The primary simulation supports a qualified conclusion.  At the representative low-SNR slice, the pooled MSE leader is WS-G, while the full cellwise table shows which signals and resolutions drive that summary.  WS--G is the best non-NLP method in 24 of the original cells, and its performance should therefore be interpreted as a cellwise advantage in the difficult settings where it occurs rather than as uniform dominance.  The ablation and fitted-weight results show that the model provides two alternative compact-support slab shapes and permits resolution-dependent weighting, but the fitted weights in the examined low-SNR slice were close to the Semicircle-only endpoint; thus, the present results do not demonstrate a substantial accuracy gain from resolution-dependent mixing.

The supplementary analyses quantify how sensitive this interpretation is to numerical and tuning choices.  The quadrature study selects the smallest order whose reconstruction, component-moment, and MSE changes remain below prespecified tolerances.  The hyperparameter and support analyses assess the fixed sparsity schedule, empirical support quantile, and possible saturation of large coefficients.  The Laplace-error experiment evaluates a different data-generating mechanism and therefore is a sensitivity analysis rather than an estimate of the Gaussian-error risk.  Finally, the semi-synthetic seismic experiment uses a processed trace as a known surrogate truth after adding controlled noise; it complements, but does not replace, the real-data illustration for which no noise-free reference waveform is available.

\section{Seismic trace data application}
\label{sec:real-data-seismic}

The simulation study evaluates denoising accuracy against a known signal. For a measured seismic record, however, a noise-free reference acceleration is not available.  We therefore use the real-data analysis as an illustration of the reconstructed waveform and compare the methods using signal-level diagnostics---residual variability, sample-to-sample residual roughness, retained squared-energy, preservation and timing of the largest absolute acceleration, and visual agreement with the observed trace.  These quantities describe how aggressively each method modifies the record. They should not be interpreted as out-of-sample accuracy measures in the absence of a trusted reference signal.

\subsection{Record and data extraction}
The record was obtained through the United States Geological Survey (USGS) Earthquake Data portal, which directs processed strong-motion waveform downloads to the Center for Engineering Strong Motion Data (CESMD). 
\footnote{USGS Earthquake Data: \url{https://www.usgs.gov/programs/earthquake-hazards/science/earthquake-data}; CESMD: \url{https://www.strongmotioncenter.org/}.} Specifically, we downloaded the 29 July 2008 Chino Hills earthquake record from station 24763, which corresponds to Sylmar--Olive View Hospital Grounds.  The station coordinates are \(34.327^\circ\mathrm{N}\) and \(118.444^\circ\mathrm{W}\), and the record identifier is 24763-K0186-08211.01.

The time-domain data were extracted from \texttt{CE24763.V2}, which contains the corrected accelerogram blocks.  The companion file \texttt{CE24763.V3} contains response and Fourier-amplitude spectra and was not used as the time-domain input.  In the main text, we analyze channel 1, the horizontal component labeled $360^\circ$. Results for channels 2 and 3, labeled \textit{Up} and $90^\circ$, respectively, are provided in the Supplementary Materials.

The source file reports instrument- and baseline-corrected acceleration and a bandpass with 3-dB points at \(0.30\) and \(40.00\) Hz.  We used the acceleration block supplied in that file without applying an additional filter, detrending operation, or unit conversion.  We used a fixed-width parser that read eight values per line using 10-character fields and retained the first \(N=15{,}400\) acceleration values. Let \(a_i\) denote the \(i\)th observed acceleration, measured in \(\mathrm{cm}\,\mathrm{s}^{-2}\), and let \(\Delta t=0.005\) s denote the sampling interval.  The corresponding sampling frequency is \(f_s=1/\Delta t=200\) Hz, and the observation times are \(t_i=(i-1)\Delta t\), \(i=1,\ldots,N\).  Thus, the record spans \(0\) to \(76.995\) s, with a nominal duration of \(77.000\) s.  The observed acceleration ranges from \(-44.27435\) to \(36.89636\) \(\mathrm{cm}\,\mathrm{s}^{-2}\); the largest observed absolute acceleration is \(44.27435\ \mathrm{cm}\,\mathrm{s}^{-2}\) at approximately \(36.05\) s. The data and common analysis settings are summarized in Table~\ref{tab:seismic-settings}.

\begin{table}[htbp]
    \centering
    \caption{Data and common analysis settings for the seismic application.}
    \label{tab:seismic-settings}
    \small
    \begin{tabular}{@{}p{0.31\linewidth}p{0.62\linewidth}@{}}
        \toprule
        Quantity & Specification \\
        \midrule
        Earthquake and date & Chino Hills earthquake, 29 July 2008 \\
        Station and record & Station 24763; record 24763-K0186-08211.01 \\
        Input file and component & \texttt{CE24763.V2}, channel 1, \(360^\circ\) horizontal component \\
        Original observations & \(N=15{,}400\) acceleration values \\
        Sampling interval and frequency & \(\Delta t=0.005\) s and \(f_s=200\) Hz \\
        Units and duration & \(\mathrm{cm}\,\mathrm{s}^{-2}\); 77.000 s nominal duration \\
        Source processing & Instrument- and baseline-corrected; 0.30--40.00 Hz bandpass reported by CESMD \\
        Wavelet input length & \(2^{14}=16{,}384\), obtained by appending 984 reverse-padded values \\
        Wavelet basis & Daubechies extremal-phase family, \texttt{DaubExPhase}, filter number 10 \\
        Boundary rule and coarsest level & Periodic boundary rule; \(J_0=0\) \\
        WS tuning settings & \(\beta\)-quantile \(=0.99\), spike offset \(=2\), spike exponent \(\gamma=2.4\), quadrature order \(=48\) \\
        Fitted methods & WS--Gaussian, WS--Laplace, Wendland-only, Semicircle-only, Univ, FDR, CV, and SURE \\
        \bottomrule
    \end{tabular}
\end{table}

\subsection{Padding and wavelet analysis}
Because the original record has length \(N=15{,}400\), which is not a power of two, the wavelet input was extended to \(\widetilde N=16{,}384=2^{14}\) observations. Thus, \(J=\log_2(\widetilde N)=14\), and the detail levels in the padded analysis are \(j=J_0,\ldots,13\). Specifically, 984 values were appended in reverse order from the end of the observed record. This reverse-padding rule avoids discarding observations and reduces the artificial endpoint jump that would result from simply appending zeros. The periodic boundary rule was then applied to the padded signal. After reconstruction, only the first \(N\) fitted values were retained, so all reported diagnostics refer to the original 77-second record.

All methods used the Daubechies extremal phase wavelet family with filter number 10, periodic boundary handling, and \(J_0=0\).  The proposed Wendland--semicircle model was fitted with a Gaussian likelihood (WS--Gaussian) and a Laplace likelihood (WS--Laplace).  For the two component-specific fits, Wendland-only fixes the resolution-specific Wendland mixture weight at \(\omega_j=1\), whereas Semicircle-only fixes it at \(\omega_j=0\), where \(\omega_j\) denotes the mixture weight assigned to the Wendland component at wavelet resolution \(j\). The support-scale quantile, spike prior offset, spike exponent, and quadrature settings were \(0.99\), \(2\), \(2.4\), and \(48\), respectively, as summarized in Table~\ref{tab:seismic-settings}. For comparison, Univ, FDR, CV, and SURE were used. These four classical rules used the same wavelet basis and boundary rule as the proposed method. 

\subsection{Real-data diagnostics}
Let \(\widehat a_{m,i}\) denote the reconstructed acceleration at time \(t_i\) under method \(m\), and define the residual \(e_{m,i}=a_i-\widehat a_{m,i}\).  The residual standard deviation is \(\operatorname{SD}(e_m)\), and the sample-to-sample residual roughness is \(\operatorname{SD}(\Delta e_m)\), where \(\Delta e_{m,i}=e_{m,i+1}-e_{m,i}\) for \(i=1,\ldots,N-1\).  The retained squared-energy ratio is
\[
    R_m =
    \frac{\sum_{i=1}^{N}\widehat a_{m,i}^{\,2}}
         {\sum_{i=1}^{N}a_i^{\,2}}.
\]
Finally, let \(A_m=\max_i|\widehat a_{m,i}|\) denote the reconstructed peak absolute acceleration and let \(T_m=t_{\widehat i_m}\), where \(\widehat i_m=\arg\max_i|\widehat a_{m,i}|\), denote its time.

Table~\ref{tab:seismic-diagnostics} reports the resulting diagnostics.  The WS--Gaussian reconstruction has residual standard deviation \(1.2357\) in the input units.  Its retained squared-energy ratio is \(0.7496\), and its reconstructed peak is \(37.4867\ \mathrm{cm}\,\mathrm{s}^{-2}\) at \(36.04\) s.  WS--Laplace is nearly identical, while the Wendland-only fit has slightly larger residual variability and residual roughness.  The Semicircle-only reconstruction is numerically almost indistinguishable from WS--Gaussian for this record.

\begin{table}[htbp]
    \centering
    \caption{Signal-level diagnostics for the Chino Hills channel-1
    accelerogram.  Residual SD and residual roughness are in
    \(\mathrm{cm}\,\mathrm{s}^{-2}\); retained energy is dimensionless; peak
    acceleration is in \(\mathrm{cm}\,\mathrm{s}^{-2}\); and peak time is in
    seconds.  The estimated noise scale is repeated because it was common to
    all fitted methods.}
    \label{tab:seismic-diagnostics}
    \scriptsize
    \resizebox{\linewidth}{!}{%
    \begin{tabular}{lrrrrrr}
        \toprule
        Method & \(\widehat\sigma\) & Residual SD &
        Residual roughness & Retained energy & Peak \(|a|\) & Peak time \\
        \midrule
        WS--Gaussian      & 0.001038 & 1.235737 & 0.126581 & 0.749592 & 37.486663 & 36.040 \\
        WS--Laplace       & 0.001038 & 1.235750 & 0.126616 & 0.749502 & 37.489605 & 36.040 \\
        Wendland-only     & 0.001038 & 1.259843 & 0.129444 & 0.750354 & 37.419648 & 36.040 \\
        Semicircle-only   & 0.001038 & 1.235738 & 0.126580 & 0.749643 & 37.486663 & 36.040 \\
        Univ               & 0.001038 & 0.011819 & 0.008193 & 0.998916 & 44.250075 & 36.050 \\
        FDR                & 0.001038 & 0.001816 & 0.001837 & 0.999871 & 44.269623 & 36.050 \\
        CV                 & 0.001038 & 0.012125 & 0.008371 & 0.998883 & 44.249556 & 36.050 \\
        SURE               & 0.001038 & 0.001704 & 0.001747 & 0.999880 & 44.269943 & 36.050 \\
        \bottomrule
    \end{tabular}}
\end{table}

The thresholding reconstructions have residual standard deviations ranging from \(0.0017\) to \(0.0121\) in the input units.  Their sample-to-sample roughness ranges from \(0.0017\) to \(0.0084\), and their retained-energy ratios exceed \(0.9988\).  Thus, they remain extremely close to the observed record under these diagnostics and preserve the observed peak of approximately \(44.27\ \mathrm{cm}\,\mathrm{s}^{-2}\) at \(36.05\) s. In contrast, the WS fits remove more high-frequency variation and attenuate the absolute peak to approximately \(37.42\)--\(37.49\) \(\mathrm{cm}\,\mathrm{s}^{-2}\), while preserving its timing to within 0.01 s.  Since no clean signal is available, the larger residual and lower retained energy for WS should be interpreted as stronger shrinkage, not automatically as either improvement or deterioration. Supplementary Table~13 and Supplementary Table~14 include similar results for channels 2 and 3, respectively.

\subsection{Waveform comparison}
The full-record comparison is shown in Figure~\ref{fig:seismic-full-record}. The observed acceleration is drawn in gray in every panel, while the colored curve is the reconstruction from the method named in that panel.  All methods retain the principal wave packet, its approximate arrival time, and the subsequent decaying oscillations.  The WS reconstructions visibly suppress some of the rapid oscillations and reduce the height of the largest extrema, whereas Univ, FDR, CV, and SURE remain nearly superimposed on the observed trace over most of the record. Supplementary Figure~9 and Supplementary Figure~10 present similar plots for channels 2 and 3, respectively.

\begin{figure}[htbp]
    \centering
    \includegraphics[width=\linewidth]{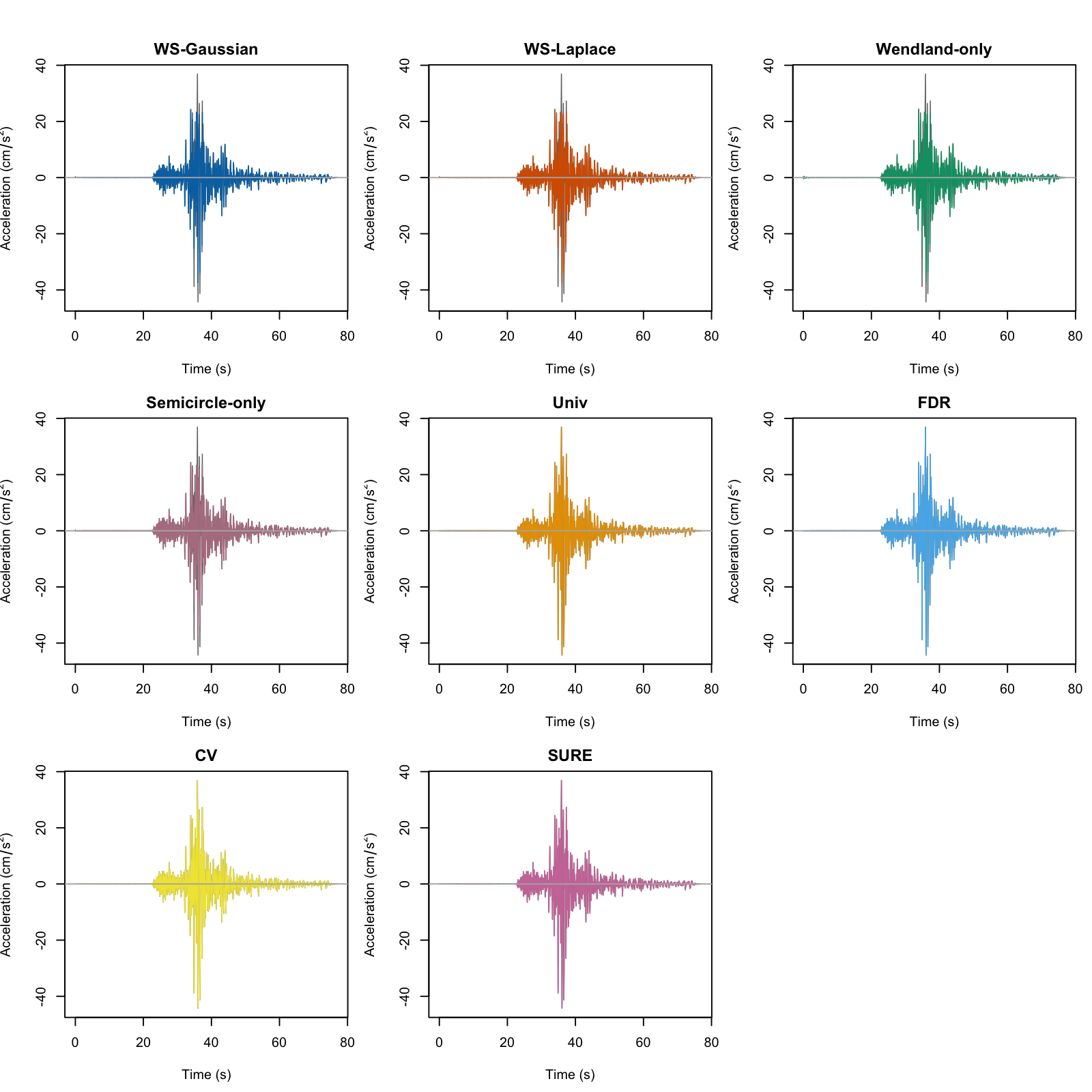}
    \caption{Full-record comparison for the 77-second Chino Hills channel-1
    accelerogram.  In each panel, the gray curve is the observed acceleration
    and the colored curve is the reconstruction from the indicated method.}
    \label{fig:seismic-full-record}
\end{figure}

Figure~\ref{fig:seismic-peak-zoom} provides a closer view of the interval
\(34\)--\(38\) s containing the largest observed absolute acceleration.  The
WS--Gaussian, WS--Laplace, Wendland-only, and Semicircle-only curves preserve the broad waveform while reducing the most extreme excursions. The thresholding curves preserve the peak more nearly exactly, consistent with the retained-energy and peak diagnostics in Table~\ref{tab:seismic-diagnostics}. The near overlap of WS--Gaussian and Semicircle-only in this window is also consistent with their nearly identical signal-level diagnostics. 

\begin{figure}[htbp]
    \centering
    \includegraphics[width=\linewidth]{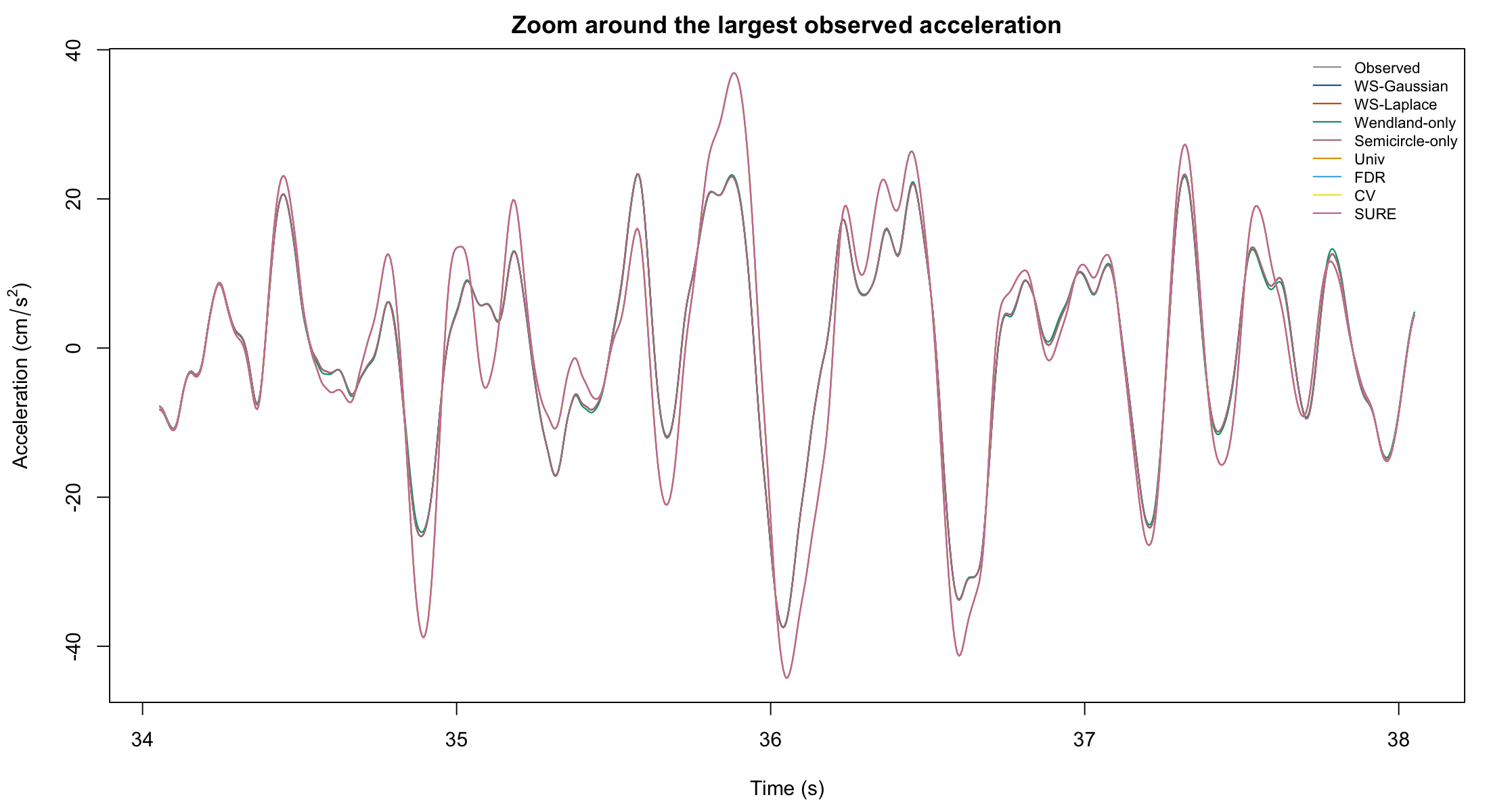}
    \caption{Zoomed comparison over \(34\)--\(38\) s around the largest
    observed absolute acceleration.  The observed trace and eight
    reconstructions are shown in \(\mathrm{cm}\,\mathrm{s}^{-2}\).}
    \label{fig:seismic-peak-zoom}
\end{figure}

\subsection{Interpretation and limitations}
This application demonstrates the operational behavior of the proposed shrinkage model on a processed strong-motion record. WS--Gaussian and WS--Laplace produce almost identical reconstructions for this example, and their output is much closer to the Semicircle-only fit than to the Wendland-only fit.  The proposed fits also provide visibly stronger attenuation of rapid fluctuations than the four classical thresholding rules, while preserving the timing of the dominant acceleration pulse.  These observations are useful for illustrating the model, but they do not establish that one method has lower denoising risk for real seismic data. 

Several limitations should be made explicit.  First, the record is a single horizontal component from one earthquake and one station; the results cannot be generalized to other magnitudes, distances, site conditions, components, or recording instruments.  Second, the CESMD waveform has already undergone instrument and baseline correction and source-reported bandpass filtering, so the analysis assesses shrinkage conditional on that preprocessing. Third, the periodic boundary rule and reverse-padding construction can affect the two endpoints, although the dominant acceleration pulse is well inside the record.  Fourth, the absence of a clean reference prevents direct calculation of MSE, MAE, or true signal-to-noise improvement. 

\section{Conclusion}
\label{sec:conclusion}
We proposed a Bayesian wavelet-shrinkage rule based on a spike at zero and an adaptive mixture of a compactly supported Wendland density and the semicircle density.  The Wendland component concentrates more strongly near zero and vanishes smoothly at the support boundary, whereas the semicircle component is more dispersed and can preserve moderate wavelet coefficients.  The resulting posterior-mean estimator is computable through two component-specific integrals, with explicit support control through \(\beta_j\) and adaptive shape selection through the resolution-specific mixture weight \(\omega_j\).

The simulation results indicate that the proposed WS--Gaussian method is competitive in low-SNR settings.  Its pooled mean MSE was \(0.631\), compared with \(0.624\) for NLP and \(0.879\) for CV, and it was the best-performing non-NLP method in 24 of the 36 design cells, including 11 of the 12 low-SNR cells.  At \(n=1024\) and \(\rho=0.2\), WS--Gaussian had lower MSE than each non-NLP competitor in at least \(93.2\%\) of paired replications.  NLP achieved the best overall reconstruction accuracy, particularly for Doppler and at moderate or favorable SNR, but WS--Gaussian provided a more favorable accuracy--runtime balance than both NLP and the Laplace-working-likelihood version.  Thus, the results do not support universal dominance of a single kernel.  Instead, they suggest that the usefulness of the Wendland--semicircle construction depends on SNR, signal structure, and computational constraints.

The seismic application provides a complementary illustration without a noise-free reference.  WS--Gaussian and WS--Laplace produced nearly identical reconstructions, and the WS--Gaussian output was numerically close to the Semicircle-only reconstruction.  Relative to the observed peak acceleration of approximately \(44.27\ \mathrm{cm}\,\mathrm{s}^{-2}\), the proposed fits reduced the peak to approximately \(37.49\ \mathrm{cm}\,\mathrm{s}^{-2}\) while preserving its timing to within \(0.01\) s.  The classical thresholding methods retained more than \(99.8\%\) of the observed squared energy and preserved the observed peak almost exactly.  These results illustrate the different shrinkage strengths of the methods, but, because the seismic record does not provide a trusted clean signal, they cannot establish which reconstruction has lower real-data risk.

The semi-synthetic seismic validation complements this real-data illustration by treating the processed channel-1 trace as a surrogate truth and adding independent Gaussian noise at amplitude SNRs \(0.5\), \(1\), \(2\), and \(5\).  The noisy-observation baseline had mean NMSEs of approximately \(3.995\), \(0.999\), \(0.249\), and \(0.040\), respectively, closely matching the expected \(1/\mathrm{SNR}^{2}\) pattern.  WS--Gaussian improved on this baseline at SNRs \(0.5\), \(1\), and \(2\), but not at SNR \(5\), where weak noise made denoising less beneficial.  The adaptive WS--Gaussian fit was nearly identical to the Semicircle-only fit, while WS--Laplace exhibited a substantial peak-timing error at SNR \(2\) (mean \(0.412\) s; 72\% of replicates exceeded \(0.1\) s).  These results support the operational use of the method under controlled surrogate-truth conditions, but they do not establish uniform superiority or real-data denoising risk.

The method is formulated for a one-dimensional signal observed on an equally spaced grid and analyzed with an orthogonal discrete wavelet transform.  The primary formulation assumes independent errors with a common variance, whereas the Laplace version is a coefficient-wise working likelihood used for sensitivity analysis; it does not assert that a time-domain Laplace error remains exactly Laplace and independent after the discrete wavelet transform.  For a non-dyadic record, a deterministic extension to a dyadic length is used, and the reconstructed values are truncated to the original observations before application-level diagnostics are calculated.  Conditional on the empirical-Bayes hyperparameters, detail coefficients are modeled independently with symmetric spike-and-slab priors at each resolution, while scaling coefficients are retained rather than shrunk.  Consequently, the coefficient-wise construction does not model dependence among neighboring or parent--child coefficients or across resolutions.

The level-specific support interval \([-\beta_j,\beta_j]\) is an explicit modeling and truncation choice.  When \(\beta_j\) is obtained from an empirical quantile, it is not a formal estimator or guarantee of the true support; coefficients outside the interval are necessarily subject to boundary saturation.  The method may therefore be less suitable when errors are strongly correlated or heteroscedastic, when the signal contains frequent large coefficients outside the fitted support, or when the selected wavelet basis and boundary rule inadequately represent the signal.  The results are consequently basis-, boundary-, hyperparameter-, and support-rule-dependent, and the present formulation does not directly cover two-dimensional images.

The present method estimates the shape trend through a low-dimensional empirical-Bayes parameterization.  A fully Bayesian extension could place a prior on \(\boldsymbol{\eta}\), propagate its uncertainty into the posterior mean, and provide posterior uncertainty intervals for \(\omega_j\) across resolutions.  The resolution trend permits, but does not guarantee, meaningful variation in \(\omega_j\), and numerical integration and bounded optimization introduce additional computational considerations.  Future work should investigate correlated observation errors, blockwise shrinkage, two-dimensional image wavelets, and applications to multiple seismic components and independent earthquake records.  Such studies should combine reconstruction measures with uncertainty quantification, fitted mixture weights, support-scale behavior, realistic noise models, and diagnostics based on known physical features of the observed signal.

\appendix

\section{Proof of Proposition \ref{prop:shrinkage-properties}}
\begin{proof}
For \(r\in\{0,1\}\), consider first the mixture-slab moment
\[
    M_{r,j}(d)=\int_{-\beta_j}^{\beta_j}\theta^r g_j(\theta;\omega_j,\beta_j)L(d\mid\theta)\,d\theta.
\]
Using the substitution \(\theta=-u\), the symmetry of \(g_j\), and the assumed symmetry of the likelihood gives
\[
    M_{0,j}(-d)
    =\int_{-\beta_j}^{\beta_j}g_j(u;\omega_j,\beta_j)L(d\mid u)\,du
    =M_{0,j}(d),
\]
whereas
\[
    M_{1,j}(-d)
    =-\int_{-\beta_j}^{\beta_j}u\,g_j(u;\omega_j,\beta_j)L(d\mid u)\,du
    =-M_{1,j}(d).
\]
The same argument applies separately to the Wendland and semicircle moments.  Moreover, \(L(-d\mid0)=L(d\mid0)\), so the denominator \(D_j(d)\) in \eqref{eq:posterior-denominator} is even.  Consequently, the posterior component probabilities are even functions of \(d\), the component means are odd functions of \(d\), and \eqref{eq:posterior-component-decomposition} implies \(\delta_j(-d)=-\delta_j(d)\).  Setting \(d=0\) yields \(\delta_j(0)=0\), proving (i) and (v).

The posterior distribution of \(\theta_{j,k}\) is supported on \([-\beta_j,\beta_j]\), because both the spike and the continuous slab are supported there.  Hence its posterior mean belongs to this interval, which proves the non-strict bound in (ii).  For finite \(d\), the posterior probability of the spike is strictly positive because \(\pi_j>0\) and \(L(d\mid0)>0\).  The spike is located at the interior point zero, so the posterior mean cannot equal either endpoint of the support.  Therefore \(|\delta_j(d)|<\beta_j\) for finite \(d\).

For the Gaussian likelihood, write \(t=d/\sigma^2\) and factor the likelihood as
\[
    L_N(d\mid\theta,\sigma^2)
    =\frac{1}{\sqrt{2\pi\sigma^2}}
      \exp\left\{-\frac{d^2}{2\sigma^2}\right\}
      \exp\left\{t\theta-\frac{\theta^2}{2\sigma^2}\right\}.
\]
After removing the common factor that does not depend on \(\theta\), the continuous part of the posterior is proportional to
\[
    q_j(\theta)e^{t\theta},
    \qquad
    q_j(\theta)=g_j(\theta;\omega_j,\beta_j)
    \exp\left\{-\frac{\theta^2}{2\sigma^2}\right\}.
\]
For every \(\varepsilon>0\), the assumption on the slab implies that the denominator has positive contribution from \((\beta_j-\varepsilon/2,\beta_j)\), whereas the contribution from \([-\beta_j,\beta_j-\varepsilon]\) is exponentially smaller as \(t\to\infty\).  More precisely,
\[
    \frac{\displaystyle\int_{-\beta_j}^{\beta_j-\varepsilon}q_j(\theta)e^{t\theta}\,d\theta}
    {\displaystyle\int_{\beta_j-\varepsilon/2}^{\beta_j}q_j(\theta)e^{t\theta}\,d\theta}
    \longrightarrow 0.
\]
The posterior mass of the spike at zero is also negligible relative to the contribution from the interval \((\beta_j-\varepsilon/2,\beta_j)\), because \(\beta_j>0\).  Thus, the full posterior mass converges to the upper endpoint in the sense that, for every \(\varepsilon>0\), its mass on \([-\beta_j,\beta_j-\varepsilon]\) tends to zero.  Since the posterior variable is bounded by \(\beta_j\), its posterior mean converges to \(\beta_j\).  The negative limit follows from part (i), proving (iii).

For the fixed-rate Laplace likelihood, if \(d\geq\beta_j\), then \(|d-\theta|=d-\theta\) throughout the support of the slab.  Therefore,
\[
    L_L(d\mid\theta,\lambda)
    =\frac{a}{2}e^{-ad}e^{a\theta},
    \qquad
    L_L(d\mid0,\lambda)=\frac{a}{2}e^{-ad}.
\]
Substitution into the numerator and denominator of \eqref{eq:general-shrinkage-rule}, followed by cancellation of the common factor \(a e^{-ad}/2\), gives the first limit in (iv).  If \(d\leq-\beta_j\), then \(|d-\theta|=\theta-d\), which yields the second limit.  Symmetry of \(g_j\) gives \(A_{0,j}^{-}=A_{0,j}^{+}\) and \(A_{1,j}^{-}=-A_{1,j}^{+}\), so the two limits are negatives of one another.
\end{proof}

\section{Useful identities for implementation}
\label{app:identities}

For \(a>0\), the first six elementary integrals needed for the Wendland component can be generated recursively from
\begin{equation*}
    \int e^{bt}\,dt=\frac{e^{bt}}{b},
    \qquad
    \int t^m e^{bt}\,dt
    =\frac{t^m e^{bt}}{b}-\frac{m}{b}\int t^{m-1}e^{bt}\,dt,
    \qquad b\ne0.
    \label{eq:integration-recursion}
\end{equation*}
The finite-sum expression in \eqref{eq:elementary-integral-closed} is preferable in software because it avoids recursive numerical integration. For the semicircle term, the substitution in \eqref{eq:semicircle-moment-integral} removes the square-root endpoint singularity: the transformed integrand contains \(\sin^2(t)\), which is smooth at \(t=0\) and \(t=\pi\). When \(|d|<\beta\), splitting the integral at \(t_d=\arccos(d/\beta)\) removes the absolute-value kink and improves quadrature accuracy.

\bibliographystyle{jasa}
\bibliography{mybib}

\begin{thebibliography}{21}
\newcommand{\enquote}[1]{``#1''}
\expandafter\ifx\csname natexlab\endcsname\relax\def\natexlab#1{#1}\fi

\bibitem[\protect\citename{Abramovich and Benjamini,
  }1996]{AbramovichBenjamini1996}
Abramovich, F. and Benjamini, Y. (1996).
\newblock \enquote{Adaptive thresholding of wavelet coefficients.}
\newblock {\em Computational Statistics \& Data Analysis\/}, 22, 4, 351--361.

\bibitem[\protect\citename{Angelini and Vidakovic,
  }2004]{AngeliniVidakovic2004}
Angelini, C. and Vidakovic, B. (2004).
\newblock \enquote{{$\Gamma$}-minimax wavelet shrinkage: a robust incorporation
  of information about energy of a signal in denoising applications.}
\newblock {\em Statistica Sinica\/}, 14, 1, 103--125.

\bibitem[\protect\citename{Barrios and dos Santos~Sousa,
  }2025]{BarriosSousa2025}
Barrios, F. A.~C. and dos Santos~Sousa, A.~R. (2025).
\newblock \enquote{Bayesian wavelet shrinkage for low {SNR} data based on the
  {Epanechnikov} kernel.}
\newblock {\em arXiv preprint arXiv:2507.11718\/}.

\bibitem[\protect\citename{Chipman et~al., }1997]{ChipmanKolaczykMcCulloch1997}
Chipman, H.~A., Kolaczyk, E.~D., and McCulloch, R.~E. (1997).
\newblock \enquote{Adaptive Bayesian wavelet shrinkage.}
\newblock {\em Journal of the American Statistical Association\/}, 92, 440,
  1413--1421.

\bibitem[\protect\citename{Daubechies, }1992]{Daubechies1992}
Daubechies, I. (1992).
\newblock {\em Ten lectures on wavelets\/}.
\newblock No.~61 in CBMS-NSF Regional Conference Series in Applied Mathematics.
  Philadelphia, PA: SIAM.

\bibitem[\protect\citename{Donoho and Johnstone, }1994]{DonohoJohnstone1994}
Donoho, D.~L. and Johnstone, I.~M. (1994).
\newblock \enquote{Ideal spatial adaptation by wavelet shrinkage.}
\newblock {\em Biometrika\/}, 81, 3, 425--455.

\bibitem[\protect\citename{Donoho and Johnstone, }1995]{DonohoJohnstone1995}
--- (1995).
\newblock \enquote{Adapting to unknown smoothness via wavelet shrinkage.}
\newblock {\em Journal of the American Statistical Association\/}, 90, 432,
  1200--1224.

\bibitem[\protect\citename{dos S.~Sousa et~al.,
  }2021]{SousaGarciaVidakovic2021}
dos S.~Sousa, A.~R., Garcia, N.~L., and Vidakovic, B. (2021).
\newblock \enquote{Bayesian wavelet shrinkage with beta priors.}
\newblock {\em Computational Statistics\/}, 36, 2, 1341--1363.

\bibitem[\protect\citename{dos Santos~Sousa, }2022]{Sousa2022}
dos Santos~Sousa, A.~R. (2022).
\newblock \enquote{Bayesian wavelet shrinkage with logistic prior.}
\newblock {\em Communications in Statistics - Simulation and Computation\/},
  51, 8, 4700--4714.

\bibitem[\protect\citename{Johnstone and Silverman,
  }2005]{JohnstoneSilverman2005}
Johnstone, I.~M. and Silverman, B.~W. (2005).
\newblock \enquote{Empirical {Bayes} selection of wavelet thresholds.}
\newblock {\em The Annals of Statistics\/}, 33, 4, 1700--1752.

\bibitem[\protect\citename{Mallat, }1999]{Mallat2009}
Mallat, S. (1999).
\newblock {\em A wavelet tour of signal processing\/}.
\newblock 2nd ed. Academic Press.

\bibitem[\protect\citename{Marchesi~Reina and dos S.~Sousa,
  }2026]{ReinaSousa2026}
Marchesi~Reina, J. and dos S.~Sousa, A.~R. (2026).
\newblock \enquote{Wavelet shrinkage based on the raised cosine prior.}
\newblock {\em Journal of Statistical Computation and Simulation\/}, 96, 11,
  2683--2704.

\bibitem[\protect\citename{Nason, }1996]{Nason1996}
Nason, G.~P. (1996).
\newblock \enquote{Wavelet shrinkage using cross-validation.}
\newblock {\em Journal of the Royal Statistical Society: Series B
  (Methodological)\/}, 58, 2, 463--479.

\bibitem[\protect\citename{Rem{\'e}nyi and Vidakovic,
  }2015]{RemenyiVidakovic2015}
Rem{\'e}nyi, N. and Vidakovic, B. (2015).
\newblock \enquote{Wavelet shrinkage with double {Weibull} prior.}
\newblock {\em Communications in Statistics - Simulation and Computation\/},
  44, 1, 88--104.

\bibitem[\protect\citename{Sanyal, }2025]{Sanyal2025}
Sanyal, N. (2025).
\newblock \enquote{Nonlocal prior mixture-based {Bayesian} wavelet regression
  with application to noisy imaging and audio data.}
\newblock {\em Mathematics\/}, 13, 16, 2642.

\bibitem[\protect\citename{Sanyal and Ferreira, }2017]{SanyalFerreira2017}
Sanyal, N. and Ferreira, M. A.~R. (2017).
\newblock \enquote{Bayesian wavelet analysis using nonlocal priors with an
  application to {fMRI} analysis.}
\newblock {\em Sankhy{\=a}: The Indian Journal of Statistics, Series B\/}, 79,
  2, 361--388.

\bibitem[\protect\citename{Vidakovic and Ruggeri, }2001]{VidakovicRuggeri2001}
Vidakovic, B. and Ruggeri, F. (2001).
\newblock \enquote{{BAMS} method: theory and simulations.}
\newblock {\em Sankhy{\=a}: The Indian Journal of Statistics, Series B\/}, 63,
  2, 234--249.

\bibitem[\protect\citename{Vimalajeewa et~al.,
  }2023]{VimalajeewaDasguptaRuggeriVidakovic2023}
Vimalajeewa, D., DasGupta, A., Ruggeri, F., and Vidakovic, B. (2023).
\newblock \enquote{Gamma-minimax wavelet shrinkage for signals with low {SNR}.}
\newblock {\em The New England Journal of Statistics in Data Science\/}, 1, 2,
  159--171.

\bibitem[\protect\citename{Wendland, }1995]{Wendland1995}
Wendland, H. (1995).
\newblock \enquote{Piecewise polynomial, positive definite and compactly
  supported radial functions of minimal degree.}
\newblock {\em Advances in Computational Mathematics\/}, 4, 1, 389--396.

\bibitem[\protect\citename{Wendland, }2005]{Wendland2005}
--- (2005).
\newblock {\em Scattered data approximation\/}.
\newblock No.~17 in Cambridge Monographs on Applied and Computational
  Mathematics. Cambridge: Cambridge University Press.

\bibitem[\protect\citename{Wigner, }1993]{Wigner1955}
Wigner, E.~P. (1993).
\newblock \enquote{Characteristic vectors of bordered matrices with infinite
  dimensions {I}.}
\newblock In {\em The Collected Works of Eugene Paul Wigner: Part A: The
  Scientific Papers\/}, ed. A.~S. Wightman, vol. A/1 of {\em The Collected
  Works of Eugene Paul Wigner\/},  524--540. Berlin, Heidelberg: Springer.

\end{thebibliography}

\clearpage


\renewcommand{\figurename}{Supplementary Figure}
\renewcommand{\tablename}{Supplementary Table}

\vspace*{8cm}
\begin{center}
{\bfseries \Huge Supplement to ``Resolution-Adaptive Compact-Support Priors for Bayesian Wavelet Denoising" by Nilotpal Sanyal}
\end{center}

\newpage

\section{Estimated noise-scale diagnostic}
\label{app:noise-scale}

Throughout this supplement, WS denotes Wendland--semicircle; SNR denotes signal-to-noise ratio; MSE and MAE denote mean squared error and mean absolute error; SD denotes standard deviation; and the benchmark labels Univ, FDR, CV, SURE, BAMS, and NLP denote universal thresholding, false-discovery-rate thresholding, cross-validation, Stein's unbiased risk estimate, the Bayesian adaptive multiresolution shrinker, and a nonlocal prior, respectively.

Supplementary Figure~\ref{fig:simulation-sigma} reports the replicate-level distributions of
\(\widehat\sigma\) from the WS--Gaussian fits by signal, SNR, and sample size.
The estimated noise scale is largest at \(\rho=0.2\), decreases at
\(\rho=1\), and is smallest at \(\rho=3\).  The close clustering across sample
sizes within each signal--SNR combination indicates that the scale estimate is
not materially driven by the sample size in this simulation.

\begin{figure}[htbp]
    \centering
    \includegraphics[width=\linewidth]{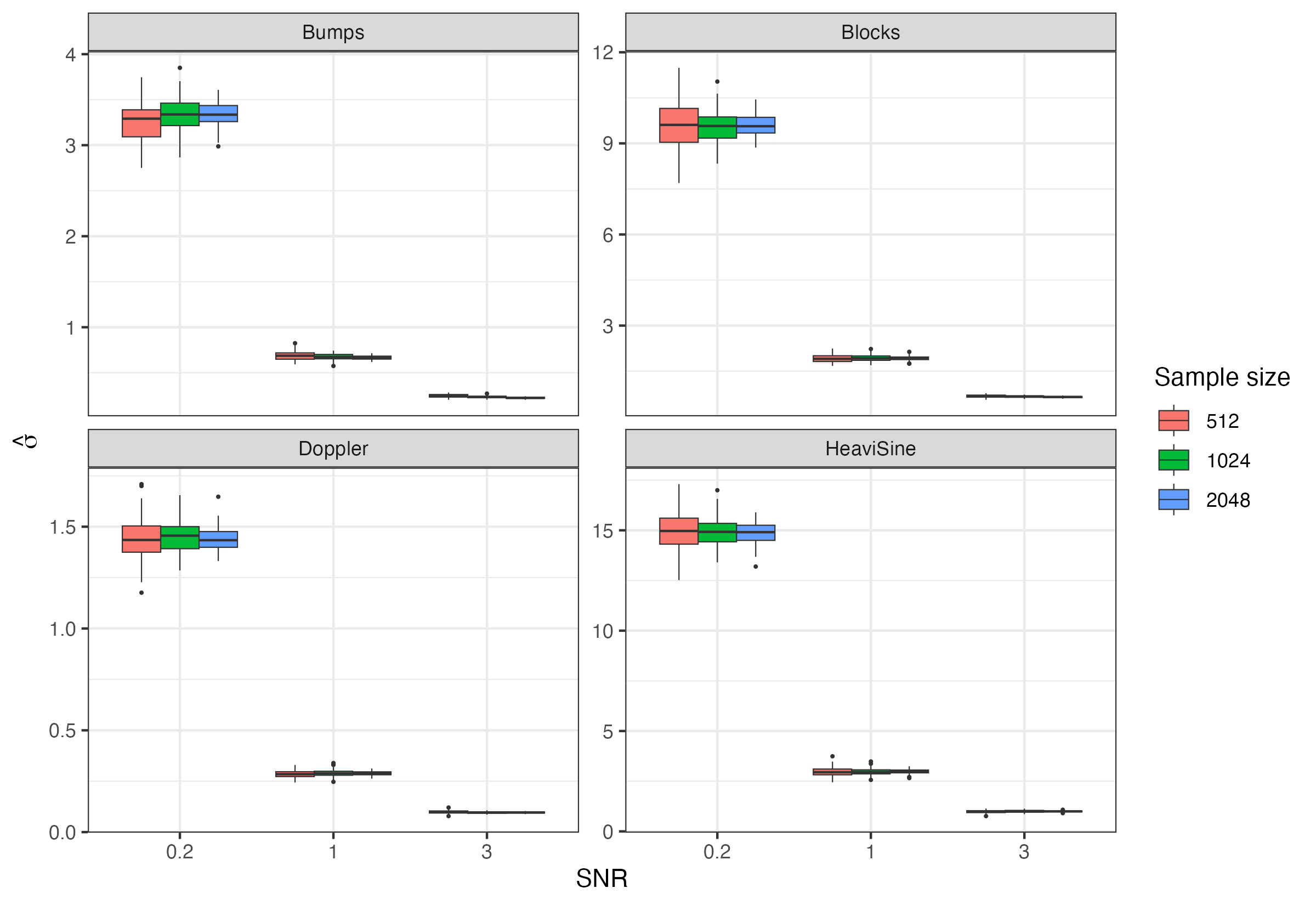}
    \caption{Replicate-level distributions of the estimated noise scale
    \(\widehat\sigma\) from WS--Gaussian fits, by signal, SNR, and sample size.
    Each box summarizes 100 independent replications.}
    \label{fig:simulation-sigma}
\end{figure}

\clearpage


\section{Supplementary simulation analyses}
\label{sec:supp-simulation}
This supplement reports the numerical, sensitivity, and robustness analyses that support the primary Gaussian-error simulation in Section~6 of the main manuscript.  Unless stated otherwise, each reported mean is based on 100 common-random-number replications per design cell, and entries in parentheses in the MSE tables are Monte Carlo standard errors.  WS--G and WS--L denote the Gaussian- and Laplace-likelihood versions of the adaptive Wendland--semicircle fit.

For clarity, the simulation tables in this supplement use computationally generated observations from the deterministic test functions and error mechanisms described in Section~6 of the main manuscript. They are not derived from the seismic record except in the explicitly labeled seismic and semi-synthetic sections.

\subsection{Quadrature assessment}
\label{sec:supp-quadrature}
The proposed estimator contains one-dimensional component integrals.  We evaluated quadrature orders \(q\in\{32,48,64,96,128\}\) for both coefficientwise likelihoods at \(n=1024\) and \(\rho\in\{0.2,1,3\}\), using \(q=128\) as the numerical reference.  For each order, the reporting script recorded the 95th percentile of the relative MSE change, the reconstruction NRMSE relative to \(q=128\), the component-conditional-mean RMSE after scaling by \(\widehat\sigma\), the maximum and median absolute reconstruction differences relative to \(q=128\), the failure and numerical-fallback frequencies, the coefficient-level check for small, moderate, near-boundary, and beyond-support coefficients, the median runtime, and the warning rate.  Table~\ref{tab:supp-quadrature} reports these diagnostics and identifies the smallest order satisfying all prespecified tolerances.  The selected order, \(q=128\), was then used in the remaining analyses; it was not selected from the primary evaluation outcomes.

\begin{table}[p]
    \centering
    \caption{Quadrature diagnostics relative to (q=128).  The relative MSE change, reconstruction NRMSE, and component-conditional-mean RMSE are 95th-percentile summaries over the quadrature study; the selected order is the smallest order satisfying all prespecified tolerances.\\[-5pt]}
    \label{tab:supp-quadrature}
    \renewcommand{\arraystretch}{1}
    \resizebox{\linewidth}{!}{%
    \begin{tabular}{llrrrrrrr}
        \toprule
        Likelihood & $q$ & Relative MSE change & Reconstruction NRMSE & Component RMSE/$\widehat\sigma$ & Median runtime (s) & Warnings (\%) & Stable & Selected $q$ \\
        \midrule
        gaussian & 32 & 6.94e-05 & 5.54e-05 & 5.68e-02 & 0.215 & 0.0 & No & 128 \\
        gaussian & 48 & 1.34e-06 & 1.70e-06 & 2.15e-02 & 0.221 & 0.0 & No & 128 \\
        gaussian & 64 & 7.87e-08 & 5.70e-07 & 1.19e-02 & 0.230 & 0.0 & No & 128 \\
        gaussian & 96 & 9.94e-09 & 8.04e-08 & 1.65e-03 & 0.244 & 0.0 & No & 128 \\
        gaussian & 128 & 0.00e+00 & 0.00e+00 & 0.00e+00 & 0.257 & 0.0 & Yes & 128 \\
        laplace & 32 & 2.78e-05 & 3.75e-03 & 3.74e-02 & 0.241 & 0.0 & No & 128 \\
        laplace & 48 & 1.19e-05 & 1.62e-03 & 1.50e-02 & 0.246 & 0.0 & No & 128 \\
        laplace & 64 & 6.87e-06 & 9.93e-04 & 8.44e-03 & 0.254 & 0.0 & No & 128 \\
        laplace & 96 & 3.45e-06 & 4.95e-04 & 3.48e-03 & 0.266 & 0.0 & No & 128 \\
        laplace & 128 & 0.00e+00 & 0.00e+00 & 0.00e+00 & 0.276 & 0.0 & Yes & 128 \\
        \bottomrule
    \end{tabular}
    }
\end{table}

Supplementary Figure~\ref{fig:supp-quadrature} displays the three numerical-error diagnostics as functions of \(q\). The figure makes visible whether increasing the quadrature order continues to change the fitted reconstruction or only increases computation.

\begin{figure}[htbp]
    \centering
    \includegraphics[width=\linewidth]{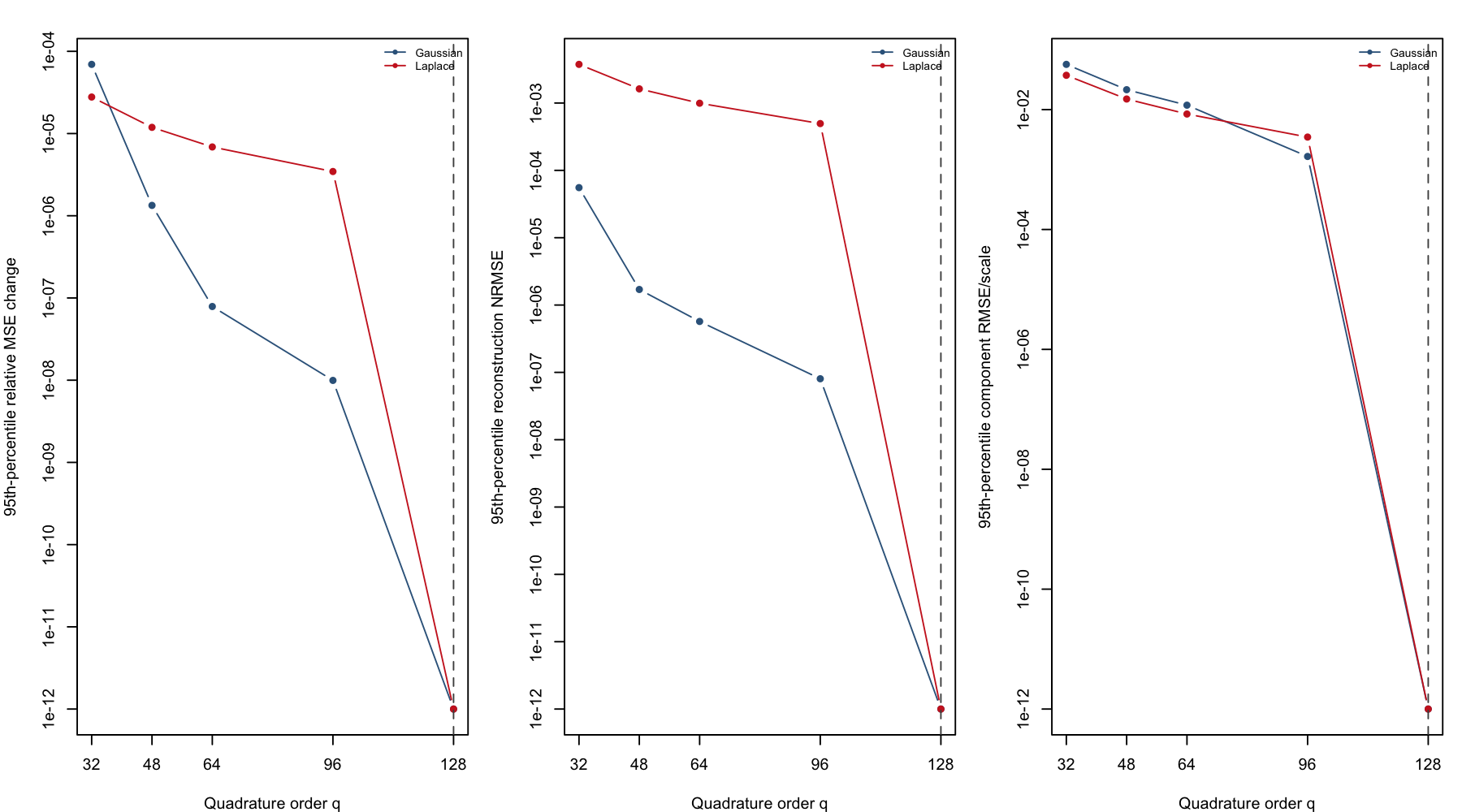}
    \caption{Quadrature diagnostics relative to \(q=128\) for the Gaussian and Laplace likelihoods.  Curves show the 95th-percentile relative MSE change, reconstruction NRMSE, and component-conditional-mean RMSE scaled by \(\widehat\sigma\); the vertical marker indicates the selected order.}
    \label{fig:supp-quadrature}
\end{figure}

\subsection{Endpoint and monotonicity checks}
\label{sec:supp-endpoints}
The implementation explicitly allows the spike probability and the Wendland mixture weight to attain their endpoints.  When \(\pi_j=0\), the posterior spike probability is exactly zero; when \(\pi_j=1\), the posterior mean is exactly zero.  Similarly, \(\omega_j=1\) and \(\omega_j=0\) reduce the slab to the Wendland-only and Semicircle-only cases, respectively.  We evaluated the posterior-mean map on a dense symmetric grid for all combinations of \(\pi_j\in\{0,0.5,1\}\), \(\omega_j\in\{0,0.5,1\}\), and both likelihoods.  Table~\ref{tab:supp-numerical-checks} reports oddness, monotonicity, support, finiteness, and endpoint diagnostics.  For the Gaussian-likelihood rows, the monotonicity results are consistent with the analytical identity \(\delta_j'(d)=\operatorname{Var}(\theta_j\mid d)/\sigma^2\geq0\).  The corresponding Laplace-likelihood rows are empirical numerical checks, for which this Gaussian variance identity does not apply.

\begin{table}[p]
    \centering
    \caption{Numerical checks for the posterior-mean map over endpoint spike probabilities and endpoint slab weights.  The minimum first difference checks nondecreasing behavior on the evaluation grid; the two endpoint columns check the exact \(\pi_j=0\) and \(\pi_j=1\) rules.\\[-5pt]}
    \label{tab:supp-numerical-checks}
    \tiny
    \renewcommand{\arraystretch}{0.92}
    \resizebox{\linewidth}{!}{%
    \begin{tabular}{llrrrrccc}
        \toprule
        Likelihood & \(\pi_j\) & \(\omega_j\) & Max oddness error & Min first difference & Max support excess & Finite & \(\pi_j=0\) & \(\pi_j=1\) \\
        \midrule
        gaussian & 0.0 & 0.0 & 7.55e-15 & 9.14e-04 & 0.00e+00 & Yes & Yes & -- \\
        gaussian & 0.0 & 0.5 & 7.99e-15 & 9.74e-04 & 0.00e+00 & Yes & Yes & -- \\
        gaussian & 0.0 & 1.0 & 5.77e-15 & 2.18e-03 & 0.00e+00 & Yes & Yes & -- \\
        gaussian & 0.5 & 0.0 & 7.55e-15 & 9.14e-04 & 0.00e+00 & Yes & -- & -- \\
        gaussian & 0.5 & 0.5 & 7.55e-15 & 9.74e-04 & 0.00e+00 & Yes & -- & -- \\
        gaussian & 0.5 & 1.0 & 5.77e-15 & 2.18e-03 & 0.00e+00 & Yes & -- & -- \\
        gaussian & 1.0 & 0.0 & 0.00e+00 & 0.00e+00 & 0.00e+00 & Yes & -- & Yes \\
        gaussian & 1.0 & 0.5 & 0.00e+00 & 0.00e+00 & 0.00e+00 & Yes & -- & Yes \\
        gaussian & 1.0 & 1.0 & 0.00e+00 & 0.00e+00 & 0.00e+00 & Yes & -- & Yes \\
        laplace & 0.0 & 0.0 & 1.55e-15 & 0.00e+00 & 0.00e+00 & Yes & Yes & -- \\
        laplace & 0.0 & 0.5 & 2.16e-13 & 0.00e+00 & 0.00e+00 & Yes & Yes & -- \\
        laplace & 0.0 & 1.0 & 5.18e-13 & 0.00e+00 & 0.00e+00 & Yes & Yes & -- \\
        laplace & 0.5 & 0.0 & 1.11e-15 & 0.00e+00 & 0.00e+00 & Yes & -- & -- \\
        laplace & 0.5 & 0.5 & 1.28e-13 & 0.00e+00 & 0.00e+00 & Yes & -- & -- \\
        laplace & 0.5 & 1.0 & 2.76e-13 & 0.00e+00 & 0.00e+00 & Yes & -- & -- \\
        laplace & 1.0 & 0.0 & 0.00e+00 & 0.00e+00 & 0.00e+00 & Yes & -- & Yes \\
        laplace & 1.0 & 0.5 & 0.00e+00 & 0.00e+00 & 0.00e+00 & Yes & -- & Yes \\
        laplace & 1.0 & 1.0 & 0.00e+00 & 0.00e+00 & 0.00e+00 & Yes & -- & Yes \\
        \bottomrule
    \end{tabular}
    }
\end{table}

\subsection{Fit-level quality control and computing environment}
\label{sec:supp-reproducibility}
Table~\ref{tab:supp-qc} summarizes the quality-control results.  The expanded Gaussian study contained 665 warning-bearing fits (\(1.73\%\)), and the Laplace study contained 193 warning-bearing fits (\(0.67\%\)).  The warning-source diagnostic identified all warnings as arising from the \texttt{FDR} method.  Of the 665 Gaussian-study warning-bearing fits, 662 occurred at \(\rho=0.2\) across \(n\in\{512,1024,2048\}\), and 3 occurred for \textit{Bumps} at \(n=1024,\rho=0.5\); all 193 Laplace-study warnings occurred at \(\rho=0.2\) across the three sample sizes.  All warning-bearing fits retained valid status, all reported fit metrics were finite, and no MSE value was non-finite.  Thus, the warnings were nonfatal numerical diagnostics and did not exclude or invalidate any estimate included in the reported summaries.
\begin{table}[p]
    \centering
    \caption{Quality-control summary for the saved replicate-level outputs.  Warning counts count fits with at least one recorded warning, and non-finite MSE counts include failed or invalid fits.\\[-5pt]}
    \label{tab:supp-qc}
    \scriptsize
    \renewcommand{\arraystretch}{0.92}
    \begin{tabular}{lrrrr}
        \toprule
        Study & Recorded fits & Failed fits & Fits with warnings & Non-finite MSE \\
        \midrule
        quadrature & 12,000 & 0 & 0 & 0 \\
        primary\_expanded\_snr & 38,400 & 0 & 665 & 0 \\
        component\_adaptivity & 28,800 & 0 & 0 & 0 \\
        hyperparameters & 21,600 & 0 & 0 & 0 \\
        basis\_j0 & 7,200 & 0 & 0 & 0 \\
        laplace\_errors & 28,800 & 0 & 193 & 0 \\
        \bottomrule
    \end{tabular}
\end{table}

\begin{table}
\centering
\caption{Computing environment used for the simulation and semi-synthetic analyses.}
\label{tab:supp-computing-environment}
\begin{tabular}{ll}
\toprule
Item & Specification \\
\midrule
Processor/node & Intel(R) Xeon(R) Gold 5120 CPU @ 2.20 GHz \\
Operating system & \texttt{Linux 5.14.0-284.169.1.el9\_2.x86\_64} \\
R version & 4.5.1 \\
Parallel backend & MPI via \texttt{Rmpi} \\
\texttt{snowfall} & \texttt{1.84.6.3} \\
\texttt{wavethresh} & \texttt{4.7.3} \\
\texttt{Rmpi} & \texttt{0.7.3.4} \\
\texttt{NLPwavelet} & \texttt{1.1} \\
\bottomrule
\end{tabular}
\end{table}

\subsection{Component and adaptivity comparisons}
\label{sec:supp-component}
The component/adaptivity study compared, for each likelihood, the adaptive mixture, the constant mixture, the Wendland-only endpoint, and the Semicircle-only endpoint.  The full Gaussian cellwise NMSE results are given in Supplementary Table~\ref{tab:supp-component}; the main manuscript reports the compact aggregate comparison in Table~4 and shows representative fitted \(\widehat\omega_j\) profiles in Figure~4.  The four models have the same empirical support-scale rule and differ only in how the two slab components are combined.  This isolates the contribution of mixing from the contribution of allowing the mixture weight to vary across resolution.

\begin{table}[p]
    \centering
    \caption{Mean NMSE for the four Gaussian WS component and adaptivity specifications in the component/adaptivity study.  The smallest value in each cell is shown in bold.\\[-5pt]}
    \label{tab:supp-component}
    \scriptsize
    \renewcommand{\arraystretch}{0.92}
    \resizebox{\linewidth}{!}{%
    \begin{tabular}{lllrrrr}
        \toprule
        Signal & n & SNR & WS--G & WS--G constant & Wendland-only & Semicircle-only \\
        \midrule
        Bumps & 512 & 0.2 & 1.036 & 1.037 & \textbf{1.030} & 1.032 \\
        Bumps & 512 & 1.0 & 0.526 & 0.526 & 0.636 & \textbf{0.526} \\
        Bumps & 512 & 3.0 & 0.114 & 0.114 & 0.140 & \textbf{0.114} \\
        Blocks & 512 & 0.2 & 0.906 & 0.909 & 0.962 & \textbf{0.905} \\
        Blocks & 512 & 1.0 & 0.246 & 0.246 & 0.314 & \textbf{0.246} \\
        Blocks & 512 & 3.0 & 0.084 & 0.084 & 0.095 & \textbf{0.084} \\
        Doppler & 512 & 0.2 & 0.923 & 0.924 & 0.972 & \textbf{0.922} \\
        Doppler & 512 & 1.0 & 0.184 & 0.184 & 0.236 & \textbf{0.184} \\
        Doppler & 512 & 3.0 & \textbf{0.038} & 0.038 & 0.046 & 0.038 \\
        HeaviSine & 512 & 0.2 & 0.544 & 0.545 & 0.692 & \textbf{0.543} \\
        HeaviSine & 512 & 1.0 & 0.054 & 0.054 & 0.080 & \textbf{0.054} \\
        HeaviSine & 512 & 3.0 & 0.013 & 0.013 & 0.017 & \textbf{0.013} \\
        Bumps & 1024 & 0.2 & 0.965 & 0.965 & 0.992 & \textbf{0.963} \\
        Bumps & 1024 & 1.0 & 0.339 & 0.339 & 0.408 & \textbf{0.339} \\
        Bumps & 1024 & 3.0 & 0.066 & 0.066 & 0.075 & \textbf{0.066} \\
        Blocks & 1024 & 0.2 & 0.709 & 0.710 & 0.832 & \textbf{0.709} \\
        Blocks & 1024 & 1.0 & 0.169 & 0.169 & 0.200 & \textbf{0.169} \\
        Blocks & 1024 & 3.0 & 0.057 & 0.057 & 0.066 & \textbf{0.057} \\
        Doppler & 1024 & 0.2 & 0.819 & 0.821 & 0.914 & \textbf{0.818} \\
        Doppler & 1024 & 1.0 & 0.120 & 0.120 & 0.155 & \textbf{0.120} \\
        Doppler & 1024 & 3.0 & \textbf{0.020} & 0.020 & 0.026 & 0.020 \\
        HeaviSine & 1024 & 0.2 & \textbf{0.297} & 0.297 & 0.387 & 0.297 \\
        HeaviSine & 1024 & 1.0 & 0.029 & 0.029 & 0.042 & \textbf{0.029} \\
        HeaviSine & 1024 & 3.0 & 0.009 & 0.009 & 0.012 & \textbf{0.009} \\
        Bumps & 2048 & 0.2 & 0.917 & 0.917 & 0.966 & \textbf{0.916} \\
        Bumps & 2048 & 1.0 & 0.208 & 0.208 & 0.241 & \textbf{0.208} \\
        Bumps & 2048 & 3.0 & 0.038 & \textbf{0.038} & 0.043 & 0.038 \\
        Blocks & 2048 & 0.2 & 0.519 & 0.519 & 0.646 & \textbf{0.519} \\
        Blocks & 2048 & 1.0 & \textbf{0.129} & 0.129 & 0.145 & 0.129 \\
        Blocks & 2048 & 3.0 & \textbf{0.035} & 0.035 & 0.040 & 0.035 \\
        Doppler & 2048 & 0.2 & 0.621 & 0.621 & 0.771 & \textbf{0.621} \\
        Doppler & 2048 & 1.0 & \textbf{0.069} & 0.069 & 0.088 & 0.069 \\
        Doppler & 2048 & 3.0 & \textbf{0.010} & 0.010 & 0.013 & 0.010 \\
        HeaviSine & 2048 & 0.2 & \textbf{0.188} & 0.188 & 0.239 & 0.188 \\
        HeaviSine & 2048 & 1.0 & 0.018 & 0.018 & 0.025 & \textbf{0.018} \\
        HeaviSine & 2048 & 3.0 & 0.006 & 0.006 & 0.007 & \textbf{0.006} \\
        \bottomrule
    \end{tabular}
    }
\end{table}

Supplementary Table~\ref{tab:supp-component} shows that the Semicircle-only endpoint has the lowest NMSE in 26 of the 36 Gaussian cells, whereas the adaptive mixture is lowest in 8 cells.  The fitted weights in Figure~5 of the main manuscript are positive but very close to zero, indicating that the adaptive fit remains near the Semicircle-only endpoint.  The modest resolution pattern should therefore be interpreted as weak data-driven variation around that endpoint, rather than as evidence of substantial scale-varying use of both slab components.  The fixed endpoint fits consequently provide important benchmarks for interpreting the adaptive mixture.

Across the component/adaptivity study, the adaptive and constant-mixture fits had optimizer-bound-hit rates of 84\% and 91\%, respectively, and mean runtimes of 0.28 and 0.17 s per replicate; these diagnostics were recorded independently of the accuracy summaries.

\subsection{Hyperparameters and bounded-support behavior}
\label{sec:supp-hyperparameters}

\subsubsection{Sensitivity to \(\gamma\), \(\varkappa\), and \(\ell\)}
\label{sec:supp-hyperparameters-detail}
At \(n=1024\), we varied one fixed quantity at a time around the default \((\gamma,\varkappa,\ell)=(2.4,0.99,2)\).  The values were \(\gamma\in\{1.8,2.1,2.4,2.7,3.0\}\), \(\varkappa\in\{0.95,0.975,0.99\}\), and \(\ell\in\{1.5,2,3\}\).  The purpose was to evaluate stability of the main conclusions, not to select a separate value for each signal.  Table~\ref{tab:supp-hyperparameter} reports the Gaussian WS mean NMSE averaged over all four signals and the three original SNR values, together with its percentage change from the default.  The corresponding Laplace fits are retained in the saved replicate-level output and are available from the same reporting script. Across these prespecified grids, the mean NMSE changed from \(-4.7\%\) to \(+6.4\%\) over \(\gamma\), remained within \(-0.1\%\) to \(+0.8\%\) over \(\varkappa\), and ranged from \(-2.5\%\) to \(+3.9\%\) over \(\ell\).

\begin{table}[p]
    \centering
    \caption{Gaussian WS hyperparameter sensitivity at (n=1024), averaged over the four signals and the three original SNR values.  Each parameter is varied one at a time around the default (\(\gamma,\varkappa,\ell)=(2.4,0.99,2)\).\\[-5pt]}
    \label{tab:supp-hyperparameter}
    \small
    \renewcommand{\arraystretch}{0.92}
    \begin{tabular}{lrrr}
        \toprule
        Parameter & Value & Mean NMSE & Change from default (\%) \\
        \midrule
        gamma & 1.8 & 0.286 & -4.7 \\
        gamma & 2.1 & 0.291 & -2.9 \\
        gamma & 2.4 & 0.300 & 0.0 \\
        gamma & 2.7 & 0.309 & 3.2 \\
        gamma & 3.0 & 0.319 & 6.4 \\
        kappa & 0.950 & 0.302 & 0.8 \\
        kappa & 0.975 & 0.299 & -0.1 \\
        kappa & 0.990 & 0.300 & 0.0 \\
        ell & 1.5 & 0.292 & -2.5 \\
        ell & 2.0 & 0.300 & 0.0 \\
        ell & 3.0 & 0.312 & 3.9 \\
        \bottomrule
    \end{tabular}
\end{table}

Supplementary Figure~\ref{fig:supp-hyperparameters} plots the same relative changes.  The results show whether the conclusion depends on a narrow tuning choice or remains stable over the prespecified perturbations.

\begin{figure}[htbp]
    \centering
    \includegraphics[width=\linewidth]{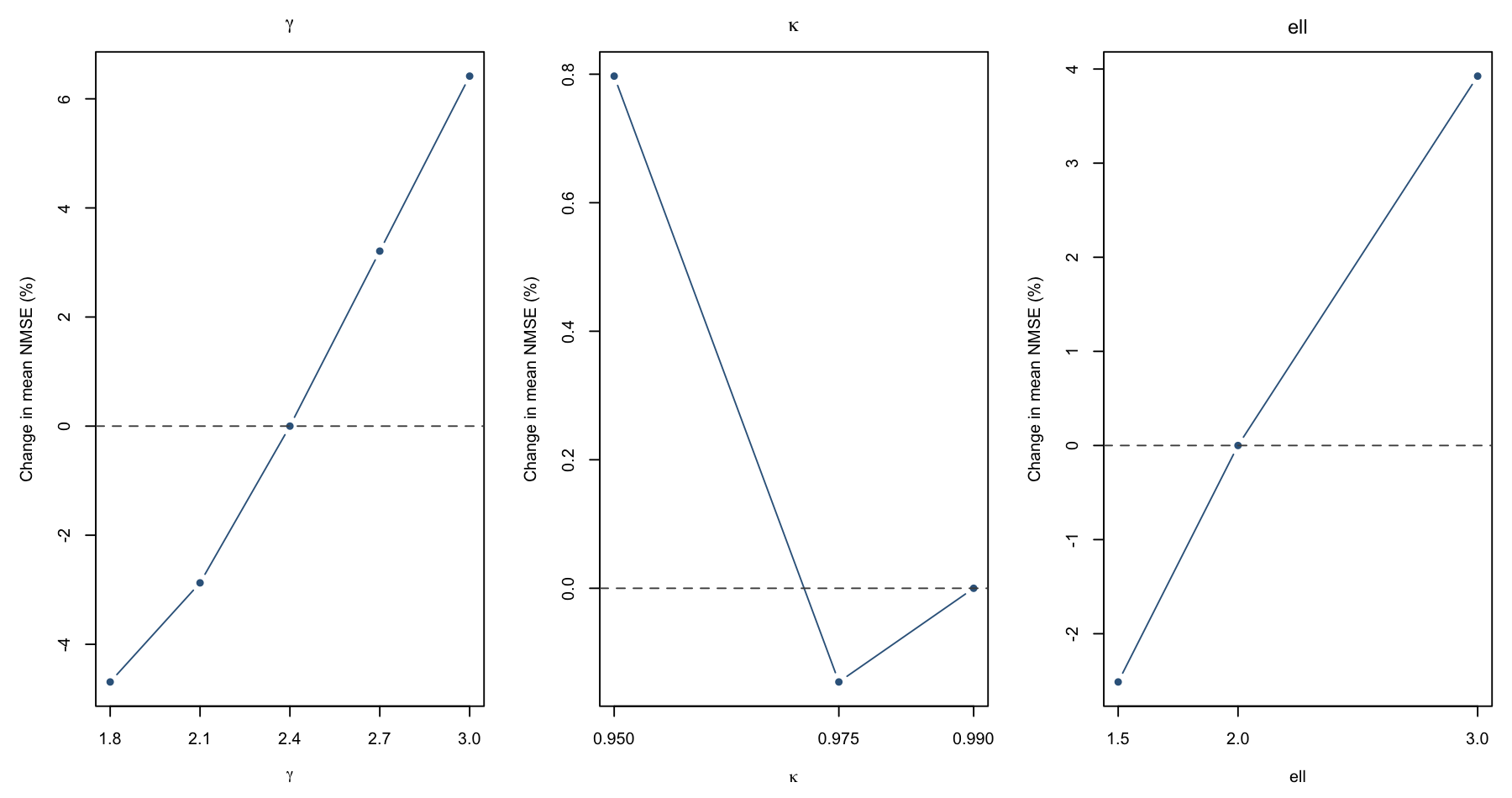}
    \caption{Relative change in mean Gaussian WS NMSE from the default hyperparameter setting, shown for one-at-a-time changes in \(\gamma\), \(\varkappa\), and \(\ell\) at \(n=1024\).  Each point averages the four signals and three original SNR values.}
    \label{fig:supp-hyperparameters}
\end{figure}

\subsubsection{Support exceedance and saturation}
\label{sec:supp-support}
The compact support of the slab implies that posterior means cannot exceed the fitted support scale in magnitude.  To assess the practical effect of this saturation, we transformed each known test signal to the wavelet domain and recorded, for every fitted resolution, the proportion of true coefficients outside \([-\widehat\beta_j,\widehat\beta_j]\), the fraction of replications in which at least one coefficient was outside the interval, and the proportion of true wavelet energy outside the interval.  Table~\ref{tab:supp-support} reports these quantities for the default WS--Gaussian fit at \(n=1024\); Supplementary Figure~\ref{fig:supp-support} shows the mean coefficient-level exceedance rate by signal and SNR.

\begin{table}[p]
    \centering
    \caption{Support-exceedance diagnostics for the default WS--Gaussian fit at (n=1024).  The first column is the mean coefficient-level exceedance percentage, the second is the percentage of replications with at least one exceedance at a level, and the third is the percentage of true wavelet energy outside the fitted supports; values are averaged over fitted detail levels.\\[-5pt]}
    \label{tab:supp-support}
    \scriptsize
    \renewcommand{\arraystretch}{0.92}
    \resizebox{\linewidth}{!}{%
    \begin{tabular}{lrrrr}
        \toprule
        Signal & SNR & Outside coefficients (\%) & Any outside (\%) & Outside energy (\%) \\
        \midrule
        Bumps & 0.2 & 5.9 & 7.8 & 6.5 \\
        Bumps & 1.0 & 8.8 & 35.8 & 15.9 \\
        Bumps & 3.0 & 10.5 & 64.1 & 25.2 \\
        Blocks & 0.2 & 3.8 & 10.1 & 8.4 \\
        Blocks & 1.0 & 9.6 & 25.3 & 17.7 \\
        Blocks & 3.0 & 9.3 & 42.1 & 19.6 \\
        Doppler & 0.2 & 8.3 & 13.0 & 11.4 \\
        Doppler & 1.0 & 10.2 & 26.5 & 19.1 \\
        Doppler & 3.0 & 8.9 & 36.2 & 20.4 \\
        HeaviSine & 0.2 & 3.1 & 6.6 & 4.7 \\
        HeaviSine & 1.0 & 9.3 & 15.3 & 11.6 \\
        HeaviSine & 3.0 & 9.6 & 21.2 & 14.0 \\
        \bottomrule
    \end{tabular}
    }
\end{table}

Across the four signals and three SNR values, coefficient-level exceedance ranged from \(3.1\%\) to \(10.5\%\), the percentage of replications with at least one exceedance ranged from \(6.6\%\) to \(64.1\%\), and the proportion of true wavelet energy outside the fitted support ranged from \(4.7\%\) to \(25.2\%\).  At coarse resolutions, the empirical \(0.99\) support quantile is estimated from relatively few coefficients and should therefore be interpreted as a data-adaptive truncation rule, not as a guarantee that every future coefficient will lie within \([-\widehat\beta_j,\widehat\beta_j]\).  The reported exceedance and energy summaries quantify how often this distinction matters for the simulated signals.

\begin{figure}[htbp]
    \centering
    \includegraphics[width=\linewidth]{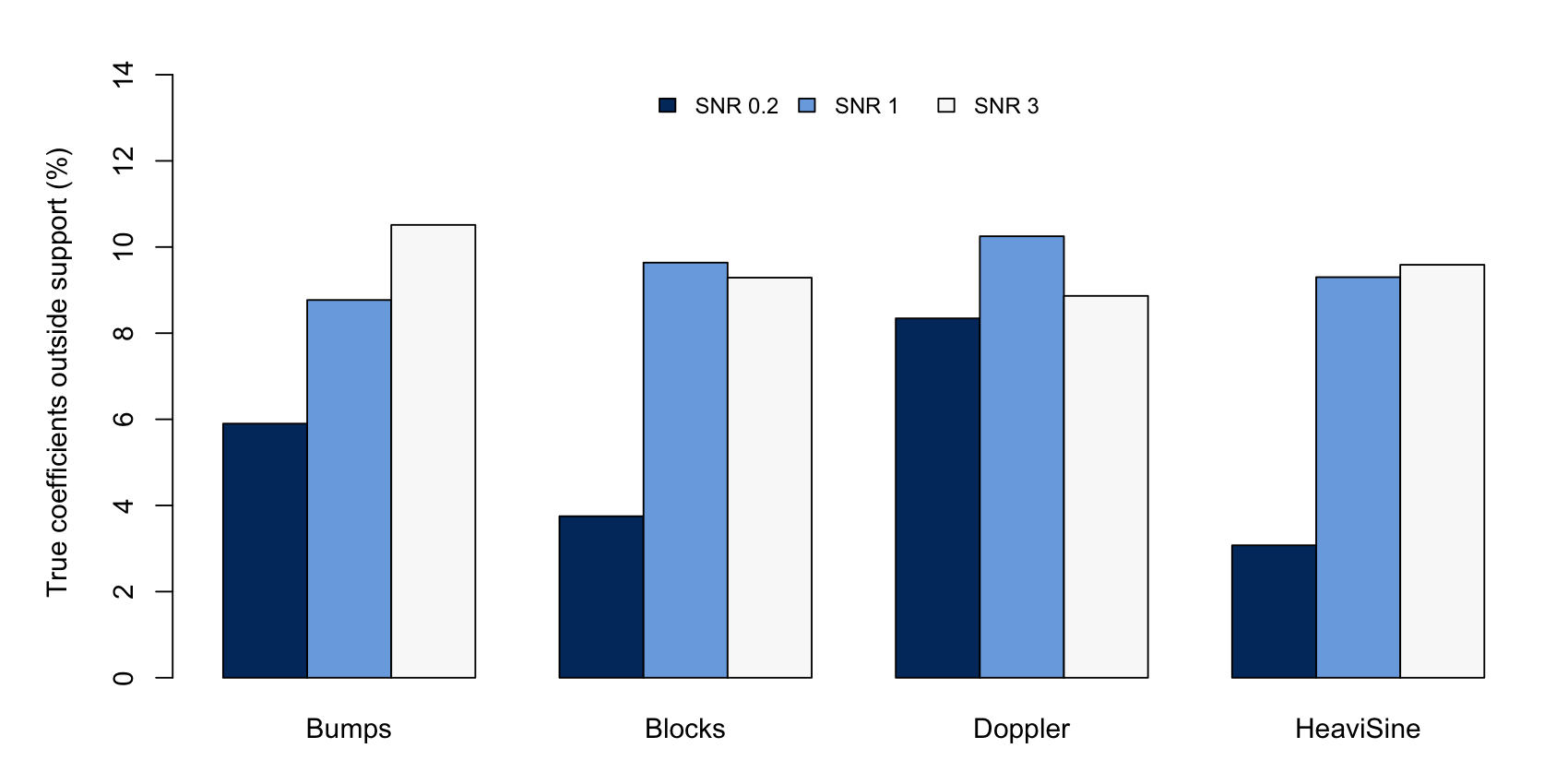}
    \caption{Mean proportion of true wavelet coefficients outside the fitted support interval for the default WS--Gaussian fit at \(n=1024\), by signal and amplitude SNR.}
    \label{fig:supp-support}
\end{figure}

\subsection{Wider SNR range and wavelet sensitivity}
\label{sec:supp-basis}
The added \(n=1024\) SNR values \(\rho\in\{0.5,2,5\}\) are reported in Supplementary Table~\ref{tab:supp-extended-snr} for selected methods (WS--G, WS--L, CV, and NLP).  This compact selection represents both proposed likelihoods and two benchmark procedures; the complete method comparison for the original Gaussian-error grid is reported in Table~2 of the main manuscript, while the expanded SNR profiles are shown in Figure~4 of the main manuscript.  These results show whether the low-SNR comparison is an isolated endpoint phenomenon or part of a broader transition as the noise level decreases.

\begin{table}[p]
    \centering
    \caption{Mean NMSE (Monte Carlo standard error) for selected methods (WS--G, WS--L, CV, and NLP) at the additional \(n=1024\) amplitude SNR values in the expanded primary Gaussian-error study.  The table provides a compact selection representing both proposed likelihoods and two benchmark procedures; the complete method comparison is reported in the primary Gaussian-error results.\\[-5pt]}
    \label{tab:supp-extended-snr}
    \scriptsize
    \renewcommand{\arraystretch}{0.92}
    \resizebox{\linewidth}{!}{%
    \begin{tabular}{llrrrr}
        \toprule
        Signal & SNR & WS--G & WS--L & CV & NLP \\
        \midrule
        Bumps & 0.5 & 0.690 (0.005) & 0.921 (0.003) & 0.679 (0.006) & 0.459 (0.005) \\
        Bumps & 2.0 & 0.128 (0.001) & 0.320 (0.002) & 0.102 (0.001) & 0.075 (0.001) \\
        Bumps & 5.0 & 0.029 (0.000) & 0.089 (0.001) & 0.029 (0.000) & 0.015 (0.000) \\
        Blocks & 0.5 & 0.334 (0.004) & 0.494 (0.006) & 0.390 (0.005) & 0.237 (0.004) \\
        Blocks & 2.0 & 0.092 (0.001) & 0.151 (0.001) & 0.078 (0.001) & 0.056 (0.000) \\
        Blocks & 5.0 & 0.027 (0.000) & 0.069 (0.000) & 0.020 (0.000) & 0.014 (0.000) \\
        Doppler & 0.5 & 0.284 (0.005) & 0.593 (0.008) & 0.387 (0.005) & 0.204 (0.005) \\
        Doppler & 2.0 & 0.041 (0.001) & 0.104 (0.001) & 0.049 (0.000) & 0.027 (0.000) \\
        Doppler & 5.0 & 0.008 (0.000) & 0.024 (0.000) & 0.010 (0.000) & 0.006 (0.000) \\
        HeaviSine & 0.5 & 0.095 (0.003) & 0.144 (0.003) & 0.163 (0.004) & 0.064 (0.004) \\
        HeaviSine & 2.0 & 0.013 (0.000) & 0.019 (0.000) & 0.021 (0.000) & 0.010 (0.000) \\
        HeaviSine & 5.0 & 0.005 (0.000) & 0.009 (0.000) & 0.006 (0.000) & 0.003 (0.000) \\
        \bottomrule
    \end{tabular}
    }
\end{table}

To assess sensitivity to the wavelet representation, we compared the default Daubechies extremal-phase basis with a least-asymmetric Daubechies basis, and compared \(J_0=0\) with \(J_0=1\), at \(n=1024\).  Table~\ref{tab:supp-basis} reports the Gaussian WS mean NMSE and its change from the default.  This is a limited robustness check rather than an exhaustive comparison of wavelet families.

\begin{table}[p]
    \centering
    \caption{Gaussian WS sensitivity to the coarsest shrunk resolution and wavelet basis at (n=1024), averaged over the four signals and three original SNR values.\\[-5pt]}
    \label{tab:supp-basis}
    \scriptsize
    \renewcommand{\arraystretch}{0.92}
    \begin{tabular}{lrrr}
        \toprule
        Specification & Mean NMSE & Change from default (\%) & Runtime (s) \\
        \midrule
        WS--G default, J0=0 & 0.300 & 0.0 & 0.248 \\
        WS--G default, J0=1 & 0.285 & -4.9 & 0.243 \\
        WS--G least-asymmetric & 0.306 & 2.1 & 0.251 \\
        \bottomrule
    \end{tabular}
\end{table}

Relative to the default specification, setting \(J_0=1\) reduced the mean NMSE by \(4.9\%\) (relative change \(-4.9\%\)), whereas using the least-asymmetric basis increased it by \(2.1\%\) (relative change \(+2.1\%\)).  These modest changes indicate that the main conclusion is not highly sensitive to the examined coarsest-resolution or basis choice.

\subsection{Laplace-error sensitivity}
\label{sec:supp-laplace}
The Laplace-error experiment used the same test functions, sample sizes, SNR values, wavelet implementation, hyperparameter rules, competing procedures, and 100 replications as the original Gaussian grid.  The only change was the data-generating error distribution.  Specifically, \(\varepsilon_i\sim\operatorname{Laplace}(0,b)\) with \(b=\sigma/\sqrt{2}\), so that \(\operatorname{Var}(\varepsilon_i)=\sigma^2\).  The errors were generated in the time domain before the DWT.  Therefore, the transformed coefficients need not be independent Laplace variables; WS--Laplace is interpreted as a coefficientwise working likelihood in this sensitivity analysis, not as an exact transformed-data model.

\subsubsection{Cellwise and aggregate accuracy}
Supplementary Table~\ref{tab:supp-laplace-mse} reports the detailed Laplace-error MSE results.  Supplementary Figure~\ref{fig:supp-laplace-profile} shows the SNR profiles at \(n=1024\), and Supplementary Table~\ref{tab:supp-laplace-overall} gives descriptive averages over the 36 cells.  These summaries allow the effect of a non-Gaussian error distribution to be separated from the effect of changing the working likelihood.

\begin{table}[p]
    \centering
    \caption{Mean MSE across 100 replications under the Laplace-error data-generating mechanism.  Entries in parentheses are Monte Carlo standard errors, and the smallest MSE in each cell is shown in bold.\\[-5pt]}
    \label{tab:supp-laplace-mse}
    \scriptsize
    \renewcommand{\arraystretch}{0.92}
    \resizebox{\linewidth}{!}{%
    \begin{tabular}{llrrrrrrrr}
        \toprule
        Signal & SNR & WS--G & WS--L & Univ & FDR & CV & SURE & BAMS & NLP \\
        \midrule
        \multicolumn{10}{c}{\(n=512\)} \\
                \midrule
        Bumps & 0.2 & 0.536 (0.016) & 0.440 (0.003) & 0.457 (0.003) & 0.460 (0.004) & 0.463 (0.004) & 0.954 (0.046) & \textbf{0.439 (0.003)} & 1.097 (0.058) \\
        Bumps & 1.0 & 0.217 (0.002) & 0.343 (0.002) & 0.361 (0.002) & 0.296 (0.003) & 0.189 (0.003) & 0.185 (0.003) & 0.372 (0.002) & \textbf{0.140 (0.002)} \\
        Bumps & 3.0 & 0.046 (0.001) & 0.135 (0.002) & 0.171 (0.001) & 0.084 (0.001) & 0.052 (0.001) & 0.036 (0.000) & 0.222 (0.002) & \textbf{0.027 (0.000)} \\
        Blocks & 0.2 & 3.874 (0.121) & 3.590 (0.036) & 3.848 (0.035) & 3.859 (0.036) & 3.834 (0.035) & 8.337 (0.430) & \textbf{3.558 (0.034)} & 7.245 (0.483) \\
        Blocks & 1.0 & 0.836 (0.011) & 1.365 (0.009) & 1.596 (0.014) & 1.341 (0.018) & 0.942 (0.011) & 0.987 (0.014) & 1.463 (0.010) & \textbf{0.755 (0.018)} \\
        Blocks & 3.0 & 0.281 (0.003) & 0.493 (0.004) & 0.690 (0.005) & 0.433 (0.005) & 0.225 (0.003) & 0.229 (0.003) & 0.663 (0.007) & \textbf{0.172 (0.002)} \\
        Doppler & 0.2 & 0.096 (0.003) & \textbf{0.082 (0.001)} & 0.090 (0.001) & 0.089 (0.001) & 0.090 (0.001) & 0.210 (0.009) & 0.082 (0.001) & 0.182 (0.011) \\
        Doppler & 1.0 & 0.015 (0.000) & 0.028 (0.000) & 0.036 (0.000) & 0.029 (0.000) & 0.018 (0.000) & 0.020 (0.000) & 0.045 (0.001) & \textbf{0.014 (0.000)} \\
        Doppler & 3.0 & 0.003 (0.000) & 0.007 (0.000) & 0.010 (0.000) & 0.006 (0.000) & 0.003 (0.000) & 0.003 (0.000) & 0.015 (0.000) & \textbf{0.002 (0.000)} \\
        HeaviSine & 0.2 & \textbf{5.873 (0.299)} & 6.391 (0.167) & 7.822 (0.175) & 8.258 (0.164) & 7.610 (0.187) & 18.900 (1.166) & 7.216 (0.154) & 15.560 (1.270) \\
        HeaviSine & 1.0 & \textbf{0.491 (0.017)} & 0.806 (0.014) & 1.252 (0.022) & 1.273 (0.045) & 0.852 (0.018) & 1.161 (0.041) & 0.833 (0.013) & 0.742 (0.047) \\
        HeaviSine & 3.0 & \textbf{0.113 (0.002)} & 0.146 (0.002) & 0.293 (0.004) & 0.257 (0.006) & 0.171 (0.003) & 0.190 (0.004) & 0.297 (0.005) & 0.125 (0.005) \\
                \addlinespace
        \multicolumn{10}{c}{\(n=1024\)} \\
                \midrule
        Bumps & 0.2 & 0.496 (0.008) & 0.444 (0.002) & 0.464 (0.002) & 0.466 (0.003) & 0.466 (0.003) & 1.018 (0.036) & \textbf{0.443 (0.002)} & 1.033 (0.044) \\
        Bumps & 1.0 & 0.137 (0.001) & 0.276 (0.002) & 0.296 (0.002) & 0.217 (0.002) & 0.136 (0.001) & 0.138 (0.001) & 0.333 (0.002) & \textbf{0.106 (0.001)} \\
        Bumps & 3.0 & 0.026 (0.000) & 0.073 (0.001) & 0.100 (0.001) & 0.050 (0.000) & 0.026 (0.000) & 0.024 (0.000) & 0.149 (0.001) & \textbf{0.018 (0.000)} \\
        Blocks & 0.2 & \textbf{3.136 (0.083)} & 3.327 (0.026) & 3.654 (0.026) & 3.689 (0.027) & 3.627 (0.029) & 8.634 (0.343) & 3.375 (0.023) & 7.074 (0.391) \\
        Blocks & 1.0 & 0.598 (0.006) & 1.045 (0.011) & 1.203 (0.011) & 1.003 (0.013) & 0.733 (0.007) & 0.770 (0.009) & 1.250 (0.009) & \textbf{0.555 (0.012)} \\
        Blocks & 3.0 & 0.193 (0.002) & 0.386 (0.002) & 0.491 (0.003) & 0.313 (0.003) & 0.169 (0.001) & 0.171 (0.001) & 0.493 (0.003) & \textbf{0.127 (0.001)} \\
        Doppler & 0.2 & 0.076 (0.001) & \textbf{0.076 (0.001)} & 0.086 (0.001) & 0.086 (0.001) & 0.085 (0.001) & 0.184 (0.007) & 0.078 (0.000) & 0.153 (0.007) \\
        Doppler & 1.0 & \textbf{0.010 (0.000)} & 0.018 (0.000) & 0.025 (0.000) & 0.020 (0.000) & 0.014 (0.000) & 0.015 (0.000) & 0.028 (0.000) & 0.010 (0.000) \\
        Doppler & 3.0 & 0.002 (0.000) & 0.004 (0.000) & 0.006 (0.000) & 0.004 (0.000) & 0.002 (0.000) & 0.002 (0.000) & 0.012 (0.000) & \textbf{0.002 (0.000)} \\
        HeaviSine & 0.2 & \textbf{3.526 (0.155)} & 4.775 (0.153) & 6.280 (0.156) & 6.754 (0.175) & 5.883 (0.162) & 15.401 (0.904) & 6.233 (0.163) & 11.381 (0.803) \\
        HeaviSine & 1.0 & \textbf{0.278 (0.008)} & 0.488 (0.015) & 0.852 (0.015) & 0.794 (0.023) & 0.579 (0.011) & 0.882 (0.032) & 0.644 (0.013) & 0.620 (0.035) \\
        HeaviSine & 3.0 & \textbf{0.075 (0.001)} & 0.111 (0.001) & 0.197 (0.002) & 0.167 (0.002) & 0.120 (0.001) & 0.142 (0.003) & 0.279 (0.003) & 0.091 (0.004) \\
                \addlinespace
        \multicolumn{10}{c}{\(n=2048\)} \\
                \midrule
        Bumps & 0.2 & 0.434 (0.004) & \textbf{0.430 (0.001)} & 0.450 (0.001) & 0.452 (0.001) & 0.450 (0.001) & 0.905 (0.032) & 0.431 (0.001) & 0.972 (0.164) \\
        Bumps & 1.0 & 0.085 (0.001) & 0.182 (0.001) & 0.212 (0.001) & 0.152 (0.001) & 0.099 (0.001) & 0.101 (0.001) & 0.272 (0.001) & \textbf{0.073 (0.001)} \\
        Bumps & 3.0 & 0.015 (0.000) & 0.043 (0.000) & 0.059 (0.000) & 0.032 (0.000) & 0.017 (0.000) & 0.017 (0.000) & 0.097 (0.001) & \textbf{0.012 (0.000)} \\
        Blocks & 0.2 & \textbf{2.162 (0.046)} & 2.774 (0.036) & 3.141 (0.037) & 3.108 (0.040) & 3.011 (0.041) & 6.825 (0.251) & 2.912 (0.033) & 5.820 (0.249) \\
        Blocks & 1.0 & 0.464 (0.003) & 0.738 (0.005) & 0.936 (0.005) & 0.773 (0.005) & 0.582 (0.004) & 0.626 (0.007) & 0.960 (0.009) & \textbf{0.419 (0.009)} \\
        Blocks & 3.0 & 0.117 (0.001) & 0.274 (0.002) & 0.342 (0.002) & 0.214 (0.002) & 0.122 (0.001) & 0.123 (0.001) & 0.391 (0.002) & \textbf{0.088 (0.001)} \\
        Doppler & 0.2 & \textbf{0.055 (0.001)} & 0.068 (0.001) & 0.080 (0.001) & 0.078 (0.001) & 0.076 (0.001) & 0.154 (0.005) & 0.074 (0.000) & 0.138 (0.006) \\
        Doppler & 1.0 & \textbf{0.006 (0.000)} & 0.012 (0.000) & 0.017 (0.000) & 0.013 (0.000) & 0.009 (0.000) & 0.011 (0.000) & 0.022 (0.000) & 0.008 (0.000) \\
        Doppler & 3.0 & \textbf{0.001 (0.000)} & 0.002 (0.000) & 0.004 (0.000) & 0.002 (0.000) & 0.002 (0.000) & 0.002 (0.000) & 0.010 (0.000) & 0.001 (0.000) \\
        HeaviSine & 0.2 & \textbf{2.513 (0.093)} & 2.515 (0.083) & 4.287 (0.093) & 4.212 (0.094) & 3.980 (0.089) & 13.394 (0.629) & 3.800 (0.165) & 11.031 (0.610) \\
        HeaviSine & 1.0 & \textbf{0.184 (0.004)} & 0.233 (0.005) & 0.569 (0.009) & 0.498 (0.009) & 0.402 (0.007) & 0.687 (0.023) & 0.417 (0.009) & 0.546 (0.024) \\
        HeaviSine & 3.0 & \textbf{0.052 (0.001)} & 0.087 (0.001) & 0.135 (0.002) & 0.114 (0.002) & 0.086 (0.001) & 0.110 (0.002) & 0.283 (0.003) & 0.072 (0.003) \\
        \bottomrule
    \end{tabular}
    }
\end{table}

\begin{figure}[htbp]
    \centering
    \includegraphics[width=\linewidth]{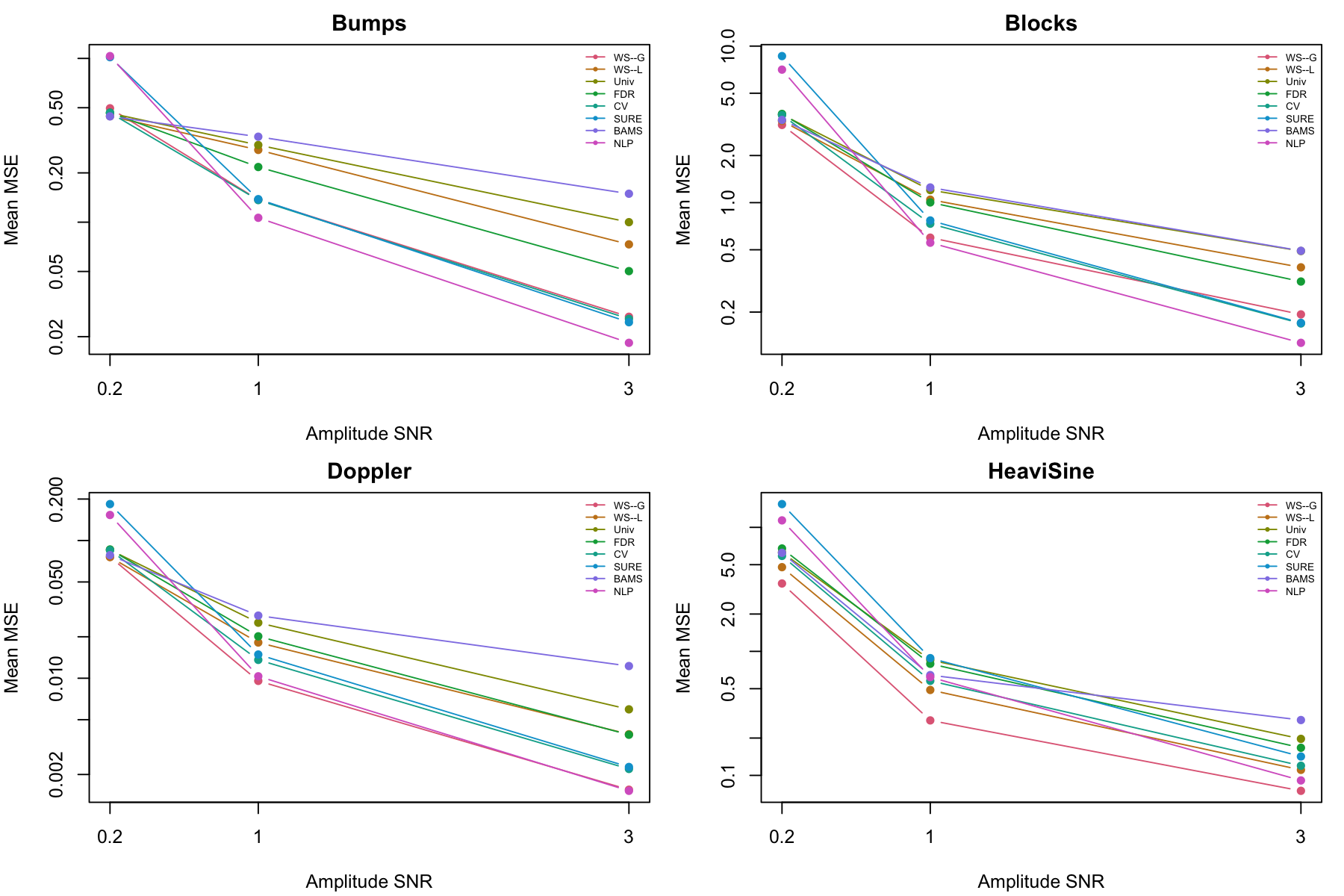}
    \caption{Mean MSE under Laplace errors as a function of amplitude SNR at \(n=1024\), shown separately for the four test signals.  Each point averages 100 replications.}
    \label{fig:supp-laplace-profile}
\end{figure}

\begin{table}[p]
    \centering
    \caption{Descriptive averages over the 36 Laplace-error design cells.  NMSE is normalized by the variance of the corresponding true signal, and runtime is the mean elapsed time in seconds per fitted replicate.\\[-5pt]}
    \label{tab:supp-laplace-overall}
    \scriptsize
    \renewcommand{\arraystretch}{0.92}
    \resizebox{\linewidth}{!}{%
    \begin{tabular}{lrrrrr}
        \toprule
        Method & Mean MSE & Mean NMSE & Mean MAE & Mean output SNR & Runtime (s) \\
        \midrule
        WS--G & 0.751 & 0.343 & 0.441 & 3.801 & 0.273 \\
        WS--L & 0.895 & 0.404 & 0.536 & 2.852 & 1.795 \\
        Univ & 1.125 & 0.469 & 0.618 & 2.338 & 0.002 \\
        FDR & 1.100 & 0.431 & 0.590 & 2.657 & 0.003 \\
        CV & 0.976 & 0.382 & 0.532 & 3.234 & 0.011 \\
        SURE & 2.265 & 0.776 & 0.648 & 2.995 & 0.003 \\
        BAMS & 1.061 & 0.474 & 0.605 & 2.144 & 0.011 \\
        NLP & 1.845 & 0.685 & 0.570 & 3.674 & 125.927 \\
        \bottomrule
    \end{tabular}
    }
\end{table}

Supplementary Table~\ref{tab:supp-laplace-overall} shows that the pooled WS--G mean NMSE was \(0.343\), compared with \(0.404\) for WS--L and \(0.685\) for NLP.  At \(n=1024\) and \(\rho=0.2\), the paired WS--G minus WS--L NMSE interval includes zero, whereas the corresponding interval for WS--G minus NLP is clearly negative, indicating lower WS--G NMSE than NLP in this slice.

\subsubsection{Paired comparisons and replicate variability}
Supplementary Table~\ref{tab:supp-laplace-paired} reports paired MSE and NMSE differences at \(n=1024\) and \(\rho=0.2\), pooled over the four signals.  The interval is formed from the paired replicate differences.  Supplementary Figure~\ref{fig:supp-laplace-box} displays the corresponding replicate distributions, while Supplementary Figure~\ref{fig:supp-laplace-runtime} displays the runtime distributions on a logarithmic scale.

\begin{table}[p]
    \centering
    \caption{Paired comparison with WS--Gaussian under Laplace errors at (n=1024) and amplitude SNR \(0.2\), pooled over the four signals.  Negative differences favor WS--Gaussian.\\[-5pt]}
    \label{tab:supp-laplace-paired}
    \scriptsize
    \renewcommand{\arraystretch}{0.92}
    \resizebox{\linewidth}{!}{%
    \begin{tabular}{lccc}
        \toprule
        Competitor & MSE difference [95\% CI] & NMSE difference [95\% CI] & WS--G win rate (\%) \\
        \midrule
        WS--L & -0.347 [-0.440, -0.253] & -0.018 [-0.038, 0.001] & 60.0 \\
        Univ & -0.812 [-0.947, -0.678] & -0.123 [-0.145, -0.102] & 76.2 \\
        FDR & -0.940 [-1.105, -0.776] & -0.142 [-0.165, -0.120] & 76.0 \\
        CV & -0.707 [-0.823, -0.591] & -0.110 [-0.129, -0.090] & 76.0 \\
        SURE & -4.501 [-5.151, -3.850] & -1.325 [-1.410, -1.241] & 100.0 \\
        BAMS & -0.724 [-0.875, -0.572] & -0.069 [-0.094, -0.045] & 62.5 \\
        NLP & -3.102 [-3.604, -2.600] & -1.022 [-1.107, -0.938] & 86.8 \\
        \bottomrule
    \end{tabular}
    }
\end{table}

\begin{figure}[htbp]
    \centering
    \includegraphics[width=\linewidth]{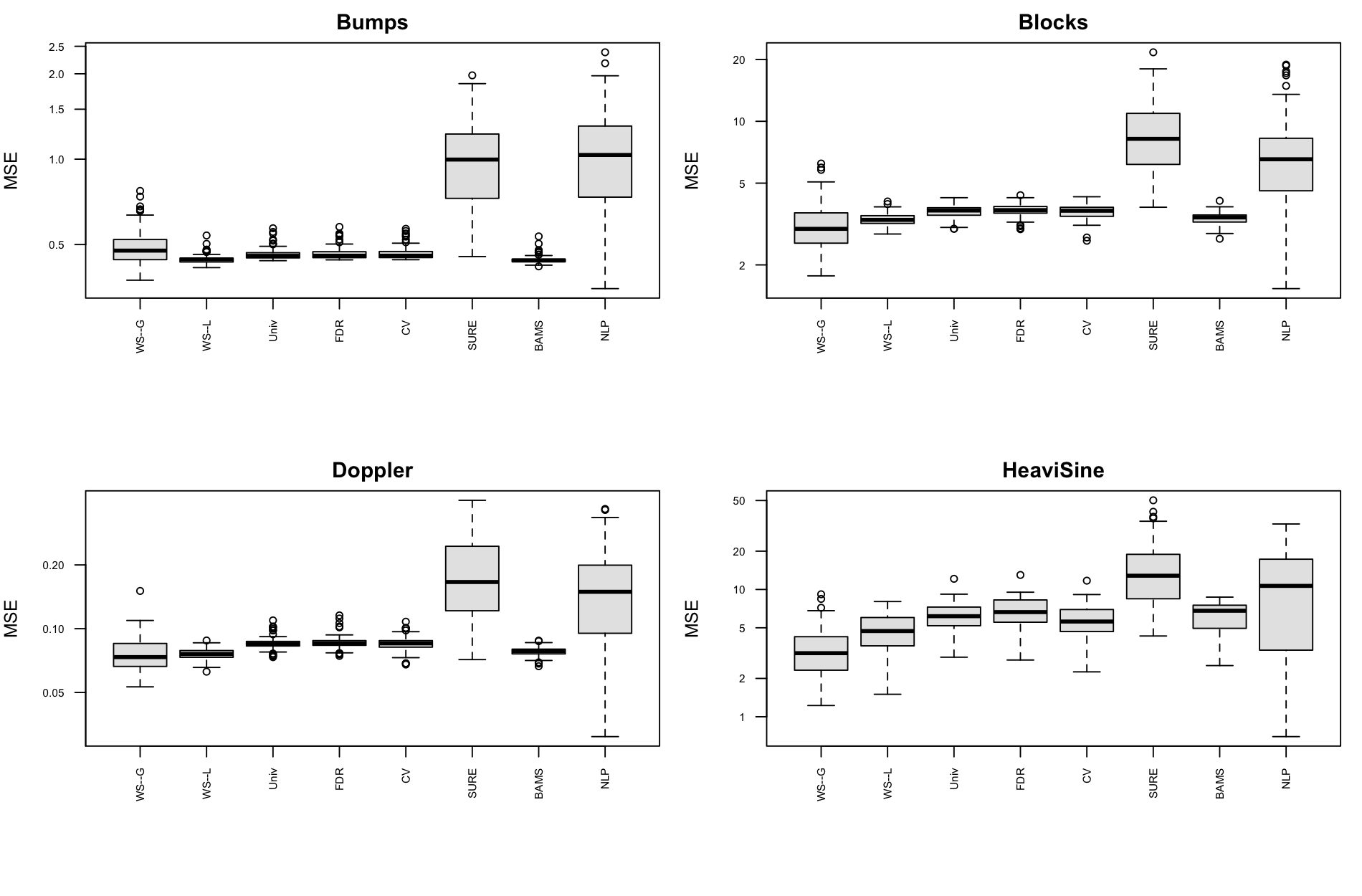}
    \caption{Replicate-specific MSE distributions under Laplace errors at \(n=1024\) and amplitude SNR \(\rho=0.2\), shown separately for Bumps, Blocks, Doppler, and HeaviSine.}
    \label{fig:supp-laplace-box}
\end{figure}

\begin{figure}[htbp]
    \centering
    \includegraphics[width=0.85\linewidth]{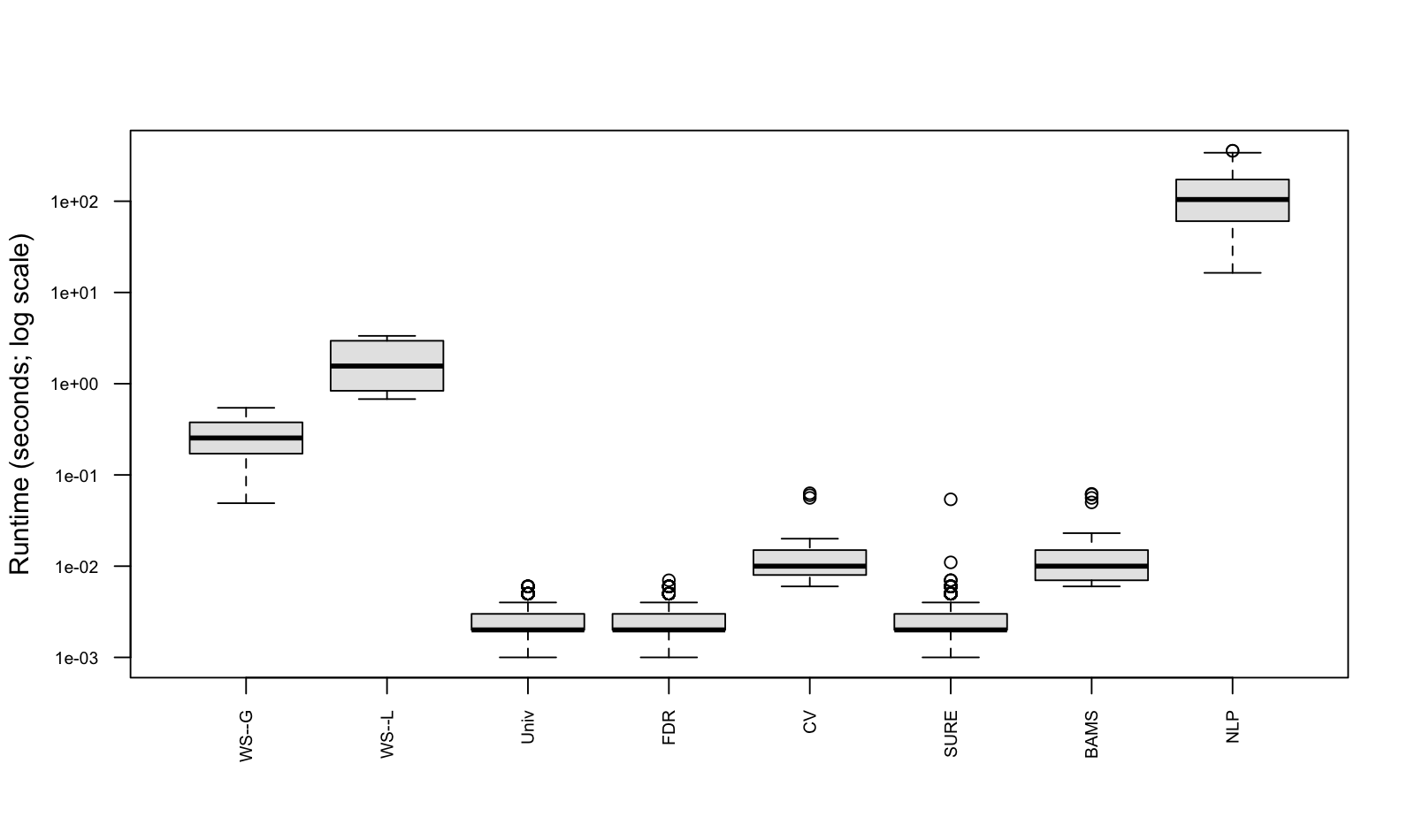}
    \caption{Per-fit runtime distributions under Laplace errors across the 36 design cells, displayed on a logarithmic scale.}
    \label{fig:supp-laplace-runtime}
\end{figure}

\subsubsection{Noise-scale estimates}
Supplementary Figure~\ref{fig:supp-laplace-sigma} reports the fitted noise-scale estimates from the WS--Gaussian fits.  The scale estimator is a robust wavelet-domain diagnostic; under the Laplace-error experiment, it is not expected to equal the Gaussian-model likelihood scale coefficient by coefficient.

\begin{figure}[htbp]
    \centering
    \includegraphics[width=\linewidth]{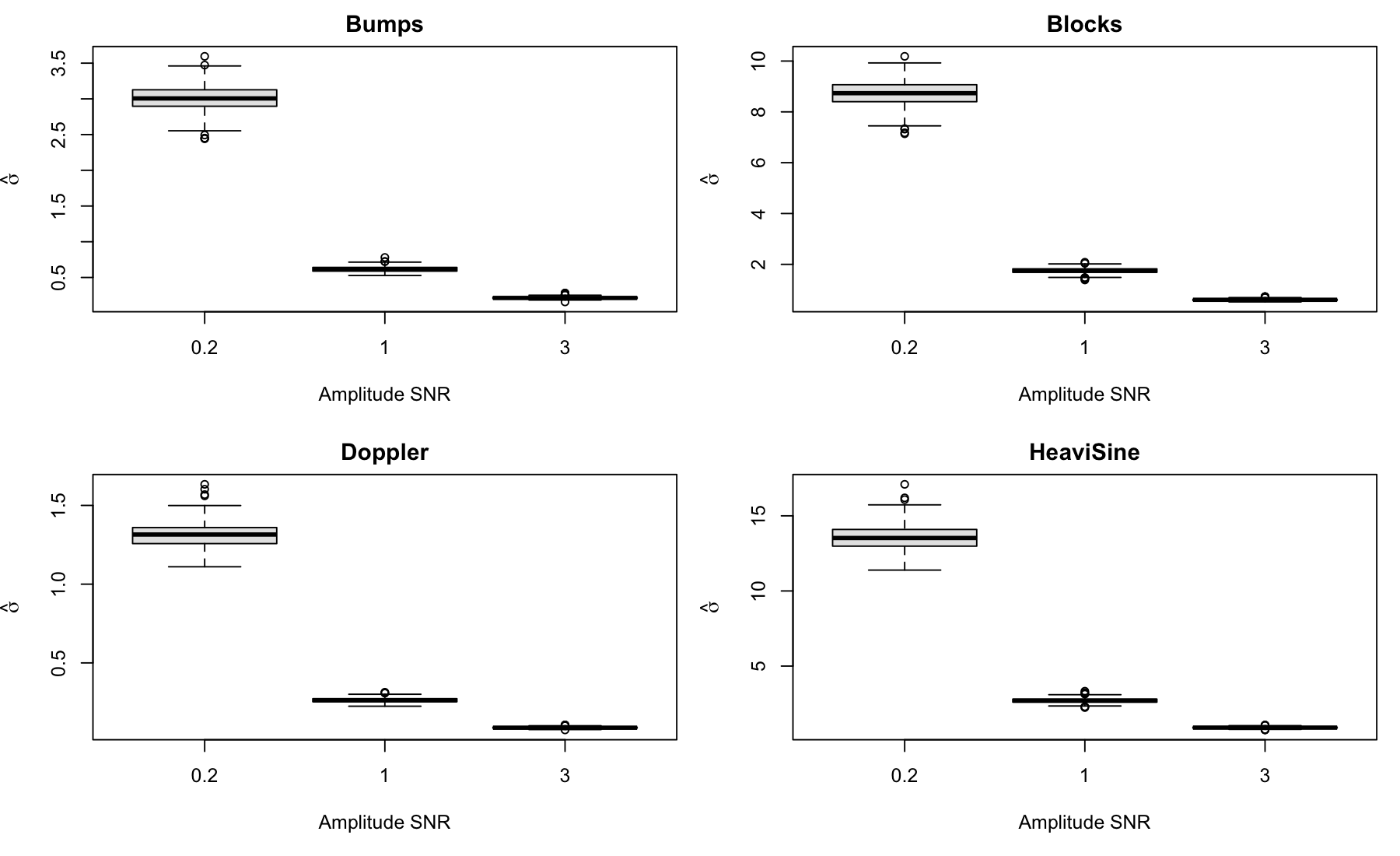}
    \caption{Replicate-level distributions of the estimated noise scale \(\widehat\sigma\) from WS--Gaussian fits under Laplace errors, by signal and amplitude SNR.}
    \label{fig:supp-laplace-sigma}
\end{figure}

The Laplace-error results are therefore interpreted as a robustness analysis.  They assess whether the framework remains operational when the observation errors are symmetric and heavier-tailed than Gaussian errors, without implying that either working likelihood dominates for every signal or SNR.

\clearpage

\section{Supplementary seismic analyses}
\label{sec:supp-seismic}

\subsection{Seismic trace data supplementary analysis}
\label{sec:supp-seismic-real}

\begin{table}[htbp]
    \centering
    \caption{Signal-level diagnostics for the Chino Hills channel-2
    accelerogram.  Residual SD and residual roughness are in
    \(\mathrm{cm}\,\mathrm{s}^{-2}\); retained energy is dimensionless; peak
    acceleration is in \(\mathrm{cm}\,\mathrm{s}^{-2}\); and peak time is in
    seconds.  The estimated noise scale is repeated because it was common to
    all fitted methods.\\[-5pt]}
    \label{tab:seismic-diagnostics-channel2}
    \scriptsize
    \resizebox{\linewidth}{!}{%
    \begin{tabular}{lrrrrrr}
        \toprule
        Method & \(\widehat\sigma\) & Residual SD &
        Residual roughness & Retained energy & Peak \(|a|\) & Peak time \\
        \midrule
        WS--Gaussian      & 0.002316 & 0.368729 & 0.064033 & 0.856550 & 15.459820 & 36.145000 \\
        WS--Laplace       & 0.002316 & 0.368825 & 0.064373 & 0.856030 & 15.452390 & 36.145000 \\
        Wendland-only     & 0.002316 & 0.390173 & 0.067782 & 0.859128 & 15.605580 & 36.145000 \\
        Semicircle-only   & 0.002316 & 0.368730 & 0.064028 & 0.856593 & 15.460290 & 36.145000 \\
        Univ              & 0.002316 & 0.024427 & 0.017773 & 0.993997 & 18.634640 & 36.640000 \\
        FDR               & 0.002316 & 0.003972 & 0.003962 & 0.999208 & 18.662150 & 36.640000 \\
        CV                & 0.002316 & 0.025056 & 0.018160 & 0.993816 & 18.633550 & 36.640000 \\
        SURE              & 0.002316 & 0.002983 & 0.003172 & 0.999430 & 18.663210 & 36.640000 \\
        \bottomrule
    \end{tabular}}
\end{table}

\begin{table}[htbp]
    \centering
    \caption{Signal-level diagnostics for the Chino Hills channel-3
    accelerogram.  Residual SD and residual roughness are in
    \(\mathrm{cm}\,\mathrm{s}^{-2}\); retained energy is dimensionless; peak
    acceleration is in \(\mathrm{cm}\,\mathrm{s}^{-2}\); and peak time is in
    seconds.  The estimated noise scale is repeated because it was common to
    all fitted methods.\\[-5pt]}
    \label{tab:seismic-diagnostics-channel3}
    \scriptsize
    \resizebox{\linewidth}{!}{%
    \begin{tabular}{lrrrrrr}
        \toprule
        Method & \(\widehat\sigma\) & Residual SD &
        Residual roughness & Retained energy & Peak \(|a|\) & Peak time \\
        \midrule
        WS--Gaussian      & 0.001096 & 0.458692 & 0.053394 & 0.880936 & 29.600850 & 36.500000 \\
        WS--Laplace       & 0.001096 & 0.458666 & 0.053465 & 0.880819 & 29.600410 & 36.500000 \\
        Wendland-only     & 0.001096 & 0.520664 & 0.062985 & 0.881834 & 29.438360 & 36.500000 \\
        Semicircle-only   & 0.001096 & 0.458695 & 0.053395 & 0.880986 & 29.600860 & 36.500000 \\
        Univ              & 0.001096 & 0.012352 & 0.008607 & 0.998606 & 32.972220 & 36.505000 \\
        FDR               & 0.001096 & 0.001921 & 0.001961 & 0.999833 & 32.970040 & 36.505000 \\
        CV                & 0.001096 & 0.012670 & 0.008796 & 0.998563 & 32.972330 & 36.505000 \\
        SURE              & 0.001096 & 0.001968 & 0.001998 & 0.999828 & 32.970000 & 36.505000 \\
        \bottomrule
    \end{tabular}}
\end{table}

\begin{figure}[htbp]
    \centering
    \includegraphics[width=\linewidth]{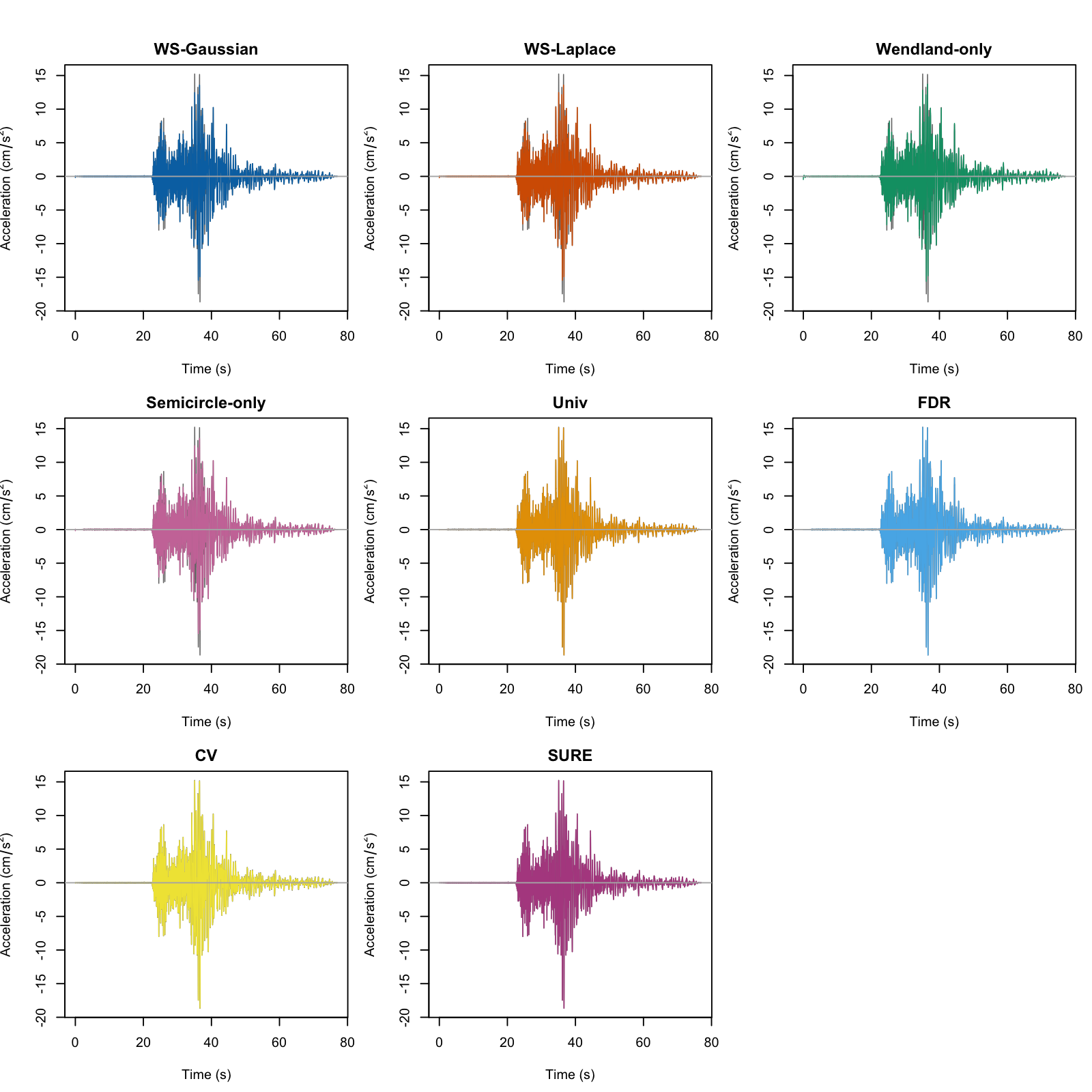}
    \caption{Full-record comparison for the 77-second Chino Hills channel-2
    accelerogram.  In each panel, the gray curve is the observed acceleration
    and the colored curve is the reconstruction from the indicated method.}
    \label{fig:seismic-full-record-channel2}
\end{figure}

\begin{figure}[htbp]
    \centering
    \includegraphics[width=\linewidth]{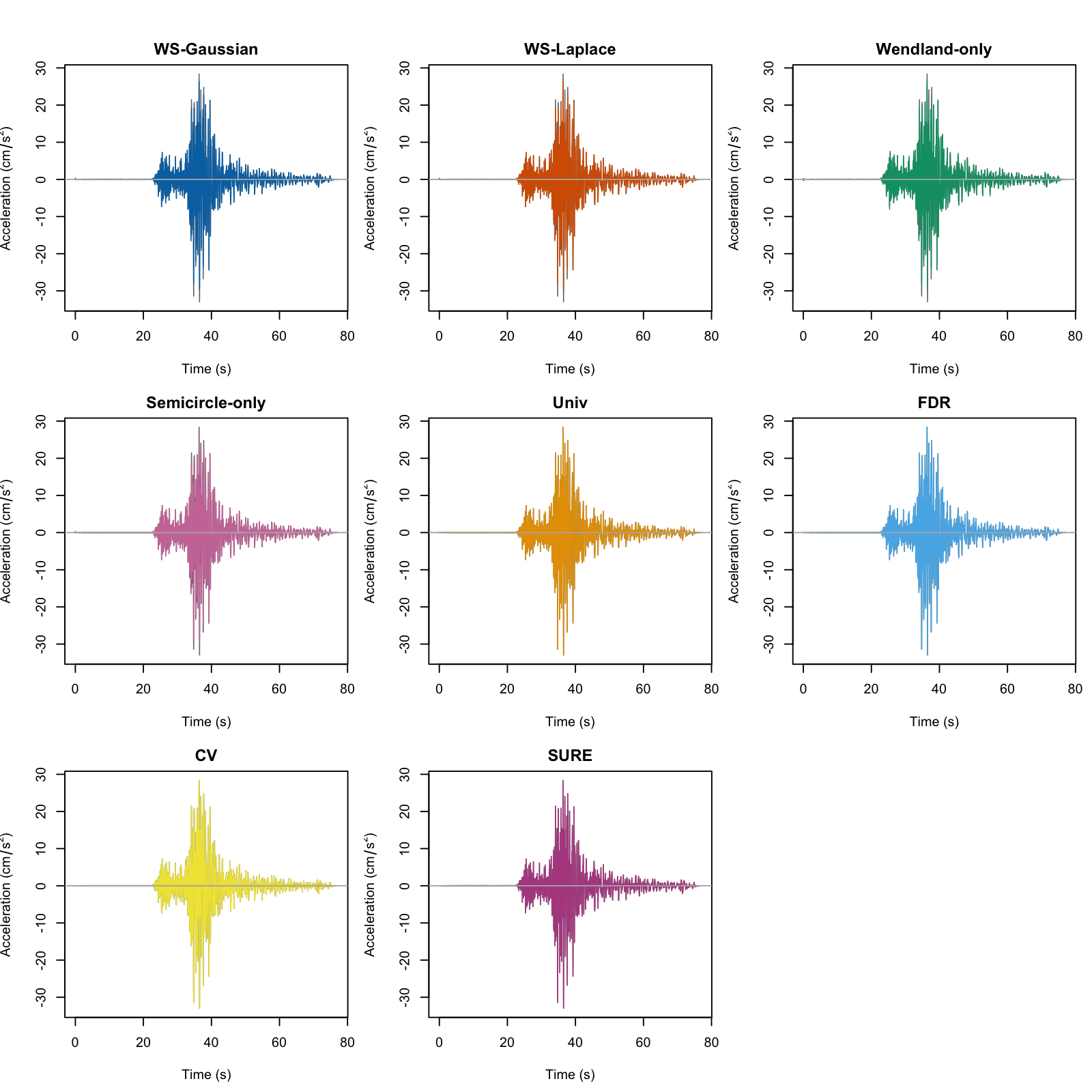}
    \caption{Full-record comparison for the 77-second Chino Hills channel-3
    accelerogram.  In each panel, the gray curve is the observed acceleration
    and the colored curve is the reconstruction from the indicated method.}
    \label{fig:seismic-full-record-channel3}
\end{figure}

\subsection{Semi-synthetic seismic validation}
\label{sec:supp-seismic-semisynthetic}

This section evaluates recovery under controlled noise using the processed
channel-1 accelerogram from the Chino Hills earthquake as a surrogate truth.
This experiment is not a real-data accuracy analysis: the processed trace is
treated as known only to permit objective comparison with the reconstructed
signals.  The results are therefore conditional on the processed trace being
an adequate representation of the underlying acceleration signal.

The trace contains \(N=15{,}400\) observations sampled at
\(\Delta t=0.005\) seconds.  For the wavelet transform, the record was
reverse-padded to the next dyadic length,
\(\widetilde N=16{,}384\), by appending the reverse of its terminal
observations.  All reported accuracy measures were computed after
reconstruction using only the original \(N\) observations; padded values were
not included in any metric.

Let \(a_i^\star\) denote the processed surrogate truth and let
\[
    y_i=a_i^\star+\varepsilon_i,\qquad
    \varepsilon_i\sim N(0,\sigma_\varepsilon^2),
    \qquad
    \sigma_\varepsilon=\frac{\operatorname{sd}(a_1^\star,\ldots,a_N^\star)}
    {\rho},
\]
where \(\rho\in\{0.5,1,2,5\}\) is the amplitude signal-to-noise ratio.
Thus, the noise standard deviation is scaled relative to the sample standard
deviation of the surrogate truth.  100 independent replications were
generated at each SNR.  Within each replication, every method was applied to
the same noisy realization, allowing paired comparisons. The fitted procedures used \(q=48\), matching the seismic application settings in Table~6 of the main manuscript; the \(q=128\) selection applied to the separate simulation-suite analyses.

The comparison was restricted to selected methods used for this application:
WS--Gaussian with adaptive, constant, Wendland-only, Semicircle-only, and
pooled three-level noise-scale specifications; WS--Laplace with adaptive and
pooled three-level specifications; and the Univ, FDR, CV, and SURE thresholding
rules.  The noisy observation itself was also included as an identity
baseline.  The baseline is not a fitted method and represents the error
obtained by making no denoising modification.

For an estimate \(\widehat a\), the reported metrics are
\[
    \operatorname{MSE}(\widehat a)
    =\frac{1}{N}\sum_{i=1}^{N}(\widehat a_i-a_i^\star)^2,
    \qquad
    \operatorname{NMSE}(\widehat a)
    =\frac{\operatorname{MSE}(\widehat a)}
    {\operatorname{var}(a_1^\star,\ldots,a_N^\star)}.
\]
The peak-magnitude error is
\[
    \left|
    \max_{1\leq i\leq N}|\widehat a_i|
    -
    \max_{1\leq i\leq N}|a_i^\star|
    \right|,
\]
and the peak-timing error is
\[
    \Delta t\left|
    \operatorname*{arg\,max}_{1\leq i\leq N}|\widehat a_i|
    -
    \operatorname*{arg\,max}_{1\leq i\leq N}|a_i^\star|
    \right|.
\]

Each saved replicate-level row is keyed by study, signal, \(n\), amplitude SNR \(\rho\), replication seed, data-generating error law, likelihood, wavelet basis, \(J_0\), and hyperparameter specification.  For fitted WS procedures, the saved diagnostics include \(\widehat{\eta}_0\), \(\widehat{\eta}_1\), the level-specific \(\widehat\omega_j\), \(\widehat\beta_j\), \(\widehat\pi_j\), \(\widehat\sigma\), optimizer convergence and bound-hit flags, quadrature order, elapsed time, warning/fallback messages, and finite-metric indicators.  Common random numbers were used within each scenario, and the same 100 replication seeds were supplied to all methods being compared.  No fit was retried or excluded because of a warning; the counts in Supplementary Table~\ref{tab:supp-qc} are based on the saved first-pass outputs.

Supplementary Table~\ref{tab:supp-seismic-semisynthetic} summarizes the
results.  For the fitted procedures, entries are replicate means with Monte
Carlo standard errors in parentheses.  The noisy-observation rows are shown
as descriptive baseline means.  The baseline NMSEs are approximately
\(1/\rho^2\), as expected from the amplitude-SNR construction: they are
approximately \(3.995\), \(0.9995\), \(0.2494\), and \(0.0399\) at
\(\rho=0.5,1,2,\) and \(5\), respectively.

At SNRs \(0.5\), \(1\), and \(2\), WS--Gaussian has lower NMSE than the noisy
observation, with mean NMSEs \(0.269\), \(0.166\), and \(0.115\), respectively.
At SNR \(5\), however, its mean NMSE is \(0.091\), exceeding the noisy
baseline value of approximately \(0.040\).  Thus, in this semi-synthetic
experiment, WS--Gaussian improves the noisy record in the lower- and
moderate-SNR settings but can introduce additional error when the noise is
already weak.  CV, FDR, and SURE obtain lower NMSE than the noisy baseline at every examined SNR.

The adaptive WS--Gaussian, constant, and Semicircle-only fits are nearly
identical across the four SNR values, whereas the Wendland-only fit is
consistently less accurate.  The pooled three-level noise-scale modification
has only a negligible effect.  The near equality of the adaptive and
Semicircle-only fits is consistent with the fitted adaptive weights being
very close to the Semicircle-only endpoint for this trace; this experiment
does not demonstrate substantial resolution-varying use of both slab
components.

The WS--Laplace fit has larger mean NMSE than WS--Gaussian at every SNR.
Its peak-timing behavior is particularly unfavorable at SNR \(2\): the mean
absolute timing error is \(0.412\) seconds, with median \(0.570\) seconds and
72 of the 100 replications having an error greater than \(0.1\) seconds.
This pattern is consistent with repeated selection of a different dominant peak rather than a single extreme replicate. Peak-timing error should therefore
be interpreted together with the waveform and peak-magnitude diagnostics.

\begin{table}[p]
    \centering
    \caption{Semi-synthetic seismic recovery using the processed channel-1 trace as a surrogate truth.  The table compares selected methods over 100 replications at each amplitude SNR.  For fitted procedures, entries are replicate means with Monte Carlo standard errors in parentheses.  The noisy-observation identity baseline is reported by its replicate mean only as a descriptive reference; it is not a fitted method.  MSE is in \((\mathrm{cm}\,\mathrm{s}^{-2})^2\), peak-magnitude error in \(\mathrm{cm}\,\mathrm{s}^{-2}\), and peak-timing error in seconds; the two peak errors are absolute errors relative to the surrogate truth.}
    \label{tab:supp-seismic-semisynthetic}
    \scriptsize
    \renewcommand{\arraystretch}{0.90}
    \resizebox{\linewidth}{!}{%
    \begin{tabular}{llrrrr}
        \toprule
        SNR & Method & MSE & NMSE & Peak magnitude error & Peak timing error \\
        \midrule
        0.5 & WS--G & 5.055 (0.023) & 0.269 (0.001) & 9.111 (0.235) & 0.172 (0.026) \\
        0.5 & WS--G constant & 5.055 (0.023) & 0.269 (0.001) & 9.111 (0.235) & 0.172 (0.026) \\
        0.5 & Wendland-only & 5.463 (0.028) & 0.291 (0.001) & 12.572 (0.210) & 0.137 (0.024) \\
        0.5 & Semicircle-only & 5.055 (0.023) & 0.269 (0.001) & 9.111 (0.235) & 0.172 (0.026) \\
        0.5 & WS--G pooled-3 MAD & 5.058 (0.024) & 0.269 (0.001) & 9.120 (0.235) & 0.172 (0.026) \\
        0.5 & WS--L & 11.960 (0.061) & 0.637 (0.003) & 22.641 (0.175) & 0.014 (0.001) \\
        0.5 & WS--L pooled-3 MAD & 11.970 (0.061) & 0.638 (0.003) & 22.664 (0.176) & 0.014 (0.001) \\
        0.5 & Univ & 7.959 (0.025) & 0.424 (0.001) & 13.131 (0.155) & 0.027 (0.001) \\
        0.5 & FDR & 6.833 (0.027) & 0.364 (0.001) & 11.178 (0.169) & 0.024 (0.001) \\
        0.5 & CV & 4.381 (0.019) & 0.233 (0.001) & 5.940 (0.218) & 0.036 (0.010) \\
        0.5 & SURE & 4.428 (0.020) & 0.236 (0.001) & 6.218 (0.223) & 0.037 (0.010) \\
        0.5 & Noisy observation & 75.019 & 3.995 & 14.942 & 0.284 \\
        1.0 & WS--G & 3.122 (0.011) & 0.166 (0.001) & 8.218 (0.094) & 0.215 (0.027) \\
        1.0 & WS--G constant & 3.122 (0.011) & 0.166 (0.001) & 8.218 (0.094) & 0.215 (0.027) \\
        1.0 & Wendland-only & 3.222 (0.011) & 0.172 (0.001) & 9.715 (0.079) & 0.406 (0.026) \\
        1.0 & Semicircle-only & 3.122 (0.011) & 0.166 (0.001) & 8.218 (0.094) & 0.215 (0.027) \\
        1.0 & WS--G pooled-3 MAD & 3.128 (0.011) & 0.167 (0.001) & 8.226 (0.095) & 0.226 (0.028) \\
        1.0 & WS--L & 5.258 (0.015) & 0.280 (0.001) & 8.410 (0.173) & 0.010 (0.006) \\
        1.0 & WS--L pooled-3 MAD & 5.270 (0.014) & 0.281 (0.001) & 8.389 (0.173) & 0.010 (0.006) \\
        1.0 & Univ & 4.007 (0.011) & 0.213 (0.001) & 7.073 (0.105) & 0.017 (0.000) \\
        1.0 & FDR & 3.089 (0.011) & 0.165 (0.001) & 5.642 (0.110) & 0.014 (0.000) \\
        1.0 & CV & 1.808 (0.006) & 0.096 (0.000) & 2.526 (0.139) & 0.020 (0.008) \\
        1.0 & SURE & 1.837 (0.007) & 0.098 (0.000) & 2.882 (0.141) & 0.021 (0.008) \\
        1.0 & Noisy observation & 18.767 & 1.000 & 6.095 & 0.152 \\
        2.0 & WS--G & 2.152 (0.005) & 0.115 (0.000) & 7.139 (0.066) & 0.005 (0.000) \\
        2.0 & WS--G constant & 2.152 (0.005) & 0.115 (0.000) & 7.138 (0.065) & 0.005 (0.000) \\
        2.0 & Wendland-only & 2.195 (0.006) & 0.117 (0.000) & 8.063 (0.058) & 0.010 (0.006) \\
        2.0 & Semicircle-only & 2.156 (0.005) & 0.115 (0.000) & 7.135 (0.066) & 0.005 (0.000) \\
        2.0 & WS--G pooled-3 MAD & 2.155 (0.005) & 0.115 (0.000) & 7.140 (0.065) & 0.005 (0.000) \\
        2.0 & WS--L & 3.403 (0.007) & 0.181 (0.000) & 8.593 (0.041) & 0.412 (0.026) \\
        2.0 & WS--L pooled-3 MAD & 3.411 (0.006) & 0.182 (0.000) & 8.594 (0.041) & 0.418 (0.025) \\
        2.0 & Univ & 1.879 (0.004) & 0.100 (0.000) & 3.598 (0.056) & 0.010 (0.000) \\
        2.0 & FDR & 1.279 (0.004) & 0.068 (0.000) & 2.397 (0.059) & 0.010 (0.000) \\
        2.0 & CV & 0.685 (0.002) & 0.036 (0.000) & 0.930 (0.057) & 0.007 (0.000) \\
        2.0 & SURE & 0.699 (0.003) & 0.037 (0.000) & 1.070 (0.063) & 0.007 (0.000) \\
        2.0 & Noisy observation & 4.683 & 0.249 & 2.527 & 0.034 \\
        5.0 & WS--G & 1.704 (0.002) & 0.091 (0.000) & 7.520 (0.028) & 0.008 (0.000) \\
        5.0 & WS--G constant & 1.704 (0.002) & 0.091 (0.000) & 7.507 (0.028) & 0.008 (0.000) \\
        5.0 & Wendland-only & 1.741 (0.002) & 0.093 (0.000) & 7.869 (0.026) & 0.007 (0.000) \\
        5.0 & Semicircle-only & 1.704 (0.002) & 0.091 (0.000) & 7.497 (0.028) & 0.008 (0.000) \\
        5.0 & WS--G pooled-3 MAD & 1.707 (0.002) & 0.091 (0.000) & 7.525 (0.028) & 0.008 (0.000) \\
        5.0 & WS--L & 2.048 (0.002) & 0.109 (0.000) & 7.067 (0.024) & 0.004 (0.000) \\
        5.0 & WS--L pooled-3 MAD & 2.058 (0.002) & 0.110 (0.000) & 7.070 (0.023) & 0.004 (0.000) \\
        5.0 & Univ & 0.601 (0.001) & 0.032 (0.000) & 1.116 (0.029) & 0.009 (0.000) \\
        5.0 & FDR & 0.349 (0.001) & 0.019 (0.000) & 0.794 (0.030) & 0.006 (0.000) \\
        5.0 & CV & 0.166 (0.000) & 0.009 (0.000) & 0.390 (0.028) & 0.004 (0.000) \\
        5.0 & SURE & 0.173 (0.001) & 0.009 (0.000) & 0.458 (0.029) & 0.005 (0.000) \\
        5.0 & Noisy observation & 0.750 & 0.040 & 0.755 & 0.004 \\
        \bottomrule
    \end{tabular}%
    }
\end{table}

Supplementary Figure~\ref{fig:supp-seismic-profiles} displays the mean NMSE
and mean absolute peak-timing error across the four SNR values.  The figure
provides a graphical summary of the fitted-method trends, whereas the noisy
observation baseline is reported numerically in
Supplementary Table~\ref{tab:supp-seismic-semisynthetic}.  These results are
a controlled surrogate-truth validation and should complement, rather than
replace, the real seismic trace analysis, for which no noise-free reference
waveform is available.

\begin{figure}[p]
    \centering
    \includegraphics[width=\linewidth]{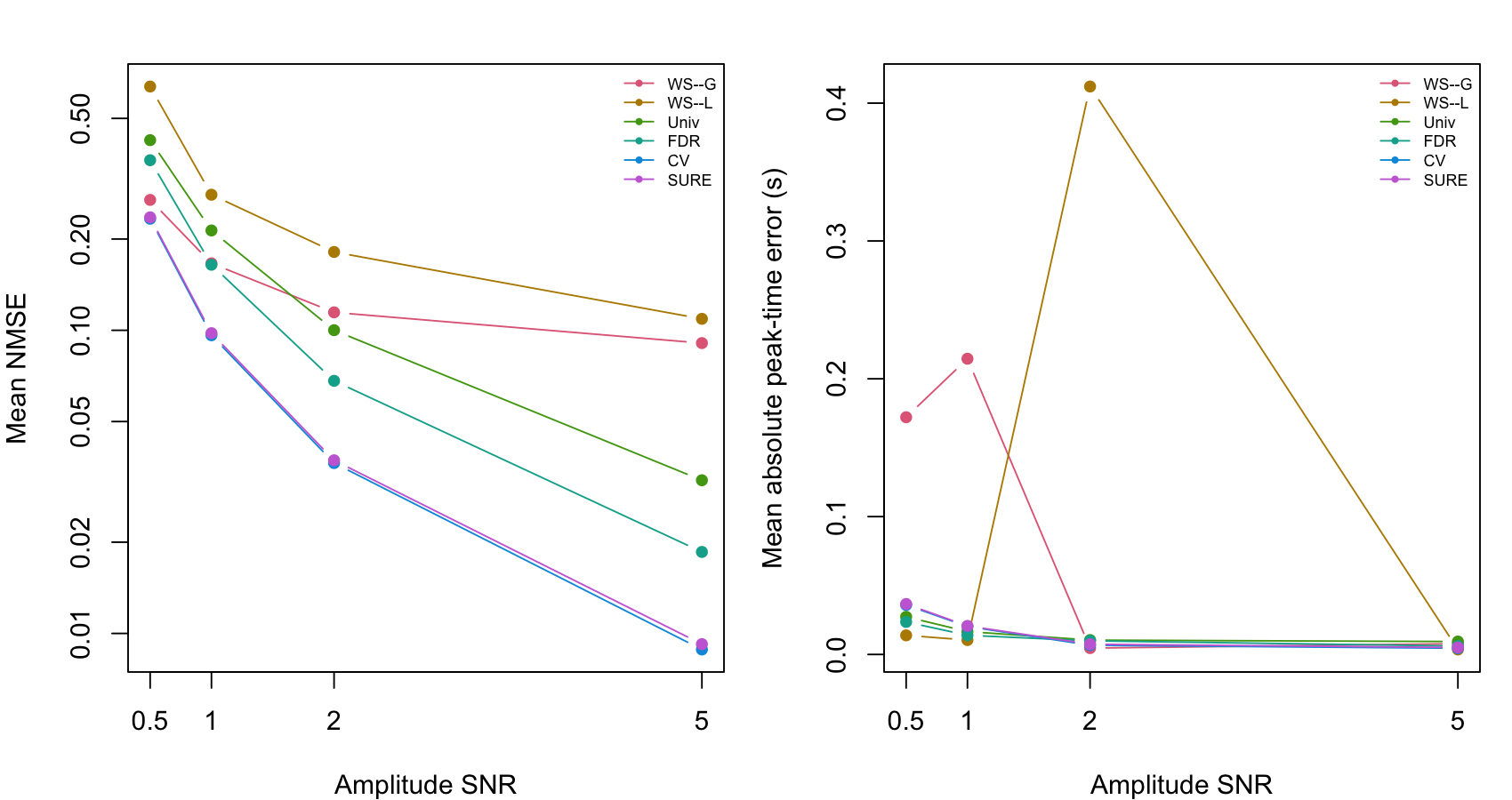}
    \caption{Semi-synthetic seismic validation using the processed channel-1
    accelerogram as a surrogate truth.  The left panel shows mean NMSE and the
    right panel shows mean absolute peak-timing error over 100 replications at
    amplitude SNR values \(0.5\), \(1\), \(2\), and \(5\).  Lower values indicate
    better recovery.}
    \label{fig:supp-seismic-profiles}
\end{figure}

\end{document}